\tikzstyle{vertex}=[circle, draw, inner sep=0pt, minimum size=6pt]
\newcommand{\multiline}[1]{%
  \begin{tabularx}{\dimexpr\linewidth-\ALG@thistlm}[t]{@{}X@{}}
    #1
  \end{tabularx}
}
\def\BState{\State\hskip-\ALG@thistlm}
\titlespacing{\section}{0pt}{3ex}{2ex}
\titlespacing{\subsection}{0pt}{2ex}{1ex}
\titlespacing{\subsubsection}{0pt}{0.5ex}{0ex}
\newtheorem{theorem}{Theorem}[section]
\newtheorem{proposition}{Proposition}
\newtheorem{lemma}{Lemma}[section]
\let\c@fconjecture\c@conjecture
\let\c@fconj\c@conj
\def \eps {\varepsilon}
\newcommand{\ignore}[1]{}
\def \poly { \text{\rm poly~} }
\def\tO{\tilde{O}}
\title{Tight Approximation Algorithms for Bichromatic Graph Diameter and Related Problems}
\author{Mina Dalirrooyfard \footnote{\texttt{minad@mit.edu}, Supported by an Akamai Presidential Fellowship and NSF Grant.}\\MIT%funding info for Mina?
\and Virginia Vassilevska Williams \footnote{\texttt{virgi@mit.edu}, Supported by an NSF CAREER Award, NSF
Grants CCF-1417238, CCF-1528078 and CCF-1514339, a BSF Grant
BSF:2012338 and a Sloan Research Fellowship.}\\MIT
\and Nikhil Vyas \footnote{\texttt{nikhilv@mit.edu}, Supported by an Akamai Presidential Fellowship and NSF Grant CCF-1552651.}\\MIT
\and Nicole Wein \footnote{\texttt{nwein@mit.edu}, Supported by an NSF Graduate Fellowship and NSF Grant CCF-1514339}\\ MIT}
\date{}
\begin{document}
\maketitle
\begin{abstract}
% Proximity problems such as Diameter (also called Farthest Pair) are fundamental both in graphs and in point sets (say in Euclidean space). A very well studied problem in geometry is the Bichromatic version: each point is colored one of two colors and the goal is to find (say) the Farthest Pair of points of different colors. The Bichromatic version of Diameter is just as interesting in graphs, though seemingly not as well studied, largely since there are strong conditional hardness results for the standard Diameter problem, and the Bichromatic version is only harder. 

Some of the most fundamental and well-studied graph parameters are the Diameter (the largest shortest paths distance) and Radius (the smallest distance for which a ``center'' node can reach all other nodes).
The natural and important $ST$-variant considers two subsets $S$ and $T$ of the vertex set and lets the $ST$-diameter be the maximum distance between a node in $S$ and a node in $T$, and the $ST$-radius be the minimum distance for a node of $S$ to reach all nodes of $T$. The {\em bichromatic} variant is the special case in which $S$ and $T$ partition the vertex set.

In this paper we present a comprehensive study of the \emph{approximability} of $ST$ and Bichromatic Diameter, Radius, and Eccentricities, and variants, in graphs with and without directions and weights. We give the first nontrivial approximation algorithms for most of these problems, including time/accuracy trade-off upper and lower bounds. We show that nearly \emph{all} of our obtained bounds are tight under the Strong Exponential Time Hypothesis (SETH), or the related Hitting Set Hypothesis. 

For instance, for Bichromatic Diameter in undirected weighted graphs with $m$ edges, we present an $\tilde{O}(m^{3/2})$ time \footnote{$\tilde{O}$ notation hides polylogarithmic factors.} $5/3$-approximation algorithm, and show that under SETH, neither the running time, nor the approximation factor can be significantly improved while keeping the other unchanged.\end{abstract}

\section{Introduction}
% ***bichromatic furthest pair well-studied in point sets
% ***diameter/furthest pair well-studied in graphs
% ***ST-version has been previously studied (stoc paper) but we show that surprisingly (?) much better approximation results are possible for the bichromatic version
% ***the ST-diameter algorithms didn't seem to work for directed graphs
% ***all of our results are tight 
% ***perhaps these new techniques developed for these problems offer a new way of looking these problems and could help in closing the gaps for regular diameter
% ***using previously known techniques fails for these problems and our algorithms and analyses require more delicate handling. 
% ***As an example of the previous point, algs for bichromatic diameter and directed bichromatic diameter. Describe why previous approaches don't work for these problems... The innovation for bichrom diam is exploitation of the fact that every vertex must be in either S or T to carefully choose an edge with one endpoint in S and the other in T etc. The innovation for directed bichrom diam is the observation that the set of vertices in the boundary that the true path uses is limited and using edge sampling of edges in the boundary etc.
A fundamental and very well studied problem in algorithms is the Diameter of a graph, where the output is the largest (shortest path) distance over all pairs of vertices. Over the years many different algorithms have been developed for the problem, both in theory (e.g. \cite{ACIM99,diam-prac1,RV13,ChechikLRSTW14,diamstoc18}) and in practice (e.g. \cite{diam-prac2,diam-prac4,diam-prac5}).

A very natural variant is the so called $ST$-Diameter problem \cite{diamstoc18}: given a graph and two subsets $S$ and $T$ of its vertex set, determine the largest distance between a vertex of $S$ and a vertex of $T$. In the Subset version of $ST$-Diameter, we have $S=T$.  Bichromatic Diameter is the version of $ST$-Diameter for which $S$ and $T$ partition the vertex set. Besides Diameter, the Radius (the smallest distance for which a ``center'' node can reach all other nodes) and Eccentricities (the largest distance out of every vertex) problems are also very well studied, and analogous $ST$, Subset, and Bichromatic versions are easy to define.

All of these parameters are simple to compute by computing all pairwise distances in the graph, i.e. by solving All-Pairs Shortest Paths (APSP). In sparse $n$-node graphs, where the number of edges $m$ is $\tilde{O}(n)$, APSP still needs $\Omega(n^2)$ time, as this is the size of the output, whereas it is apriori unclear whether this much time is needed for computing the Diameter, Radius and Eccentricities or their $ST$ and bichromatic variants, as the output is small.

%A slight variant is the {\em Bichromatic} Diameter problem: given a graph with a partitioning of its vertices into red and blue, determine the largest distance between a red vertex and a blue vertex. This variant is very well motivated (e.g. for finding the worst case distance between a supplier and a customer). 

A related extremely well-studied problem in computational geometry is Bichromatic Diameter on point sets (commonly known as Bichromatic Farthest Pair), where one seeks to determine the farthest pair of points in a given set of points in space (see e.g. \cite{yao-furthest,dumitrescu-furthest,ryan-furthest,agarwal-furthest,katoh}). Another related problem is the Subset version of spanners (e.g.~\cite{klein2006subset, cygan2013pairwise}), as well as the $ST$ version of spanners (e.g.~\cite{coppersmith2006sparse, kavitha2017new}). Furthermore, the $ST$, Subset, and Bichromatic versions of many problems have been of great interest; for instance Steiner Tree, Subset TSP, and a number of problems in computational geometry such as Bichromatic Matching (e.g.~\cite{indyk2007near}) and Bichromatic Line Segment Intersection (e.g.~\cite{chazelle1994algorithms}).

%The Bichromatic Diameter in {\em graphs}, however, is seemingly not as well studied in theory.

% A large reason for this is that with respect to exact computation, there are no non-trivial algorithms known for Diameter itself---the only known algorithm computes all the pairwise distances, i.e. solves All-Pairs Shortest Paths (APSP). Solving APSP, of course, also solves Bichromatic Diameter so in this sense there is no real difference between the two problems.\footnote{In point sets, it is not hard to show that Bichromatic Farthest Pair and Farthest Pair are equivalent. Thus, most of the literature is on Bichromatic Closest Pair instead.} Regarding \emph{approximability} however, we show a (conditional) separation between the two problems. %that the two problems are not equivalent In graphs however, we show (conditionally) that the Bichromatic Diameter problem is more difficult to \emph{approximate} than the regular Diameter problem, so in that sense the two problems are not equivalent. 

There are several known approximation algorithms for the standard version of Diameter, most of which have been developed in the last 6 years. Trivially, running Dijkstra's algorithm from an arbitrary vertex gives a simple $\tilde{O}(m)$ time $2$-approximation algorithm for directed and weighted graphs. Non-trivial algorithms achieve an improved approximation factor with an increased runtime: Building on Aingworth et al.~\cite{ACIM99}, Roditty and Vassilevska W.~\cite{RV13} showed for instance that an ``almost'' $1.5$ approximation for Diameter can be computed in $\tilde{O}(m\sqrt n)$ time in $m$-edge $n$-vertex directed weighted graphs---the approximation factor is $1.5$ if the Diameter is divisible by $3$, and there is a slight additive error otherwise. Chechik et al.~\cite{ChechikLRSTW14} gave a true $1.5$ approximation at the expense of increasing the runtime to $\tilde{O}(mn^{2/3})$, and Cairo, Grossi and Rizzi~\cite{cairo} generalized the approach giving an $\tilde{O}(mn^{1/(k+1)})$ time, ``almost'' $2-1/2^k$ approximation algorithm for all $k\geq 1$ which works only in undirected graphs. 

%Until recently, there were no non-trivial approximation algorithms for Bichromatic Diameter. 
In STOC'18, Backurs et al.~\cite{diamstoc18} gave the first non-trivial approximation algorithms for $ST$-Diameter: an $\tilde{O}(m^{3/2})$ time $2$-approximation and an $\tilde{O}(m)$ time $3$-approximation. They also showed that these algorithms cannot be improved significantly, unless the Strong Exponential Time Hypothesis (SETH) fails. Backurs et al. did not provide algorithms for $ST$-Eccentricities or $ST$-Radius, and they did not study the natural Subset and Bichromatic versions. They also only focused on undirected graphs.

We study the following natural and fundamental questions:
\begin{center}
{\em How well can $ST$-Eccentricities and $ST$-Radius be approximated? Are any interesting approximation algorithms possible for directed graphs for any of the $ST$-variants? Does the approximability of the problems change when one turns to the Subset versions in which $S=T$, or the Bichromatic versions in which $S$ and $T$ are required to partition the vertex set?}
\end{center}

\subsection{Our Results}
We present a comprehensive study of the approximability of the $ST$, Subset and Bichromatic variants of the Diameter, Radius and Eccentricities problems in graphs, both with and without directions and weights. We obtain the first non-trivial approximation algorithms for most of these problems, including time/accuracy trade-off upper and lower bounds. We show that nearly $\emph{all}$ of our approximation algorithms are tight under SETH (or under the related Hitting Set Hypothesis for Radius). Additionally, we study a parameterized version of these problems.

% Similarly, we present a comprehensive study of the approximability of the $ST$ variants of Radius and Eccentricities ($ST$-Diameter was resolved in~\cite{diamstoc18}). We present the first non-trivial approximation algorithms, which we also show are conditionally tight. 

% Furthermore, we provide a comprehensive study of the approximability of another natural special case of $ST$-Diameter: the case where $S=T$. We call this variant \emph{Subset} Diameter, and we also study Subset Radius and Eccentricities. Again, we present approximation algorithms and matching lower bounds.

%and obtain fast approximation algorithms whose bounds are tight under SETH and other fine-grained hypotheses. 

Our results are summarized in Tables~\ref{table:bichromundir}%, \ref{table:STundir}, \ref{table:bichromdir}, and 
-\ref{table:subset}. 

\begin{table}[h!]
\centering
    \begin{tabular}{|c||c|c|c||c|c|}
    \hline
      & \multicolumn{3}{c||}{Upper Bounds} &  \multicolumn{2}{c|}{Lower Bounds}  \\
      \hline
      Problem & Runtime  & Approx. &Comments& Runtime & Approx. \\ \hline \hline
      \multirow{4}{5em}{Diameter} & $O(m+n\log{n})$ & almost $2$ & unweighted, {\bf tight} & $m^{1+o(1)}$&$2-\delta$\\ \cline{2-6}
      &$\tilde{O}(m\sqrt{n})$ & almost $5/3$ & unweighted, nearly {\bf tight} &$m^{\frac{k}{k-1}-o(1)}$ & $2-\frac{1}{2k-1}-\delta$ \\ \cline{2-6}
      &$\tilde{O}(m^{3/2})$ & $5/3$ & weighted, {\bf tight}&" &"  \\ \cline{2-6}
      &$O(m|B|)$ & almost $3/2$ & unweighted, {\bf tight*}& $m^{2-o(1)}$ & $3/2-\delta$\\
      \hline
      \hline
       \multirow{4}{5em}{Radius} & $O(m+n\log{n})$ & almost $2$ & unweighted& &  \\ \cline{2-6}
      &$\tilde{O}(m\sqrt{n})$ & almost $5/3$ & unweighted, nearly {\bf tight*} &$m^{2-o(1)}$& $5/3 - \delta$  \\ \cline{2-6}
      &$\tilde{O}(m^{3/2})$ & $5/3$ & weighted, {\bf tight*}& "&"\\ \cline{2-6}
      &$O(m|B|)$ & almost $3/2$ &unweighted, {\bf tight*}& $m^{2-o(1)}$ & $3/2-\delta$\\\hline \hline
       \multirow{3}{5em}{Eccentricities} & $O(m+n\log{n})$ & $3$ & weighted, {\bf tight}&$m^{1+o(1)}$&$3-\delta$ \\ \cline{2-6}
      &$\tilde{O}(m\sqrt{n})$ & almost $2$ & unweighted, nearly {\bf tight}&$m^{\frac{k}{k-1}-o(1)}$ & $3-2/k-\delta$ \\ \cline{2-6}
      &$\tilde{O}(m^{3/2})$ & $2$ & weighted, {\bf tight}&" &"\\ \cline{2-6}
      &$O(m|B|)$ & almost $5/3$ &unweighted, {\bf tight*}& $m^{2-o(1)}$ & $5/3-\delta$\\
      \hline 
    \end{tabular}
    \caption[]{Bichromatic undirected results. All of our parameterized algorithms and near-linear time algorithms, except for directed Subset Radius and Eccentricities, are deterministic. The rest are randomized and work with high probability\footnotemark. Our lower bounds for Diameter and Eccentricities are under SETH and our lower bounds for Radius are under the Hitting Set (HS) Hypothesis, defined later. All of our lower bounds hold even for unweighted graphs. The trade-off lower bounds in terms of $k$ hold for any integer $k\geq 2$. $\delta$ is any constant $>0$. $B$ and $B'$ are parameters defined in our parameterized algorithms. The lower bound constructions for the parameterized algorithms have $|B|=\tilde{O}(1)$\\
    * Multiplicative approximation factor is tight, but not runtime.}
    %** For constant $k$, the lower bound does not hold for algorithms with additive error.}
    \label{table:bichromundir}
\end{table}
\footnotetext{\emph{with high probability} means with probability at least $1-1/n^c$ for all constants $c$.}
\begin{table}[h!]
\centering
    \begin{tabular}{|c||c|c|c||c|c|}
    \hline
      & \multicolumn{3}{c||}{Upper Bounds} &  \multicolumn{2}{c|}{Lower Bounds}  \\
      \hline
      Problem & Runtime  & Approx. &Comments& Runtime & Approx.  \\ \hline \hline
    Diameter & $\tilde{O}(m^{3/2})$ & $2$ & weighted, {\bf tight*} & $m^{2-o(1)}$ & $2-\delta$ \\ \cline{2-6}
     &$O(m|B'|)$ & almost $3/2$ & unweighted, {\bf tight*}& $m^{2-o(1)}$ & $3/2-\delta$\\
    \hline\hline
    Radius & N/A& N/A & weighted, {\bf tight} & $m^{2-o(1)}$ & any finite \\ \hline\hline
    Eccentricities & N/A & N/A & weighted, {\bf tight} & $m^{2-o(1)}$ & any finite \\ \hline
    \end{tabular}
    \caption{Bichromatic directed results. See caption of Table~\ref{table:bichromundir}. }
    \label{table:bichromdir}
\end{table}

\begin{table}[h!]
\centering
    \begin{tabular}{|c||c|c|c||c|c|c|}
    \hline
      & \multicolumn{3}{c||}{Upper Bounds} &  \multicolumn{2}{c|}{Lower Bounds}  \\
      \hline
      Problem & Runtime  & Approx. &Comments& Runtime & Approx.  \\ \hline \hline
      \multirow{3}{5em}{Diameter\cite{diamstoc18}} & $O(m+n\log{n})$ & $3$ & weighted, {\bf tight} &$m^{1+o(1)}$&$3-\delta$  \\ \cline{2-6}
      &$\tilde{O}(m\sqrt{n})$ & almost $2$ & unweighted, nearly {\bf tight} &$m^{\frac{k}{k-1}-o(1)}$ & $3-2/k-\delta$ 
       \\ \cline{2-6}
       &$\tilde{O}(m^{3/2})$ & $2$ & weighted, {\bf tight} & "&"\\ \hline \hline  
       \multirow{3}{5em}{Radius} &  $O(m+n\log{n})$ & $3$ & weighted&  &   \\ \cline{2-6}
      &$\tilde{O}(m\sqrt{n})$ & almost $2$ & unweighted, nearly {\bf tight*}& $m^{2-o(1)}$ & $2 - \delta$\\ \cline{2-6}
      &$\tilde{O}(m^{3/2})$ & $2$ & weighted, {\bf tight*}&" &"  \\ \hline \hline
       \multirow{3}{5em}{Eccentricities} & $O(m+n\log{n})$ & $3$ & weighted, {\bf tight}&$m^{1+o(1)}$&$3-\delta$~\cite{diamstoc18}   \\ \cline{2-6}
      &$\tilde{O}(m\sqrt{n})$ & almost $2$ & unweighted, nearly {\bf tight}&$m^{\frac{k}{k-1}-o(1)}$ & $3-2/k-\epsilon$~\cite{diamstoc18}  \\ \cline{2-6}
      &$\tilde{O}(m^{3/2})$ & $2$ & weighted, {\bf tight}& "&" \\ \hline 
      
    \end{tabular}
    \caption{ST undirected results. See caption of Table~\ref{table:bichromundir}.}
    \label{table:STundir}
\end{table}

\begin{table}[h!]
\centering
    \begin{tabular}{|c||c|c|c||c|c|}
    \hline
      & \multicolumn{3}{c||}{Upper Bounds} &  \multicolumn{2}{c|}{Lower Bounds}  \\
      \hline
      Problem & Runtime  & Approx. &Comments& Runtime & Approx.  \\ \hline \hline
    Diameter & $\tilde{O}(m)$ & $2$ & weighted, directed, {\bf tight} & $m^{2-o(1)}$ & $2-\delta$ \\ \cline{2-6}
    \hline\hline
    \multirow{2}{5em}{Radius} &$\tilde{O}(m)$ & $2$ & weighted, undirected, {\bf tight} & $m^{2-o(1)}$ & $2-\delta$ \\ \cline{2-6}
      &$\tilde{O}(m/\delta)$ & $2+\delta$ & weighted, directed, {\bf tight} up to an additive $\delta$ &"&" \\ \hline \hline
    Eccentricities &$\tilde{O}(m/\delta)$ & $2+\delta$ &weighted, directed, {\bf tight} up to an additive $\delta$ & $m^{2-o(1)}$ & $2-\delta$\\ \hline
    \end{tabular}
    \caption{Subset results. See caption of Table~\ref{table:bichromundir}. }
    \label{table:subset}
\end{table}

All our algorithms in $m$-edge, $n$-node graphs, run in  $\tilde{O}(m^{3/2})$ time or in $\tilde{O}(m\sqrt n)$ time when a small additive error is allowed. For sparse graphs the $m^{3/2}$ runtime beats the fastest APSP algorithms \cite{chan07,PettieR05,Pettie04} as they run in $\tilde{O}(mn)$ time. The $m\sqrt n$ time of the algorithms that allow small additive error beat the APSP algorithms for every graph sparsity.
\paragraph*{Bichromatic Diameter and Radius.}
Our first contribution is an algorithm with the same running time as the $2$-approximation $ST$-Diameter algorithm of \cite{diamstoc18},  achieving a better, $5/3$ approximation for Bichromatic Diameter. In other words, when $S$ and $T$ partition the vertex set of the graph, $ST$-Diameter can be approximated much better! Moreover, we show that under SETH, neither the runtime nor the approximation factor of our algorithm can be improved. The result is summarized in Theorem~\ref{thm:bichromfull} below, and proven in Theorems~\ref{thm:bichromedge} and \ref{thm:bidiamlb}.

\begin{theorem}\label{thm:bichromfull}
There is a randomized $\tilde{O}(m^{3/2})$ time algorithm, that given an undirected graph $G=(V,E)$ with nonnegative integer edge weights and $S\subseteq V, T=V\setminus S$, can output an estimate $D'$ such that $3D_{ST}/5\leq D'\leq D_{ST}$ with high probability, where $D_{ST}$ is the $ST$-Diameter of $G$. 

Moreover, if there is an $O(m^{3/2-\eps})$ time $5/3$-approximation algorithm for some $\eps>0$, or if there is an $O(m^{2-\eps})$ time $(5/3-\eps)$-approximation algorithm for the problem, then SETH is false.
\end{theorem}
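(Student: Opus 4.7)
The theorem has two parts: an $\tilde O(m^{3/2})$-time randomized $5/3$-approximation algorithm, and two SETH-based lower bounds showing that improving either the runtime or the approximation factor (while keeping the other) is hard. Throughout, write $D^* = d(s^*,t^*)$ for the bichromatic $ST$-diameter, witnessed by some $s^* \in S$, $t^* \in T$.

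For the upper bound, the plan is a sample-and-BFS scheme in the spirit of Aingworth et al.\ and Roditty--Vassilevska W., refined to exploit the partition hypothesis $V = S \cup T$. The algorithm samples a set $R$ of $\tilde O(\sqrt m)$ vertices so that, with high probability, $R$ intersects every vertex subset of size at least $\tilde\Omega(\sqrt m)$, and runs Dijkstra from every $r \in R$, for a total cost of $\tilde O(m^{3/2})$. In parallel, it performs a short ``greedy'' phase of Dijkstra calls: start from an arbitrary $t_0 \in T$, let $s_0$ be the vertex of $S$ farthest from $t_0$, and let $t_1$ be the vertex of $T$ farthest from $s_0$, producing estimates $D_1 = d(t_0, s_0)$ and $D_2 = d(s_0, t_1)$. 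A third estimate $D_3$ is the maximum of $d(r, v)$ over all $r \in R$ and all $v$ of the opposite color from $r$, read off from the Dijkstras. The algorithm outputs $\max(D_1, D_2, D_3)$, which is at most $D^*$ since each component is the length of a genuine $S$-to-$T$ path.

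To prove the matching lower bound $\max(D_1,D_2,D_3) \geq 3D^*/5$, the plan is a case split based on a parameter $\alpha$. Define $B_S = \{v \in S : d(v, s^*) \leq \alpha D^*\}$ and $B_T = \{v \in T : d(v, t^*) \leq \alpha D^*\}$. In the ``large ball'' case $|B_S| + |B_T| \geq \sqrt m$, with high probability $R$ contains some pivot $r^* \in B_S$ (respectively $B_T$), and then $d(r^*, t^*) \geq (1-\alpha)D^*$ (respectively $d(r^*, s^*) \geq (1-\alpha)D^*$) is recorded in $D_3$. In the ``small ball'' case, the argument uses the triangle inequality and, crucially, the $S$-farthest choice of $s_0$ from $t_0$ and the $T$-farthest choice of $t_1$ from $s_0$, to show $\max(D_1, D_2) \geq (1-\alpha) D^*$. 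Balancing the two cases yields $\alpha = 2/5$ and the factor $5/3$. The main obstacle is that bichromaticity only bounds opposite-colored distances, so the triangle chains in each case must be routed through intermediate vertices whose colors are under control; landing on exactly $3/5$, rather than a weaker constant, is what forces the particular three-part structure of the algorithm.

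For the SETH lower bounds, the plan is to reduce from Orthogonal Vectors. Given sets $A, B \subseteq \{0,1\}^d$ of size $N$, construct an undirected graph with bipartition $(S, T)$ that places $A$ in $S$ and $B$ in $T$, joined through coordinate gadgets, so that the bichromatic diameter takes a value $D_{\mathrm{yes}}$ when some $a \in A$, $b \in B$ are orthogonal and a smaller value $D_{\mathrm{no}}$ otherwise. For the $m^{2-\varepsilon}$ lower bound at factor $5/3 - \delta$, a standard sparse construction with $\tilde O(N)$ edges and gap $D_{\mathrm{yes}}/D_{\mathrm{no}} > 5/3 - \delta$ suffices, since an $O(m^{2-\varepsilon})$ algorithm would decide OV in $N^{2-\varepsilon'}$ time. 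For the $m^{3/2-\varepsilon}$ lower bound at the exact factor $5/3$, the plan is to blow up the construction (e.g., replicate coordinate gadgets or take a layered product) so that $m = \tilde\Theta(N^{4/3})$ while the OV gap is preserved, turning $N^{2-o(1)}$ hardness into $m^{3/2-o(1)}$. The hardest step here is choosing gap values that realize exactly the $5/3$ ratio (so the bound rules out even an exact $5/3$-approximation) and engineering the density blow-up so that it introduces no shortcut path that would shrink the gap.
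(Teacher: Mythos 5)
Your proposed upper-bound algorithm does not achieve the $5/3$ approximation factor; it is essentially the $ST$-Diameter $2$-approximation scheme, and without the new ingredient (the paper's ``Theme~(2)'') the guarantee stays at $2$, not $5/3$. Concretely, your case split is around the balls $B_S = B(s^*, 2D/5)\cap S$ and $B_T = B(t^*,2D/5)\cap T$, and in the ``small ball'' case you assert that the greedy chain $t_0 \to s_0 \to t_1$ alone gives $\max(D_1,D_2)\geq 3D/5$. This fails. Take a path $u_0,\dots,u_D$ with $s^*=u_0$, $t^*=u_D$, $u_i\in S$ for $i<D/2$ and $u_i\in T$ for $i\geq D/2$; attach a blob of $\Theta(m)$ vertices, all in $T$, at $u_{D/2}$; and attach a length-$2$ pendant path $u_{D-1}-p-s'$ with $p,s'\in S$. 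Then $D_{ST}=D$, both $B_S$ and $B_T$ have size $O(D)<\sqrt m$ (so sampling gives $D_3\approx D/2$ since $R$ lands in the blob), and with $t_0$ in the blob the farthest $S$-vertex is $s'$ (not $s^*$), giving $D_1\approx D/2$ and $D_2 = \max_{t\in T}d(s',t)\approx D/2$. So the output is $\approx D/2$, a factor-$2$ approximation, worse than $3D/5$ for $D>20$. The triangle-inequality chain you gesture at cannot close because $t_0,t^*$ (and $s_0,s^*$) are same-colored, so their mutual distance is not bounded by $D_{ST}$.

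What your proposal is missing is exactly the paper's crossing-edge mechanism. After sampling fails, the paper finds a vertex $w\in S$ that is far (at scaled radii $2D/5$ and $D/5$) from both sampled $S$-incident and sampled $T$-incident edges; by Lemma~\ref{lemma:edgefarpoint} the set of edges near $w$ at those radii has size $O(\sqrt m)$, and Dijkstra is run from all their endpoints. Because $w\in S$ and $t^*\in T$, the shortest $w\to t^*$ path contains an $S\times T$ edge $(s',t')$ with $d(w,s')\leq 2D/5$; that edge is caught in the explored set, $d(t',t^*)<2D/5$, and hence $d(t',s^*)\geq 3D/5$. This step --- catching the boundary-crossing edge on the $w\to t^*$ shortest path --- is precisely what raises $2$ to $5/3$, and it has no counterpart in your algorithm. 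Your greedy chain is a weaker substitute that does not exploit the partition.

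Your lower-bound plan is directionally reasonable (reduce from OV; to get the $m^{3/2-\varepsilon}$ hardness, densify the instance), but the ``engineer a density blow-up so no shortcuts appear'' step is the whole game, and the paper handles it by reusing the layered $k$-OV graph of Backurs et al.\ (Theorem~\ref{thm:kOV}): append $k-1$ extra matching layers $L_{k+1},\dots,L_{2k-1}$ to $L_k$, set $S=L_0$ and $T$ equal to everything else, and read off bichromatic diameter $2k-1$ (no solution) versus $4k-3$ (solution). Setting $k=2$ gives $m^{2-o(1)}$ hardness at ratio $5/3$, and $k=3$ gives $m^{3/2-o(1)}$ hardness at ratio $9/5>5/3$; both parts of the lower bound follow. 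You would need to produce an analogous gapped construction --- the generic ``replicate coordinate gadgets'' idea does not by itself control the diameter values.
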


We also obtain an $\tilde{O}(m\sqrt n)$ time algorithm that achieves an ``almost'' $5/3$-approximation: the guarantee for unweighted graphs is $3D_{ST}/5-6/5\leq D'\leq D_{ST}$.
We also obtain a near-linear time algorithm for weighted graphs that returns an estimate $D'$ with $D_{ST}/2-W/2\leq D'\leq D_{ST}$ where $W$ is the minimum weight of a $S\times T$ edge. Using our general theorem \ref{thm:bidiamlb}, we get that this result is also essentially tight, as a $(2-\eps)$-approximation for $\eps>0$ running in near-linear time would refute SETH.

To obtain our improvements for Bichromatic Diameter over the known $ST$-Diameter algorithms, we crucially exploit the basic fact that as $S, T$ partition $V$ any path that starts from a vertex $s \in S$ and ends in a vertex $t \in T$ must cross a $(u, v)$ edge such that $u \in S, v \in T$. While this fact is clear, it not at all obvious how one might try to exploit it.

We explain our technique in more detail for the bichromatic diameter problem, and similar ideas are used for our algorithms for the other problems. Let $s^*\in S$ and $t^*\in T$ be end-points of an $ST$-Diameter path. Similarly to prior Diameter algorithms, our goal is to run Dijkstra's algorithm from some $s\in S$ which is close to $s^*$, and hence far from $t^*$, or from some $t\in T$ which is close to $t^*$ and hence far from $s^*$ (by the triangle inequality).
Our $5/3$-approximation algorithms are a delicate combination of two themes: (1) randomly sample nodes in $S$ and nodes in $T$ -- similarly to prior works, the sampling works well if there are many nodes of $S$ that are close to $s^*$, or if there are many nodes of $T$ that are close to $t^*$. If (1) is not good enough, in theme (2) we show that we can find a node $w\in S$ close to $t^*$ for which we can ``catch'' an $S\times T$ edge $(s,t)$ on the shortest $w\rightarrow t^*$ path, such that $t$ is close to $t^*$. Theme (2)  is our new contribution. Because of theme (2), our algorithms are more complicated than the $ST$-Diameter algorithms, but run in asymptotically the same time, and achieve a better approximation guarantee. 
In order to better separate the ideas in our algorithms, we explain them in several steps, where Theme (1) can be seen in the first steps and Theme (2) appears towards the last steps.
%Our algorithms consist of several steps, where in order to better see the distinction of Theme (1) and Theme (2) in our algorithms, 

Following a similar approach to our Bichromatic Diameter algorithms, we develop similar algorithms for Bichromatic Radius. First, we give a simple near-linear time almost $2$-approximation algorithm, and then we adapt the $5/3$-approximation for Bichromatic Diameter to also give a $5/3$-approximation for Bichromatic Radius. Moreover, we show that any better approximation factor requires essentially quadratic time, under the Hitting Set (HS) Hypothesis of \cite{AbboudWW16} (see also \cite{GaoIKW17}).
\begin{theorem}
\label{ref:rad}
There is a randomized $\tilde{O}(m^{3/2})$ time algorithm, that given an undirected graph $G=(V,E)$ with nonnegative integer edge weights and $S\subseteq V, T=V\setminus S$, can output an estimate $R'$ such that $R_{ST}\leq R'\leq 5R_{ST}/3$ with high probability, where $R_{ST}$ is the $ST$-Radius of $G$. Moreover, if there is a $5/3-\eps$ approximation algorithm running in $O(m^{2-\delta})$ time for any $\eps,\delta>0$, then the HS Hypothesis is false.
\end{theorem}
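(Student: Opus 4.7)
The plan is to adapt the Bichromatic Diameter $5/3$-approximation of Theorem~\ref{thm:bichromfull} to the min-max Radius setting. Fix an optimal center $c^{\star}\in S$ with $\max_{t\in T} d(c^{\star},t) = R_{ST}$, and let $t^{\star}\in T$ achieve this maximum. The aim is to identify some $c\in S$ of eccentricity at most $(5/3)\,R_{ST}$ and return its \emph{exact} eccentricity (via one extra Dijkstra call) as the estimate $R'$; this guarantees $R_{ST}\leq R' \leq (5/3)\,R_{ST}$. I will re-use the two themes of the Diameter algorithm. Theme~(1) is random sampling of $\tilde{O}(\sqrt{n})$ vertices from $S$ and from $T$ with Dijkstra from each: if the ball of radius $(2/3)\,R_{ST}$ around $c^{\star}$ (respectively around $t^{\star}$) contains many nodes of $S$ (respectively of $T$), sampling finds one with high probability and the triangle inequality delivers the bound. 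Theme~(2) handles the residual case, in which both balls are of size at most $\tilde{O}(\sqrt{n})$. Exploiting the Bichromatic structure, a shortest $c^{\star}\to t^{\star}$ path must cross some $S\times T$ edge $(s,t)$; scanning all $\tilde{O}(m)$ such edges while using precomputed Dijkstra trees from an $\tilde{O}(\sqrt{n})$-sized hub set covering all small balls allows us to locate a node of $S$ close enough to $c^{\star}$ to serve as the candidate center. Summing over the themes gives total time $\tilde{O}(m^{3/2})$.

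\textbf{Main obstacle.} The hardest step will be the analysis of Theme~(2) in the min-max regime, because the chosen candidate must be close to $c^{\star}$ \emph{uniformly} over the witness $t\in T$, not merely for a single maximizer as in Diameter. I expect to handle this by case-splitting on the position of the bichromatic bridge edge $(s,t)$ along the optimal $c^{\star}\to t^{\star}$ path: if the bridge lies within $(1/3)\,R_{ST}$ of $c^{\star}$, then $s$ itself is an adequate center; otherwise the $T$-endpoint $t$ must be within $(1/3)\,R_{ST}$ of $t^{\star}$, and we can trace back through the hub covering $t$'s small ball to an $S$-candidate near $c^{\star}$. Careful threshold selection between the small-ball and large-ball regimes, combined with a union bound over $O(\log n)$ distance scales, should deliver the $5/3$ ratio within $\tilde{O}(m^{3/2})$ time.

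\textbf{Lower bound.} For the hardness direction I reduce from the Hitting Set problem of~\cite{AbboudWW16}: given $A,B\subseteq\{0,1\}^d$ with $|A|=|B|=n$, decide whether some $a\in A$ has $a\cdot b \geq 1$ for every $b\in B$. I construct an undirected graph with $S=A$ and $T=V\setminus S$, where $V$ additionally contains coordinate vertices $[d]$ and a small gadget $X$ used to calibrate distances. Edges $(a,i)$ are placed whenever $a_i=1$, and $(b,i)$ whenever $b_i=1$; the edges within $X$, and between $X$ and $[d]\cup B$, are chosen so that in the YES case a hitting element $a^{\star}$ satisfies $d(a^{\star},b)=2$ for every $b\in B$ and $d(a^{\star},v)\leq 3$ for every other $v\in T$, giving $R_{ST}\leq 3$, whereas in the NO case every $a\in A$ misses some $b\in B$, forcing the shortest $a\to b$ path to route through $X$ with length at least $5$, so $R_{ST}\geq 5$. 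An $(5/3-\eps)$-approximation running in $O(m^{2-\delta})$ time would thus distinguish the two cases in $O(n^{2-\delta}\poly(d))$ time, contradicting the HS Hypothesis. The delicate part of the construction is designing $X$ so that the YES and NO distances are exactly $3$ and $5$ while keeping the edge count $\tilde{O}(nd)$, so that subquadratic in $m$ translates to subquadratic in $n$.
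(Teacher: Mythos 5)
Your lower bound sketch is essentially the paper's construction: start from the OV-graph on the HS instance, duplicate the $V$-side into two copies linked by a matching so that the non-hitting distance becomes $5$ while the hitting distance stays $\leq 3$, set $S=U$, $T = C\cup V_1\cup V_2$, and use the paper's stated normalization that a hitter reaches all of $C$ within $3$. That part is on the right track. The gap is in the algorithm.

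Your Theme~(2) is built around the optimal $c^\star\to t^\star$ path, but the algorithm cannot identify $c^\star$ or $t^\star$, so there is no mechanism to ensure the bridge edge on that particular path is one the algorithm actually inspects. The paper instead first identifies, from the samples, a \emph{concrete} witness node $w\in T$ that is far from the sample hubs (this node is computable: it maximizes distance to the sampled hub set). The pivotal observation you are missing is that because $w\in T$ and $s^\star$ is the true center, $d(w,s^\star)\leq R$; so the bridge edge on the $w\to s^\star$ path lies within distance $2R/3$ of $w$, and by the far-point lemma the number of $S\times T$ edges at that distance from $w$ is at most $\sqrt m$. Hence the algorithm can afford to run Dijkstra from both endpoints of the $\sqrt m$ closest such edges to $w$ and is guaranteed to hit the bridge. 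This replaces your ``scan all $S\times T$ edges using a hub set covering small balls'' plan, which is not concrete enough to give a certified candidate center. Separately, your dichotomy claim ``if the bridge is more than $R/3$ from $c^\star$ then its $T$-endpoint is within $R/3$ of $t^\star$'' is false: it only gives distance at most $R - R/3 = 2R/3$ from $t^\star$, which is not enough to close a $5/3$ bound that way. The paper avoids this by working with positions at distance $\approx R/3$ and $\approx 2R/3$ from $w$ (not from $c^\star$ or $t^\star$), and handling the four resulting $S$/$T$ membership cases for the two consecutive vertices at each of these marks.

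Concretely, the paper's algorithm (Theorem~\ref{thm:birad}) samples $\tilde O(\sqrt m)$ edges, forms $S_1,T_1$ from their endpoints and the associated closest-$T$/closest-$S$ sets $T_2,S_2$, takes $s_0 = \arg\min_{s\in S}\max_{t\in T_1\cup T_2} d(s,t)$ and $w=\arg\max_{t\in T} d(t, T_1\cup T_2)$, and if $s_0$ is not already a good center, inspects the $\sqrt m$ closest $T$-side edges and $\sqrt m$ closest $S$-side edges to $w$, running Dijkstra from their endpoints and from the closest-$S$ node for each discovered $T$-endpoint. The correctness analysis partitions the $w\to s^\star$ shortest path at the $2R/3$ and $R/3$ marks and argues by cases; this yields the exact $5/3$ bound with no additive slack. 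Your plan, even after patching the threshold error, still lacks the identification of $w$ and the resulting $\sqrt m$-edge budget, which is what makes the $\tilde O(m^{3/2})$ runtime work.
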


Similarly to the Bichromatic Diameter algorithm, if one is satisfied with a slight additive error, one can improve the runtime to $\tilde{O}(m\sqrt n)$.

\paragraph*{$ST$-Eccentricities and $ST$-Radius.}
Prior work only considered $ST$-Diameter but did not consider the more general $ST$-Eccentricities problem in which one wants to approximate for every $s\in S$, $\eps_{ST}(s):=\max_{t\in T} d(s,t)$.

Here we show that one can achieve exactly the same approximation factors for $ST$-Eccentricities as for $ST$-Diameter. Since any conditional lower bound for $ST$-Diameter also applies for the $ST$-Eccentricities problem, the algorithms we obtain are conditionally optimal, similarly to the $ST$-Diameter algorithms in \cite{diamstoc18}. Interestingly, we show that the same conditional lower bounds apply for Bichromatic Eccentricities (Proposition~\ref{prop:whoknows}), and therefore our $ST$-Eccentricities algorithms are optimal even for the Bichromatic case.

\begin{theorem}
There is a randomized $\tilde{O}(m^{3/2})$ time algorithm, that given an undirected graph $G=(V,E)$ with nonnegative integer edge weights and $S, T\subseteq V$, can output for every $s\in S$, an estimate $\eps'(s)$ such that $\eps_{ST}(s)/2\leq \eps'(s)\leq \eps_{ST}(s)$ with high probability.
Moreover, if there is a $2-\eps$ approximation algorithm running in $O(m^{2-\delta})$ time for any $\eps,\delta>0$ or a $2$-approximation algorithm running in $O(m^{3/2-\eps})$ time for $\eps>0$, even for the Bichromatic case when $T=V\setminus S$, then SETH is false.
\end{theorem}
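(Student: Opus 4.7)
The plan is to extend the Backurs et al.~\cite{diamstoc18} $\tilde{O}(m^{3/2})$ $2$-approximation for $ST$-Diameter to produce per-source estimates for $ST$-Eccentricities, and to derive matching conditional hardness via OV-style reductions. For the algorithm side, observe that the upper bound $\eps'(s)\le \eps_{ST}(s)$ is automatic as long as every value we output is a genuine distance $d(s,t)$ for some $t\in T$; the whole task therefore reduces to producing, for every $s\in S$, at least one witness $t^\dagger\in T$ with $d(s,t^\dagger)\ge \eps_{ST}(s)/2$, all within total time $\tilde{O}(m^{3/2})$.

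My first step is to sample $T'\subseteq T$ uniformly at random of size $\tilde{\Theta}(\sqrt{m})$ and run Dijkstra from every $t'\in T'$, at total cost $\tilde{O}(m^{3/2})$. Setting $\hat{\eps}(s)=\max_{t'\in T'}d(s,t')$, fix $s\in S$ and let $t^*_s := \arg\max_{t\in T}d(s,t)$. By the triangle inequality, any $T$-vertex within distance $\eps_{ST}(s)/2$ of $t^*_s$, or equivalently any $T$-vertex at distance at least $\eps_{ST}(s)/2$ from $s$, is a valid witness. A standard hitting-set analysis then guarantees $\hat{\eps}(s)\ge \eps_{ST}(s)/2$ with high probability whenever the ``far set'' $F_s := \{t\in T : d(s,t)\ge \eps_{ST}(s)/2\}$ has size at least $\tilde{\Theta}(|T|/\sqrt{m})$. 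The only remaining ``bad'' case is when $F_s$ is very small, i.e., $t^*_s$ is an isolated far vertex for $s$.

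The main obstacle is handling the bad case within the budget. My plan is to exploit the Dijkstras already performed from $T'$: for each $t\in T$ we can estimate, up to a constant factor from the sample, the radius $r_t$ at which the $T$-ball around $t$ first contains $\tilde{\Theta}(|T|/\sqrt{m})$ $T$-vertices. For a bad $s$ with $t^*_s=t$ we must have $r_t \ge \eps_{ST}(s)/2$, so a $T$-vertex at distance roughly $r_t$ from $t$ becomes a valid witness for $s$. Running a second round of Dijkstras from the ``isolated'' $T$-vertices (those with large $r_t$), whose count I will bound by $\tilde{O}(\sqrt{m})$ via a packing argument, keeps the total runtime at $\tilde{O}(m^{3/2})$ and produces witnesses for all bad $s$. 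This mirrors the theme~(2) trick described in the text for the Bichromatic Diameter algorithm of Theorem~\ref{thm:bichromfull}.

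For the conditional hardness, the plan is to adapt the SETH-based reductions of~\cite{diamstoc18} and of Theorem~\ref{thm:bidiamlb} so that the reduction graph has $S$ and $T$ partitioning $V$, giving hardness even in the Bichromatic case as referenced by Proposition~\ref{prop:whoknows}. A standard Orthogonal Vectors reduction yields the $m^{2-o(1)}$ lower bound against any $(2-\eps)$-approximation. For the sharper $m^{3/2-\eps}$ lower bound against exact $2$-approximation, I plan to use the $k$-OV reduction behind the Bichromatic Diameter trade-off with $k=3$, which produces instances on $\Theta(n^{3/2})$ edges where a true $2$-approximation separates the yes/no OV cases. The delicate point is to route the construction through a vertex partition while controlling the eccentricity gap on both sides, so that the $2$-approximation threshold lines up exactly with the OV decision.
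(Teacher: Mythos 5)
Your first-round sampling plus hitting-set argument is sound in spirit (the paper samples $\tilde\Theta(\sqrt m)$ \emph{edges} and then maps each sampled endpoint to its nearest $T$-vertex, which is what makes the weighted, no-additive-error version work, but that is a detail). The serious gap is in the bad case. You propose to identify every ``isolated'' $T$-vertex $t$ (those with $r_t$ large, i.e.\ a sparse $T$-ball) and run Dijkstra from each, asserting via a packing argument that there are at most $\tilde O(\sqrt m)$ of them. No such packing bound exists. Sparse $T$-balls do not give disjointness, and it is easy to build graphs where $\Theta(|T|)$ vertices all have $T$-balls of radius $1$ containing fewer than $g=\tilde\Theta(|T|/\sqrt m)$ vertices of $T$ (e.g.\ take $T=V$ and a graph that is a chain of $n^{3/4}$ cliques of size $n^{1/4}$, so $m\approx n^{5/4}$, $g\approx n^{3/8}$, every $T$-ball of radius~$1$ has $n^{1/4}<g$ vertices, and you would need $n\gg\sqrt m$ extra Dijkstras). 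A further problem is that ``isolated'' is not a fixed notion: the threshold $\eps_{ST}(s)/2$ depends on $s$, so there is no single cutoff $D$ to use, and moreover from the sample alone you only ever get $d(t,T')\le r_t$ (the sample could land near $t$ even when $r_t$ is huge), so you cannot reliably recognize the isolated set.

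The paper's proof avoids all of this by never trying to enumerate isolated vertices. It fixes a \emph{single} extremal vertex $w=\arg\max_{t\in T} d(t,T_{E'})$. For every bad $s$ the farthest vertex $v'$ from $s$ satisfies $d(v',T_{E'})>\eps_{ST}(s)/2$, hence $d(w,T_{E'})>\eps_{ST}(s)/2$, and by the sampling lemma the ball around $w$ of this (unknown) radius contains at most $\sqrt m$ sampled-type edges simultaneously for \emph{all} bad $s$. So one Dijkstra from $w$ plus $\tilde O(\sqrt m)$ Dijkstras from the vertices incident to the $\sqrt m$ closest edges to $w$ suffice. For a bad $s$ with $d(s,w)<\eps_{ST}(s)/2$ the witness is then produced by looking at the nodes $a,a'$ and $b,b'$ on the shortest $w\to s$ path at distances roughly $\eps_{ST}(s)/4$ and $3\eps_{ST}(s)/8$ from $w$, doing a case analysis on whether they lie in $T$ or $V\setminus T$, catching a $T/(V\setminus T)$ crossing edge near $w$, and crucially using the quantity $e(y)-d(s,y)$ (not $d(s,y)$) to extract a valid lower bound on $\eps_{ST}(s)$ from a vertex $y$ that is \emph{close} to both $w$ and $s$. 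None of this per-$s$ path argument or the $e(y)-d(s,y)$ estimator appears in your sketch; ``theme~(2)'' as you invoke it is the right slogan, but the mechanism you describe (many isolated centers) is not the one that makes the budget work.

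On the lower bounds your plan is essentially correct but vague on one point: for the Bichromatic $m^{3/2-o(1)}$ hardness with $k=3$ you do not need to reuse the extra-layer construction from the Bichromatic \emph{Diameter} lower bound. The paper's Proposition~\ref{prop:whoknows} simply takes the $k$-OV graph of Theorem~\ref{thm:kOV}, sets $T=L_k$ and $S=L_0\cup\dots\cup L_{k-1}$ (a genuine partition), and reads off eccentricities of $L_0$-vertices directly; no additional routing is required.
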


Again, as before, one can improve the runtime to $\tilde{O}(m\sqrt n)$ with a slight additive error, and there is a simple near-linear time $3$-approximation algorithm which is tight under SETH, similar to the one in \cite{diamstoc18} for $ST$-Diameter.
A simple argument shows that these algorithms imply algorithms with the same running time and approximation factor for $ST$-Radius.

\paragraph*{Bichromatic and $ST$ Problems in Directed Graphs.}
Using simple reductions we first show that there can be no $O(m^{2-\eps})$ time (for $\eps>0$) algorithms that achieve any finite approximation for $ST$-Diameter or $ST$-Eccentricities (under SETH), or $ST$-Radius (under HS).
Interestingly, the same holds for Bichromatic Eccentricities (under SETH, Proposition~\ref{prop:bidirecclb}) and Bichromatic Radius (under HS, Proposition~\ref{prop:bidirradlb}), but not Bichromatic Diameter!
Surprisingly, unlike those two problems, Bichromatic Diameter does admit a finite, in fact $2$-approximation algorithm running in subquadratic time, and this algorithm is conditionally optimal:

\begin{theorem}
There is a randomized $\tilde{O}(m^{3/2})$ time algorithm, that given a directed graph $G=(V,E)$ with nonnegative integer edge weights and $S\subseteq V, T=V\setminus S$, can output an estimate $D'$ such that $D_{ST}/2\leq D'\leq D_{ST}$ with high probability, where $D_{ST}$ is the $ST$-Diameter of $G$. 

Moreover, if there is an $O(m^{2-\eps})$ time $2-\delta$-approximation algorithm for the problem for some $\eps,\delta>0$, then SETH is false.
\end{theorem}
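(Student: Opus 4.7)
The plan for the upper bound is to exploit the following structural property unique to bichromatic directed graphs: since $S$ and $T$ partition $V$, any shortest path from the diameter endpoints $s^* \in S$ to $t^* \in T$ must cross at least one $S \to T$ edge $(u^*, v^*)$. Decomposing the path as $s^* \leadsto u^* \to v^* \leadsto t^*$ yields
\[
d(s^*, v^*) + d(u^*, t^*) = d(s^*, u^*) + 2w(u^*, v^*) + d(v^*, t^*) = D + w(u^*, v^*) \geq D,
\]
so at least one of the bichromatic pairs $(s^*, v^*)$ or $(u^*, t^*)$ has distance $\geq D/2$. The algorithm samples a uniform random set $P \subseteq V$ of $\tilde{O}(\sqrt{m})$ vertices, runs in- and out-Dijkstra from each pivot in $\tilde{O}(m^{3/2})$ total time, and returns the maximum $D'$ observed among $d(p, t)$ with $p \in P \cap S$, $t \in T$ and among $d(s, p)$ with $s \in S$, $p \in P \cap T$.

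For the correctness analysis I would fix any $S \to T$ edge $(u^*, v^*)$ on a shortest $s^* \to t^*$ path and consider the witness sets $W_1 = \{s \in S : d(s, v^*) \geq D/2\} \ni s^*$ and $W_2 = \{t \in T : d(u^*, t) \geq D/2\} \ni t^*$. In the large-witness regime where $|W_1| + |W_2| = \Omega(n \log n / \sqrt{m})$, the sample $P$ hits $W_1 \cup W_2$ with high probability, and the Dijkstra from the hit pivot directly certifies $D' \geq D/2$. In the small-witness regime both $W_1$ and $W_2$ are sparse; this isolates the diameter pair enough that we can afford a second round of targeted Dijkstras from a bounded set of candidate vertices identified via the BFS trees of the pivots together with the $S \to T$ edges incident to them, while staying inside the $\tilde{O}(m^{3/2})$ budget.

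For the lower bound, the plan is a standard reduction from the Orthogonal Vectors (OV) problem, which requires $n^{2-o(1)}$ time under SETH. Given vectors $A, B \subseteq \{0,1\}^d$ of size $n$, I would construct a directed graph that places $A$, the $d$ coordinate gadget vertices, and a short chain of auxiliary detour vertices in $S$, and $B$ in $T$, with unit-weight edges $a \to c$ when $a_c = 1$ and $c \to b$ when $b_c = 1$, plus a length-$2L$ detour gadget connecting every $a$ to every $b$. Then $d(a, b) = L$ whenever $a \cdot b \neq 0$ (via the two-hop coordinate path, scaled by edge subdivision to equal $L$) and $d(a, b) = 2L$ otherwise, so $D_{ST} = L$ iff no orthogonal pair exists and $D_{ST} = 2L$ otherwise. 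A $(2-\delta)$-approximation running in $O(m^{2-\epsilon})$ time would distinguish $L$ from $2L$, yielding a subquadratic algorithm for OV and refuting SETH.

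The main obstacle is the small-witness case in the upper-bound analysis. In directed graphs the triangle inequality does not symmetrically relate $d(s, v)$ and $d(v, t)$, so the simple center-based $2$-approximation that works in the undirected setting is unavailable; the argument must combine the sparsity of $W_1$ and $W_2$ with the $S \to T$ edge structure in a way that both locates a diameter witness and keeps the supplementary Dijkstras within the $\tilde{O}(m^{3/2})$ target.
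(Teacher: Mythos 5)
The upper bound has a genuine gap precisely where you yourself flag the difficulty. You correctly observe that any shortest $s^*\to t^*$ path crosses an $S\to T$ edge $(u^*,v^*)$ and that at least one of $d(s^*,v^*)$, $d(u^*,t^*)$ is $\ge D/2$, and your large-witness sampling argument is fine. But ``the small-witness regime isolates the diameter pair enough that we can afford a second round of targeted Dijkstras from a bounded set of candidate vertices'' is not a proof; it is a hope, and it conceals the part of the argument that actually requires a new idea. Notably, $W_1$ small tells you nothing by itself: $s^*$ might be the \emph{only} element of $W_1$, and you have no handle on where it is. The paper's mechanism for this case is the one you need and do not have: after sampling $S\to T$ edges (specifically, sampling edges rather than arbitrary vertices, and taking their $S$-endpoints $R$), it picks $w\in S$ \emph{maximizing} $d(w,R)$. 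If $D_1<D/2$, then every $s\in R$ has $d(s^*,s)>D/2$, hence $d(w,R)\ge d(s^*,R)>D/2$. Now run an out-Dijkstra from $w$. If $d(w,t^*)\ge D/2$ we are done; otherwise the first $S\to T$ edge $(s,t)$ on the shortest $w\to t^*$ path has its $S$-endpoint at distance $<D/2<d(w,R)$ from $w$, so by the edge-sampling lemma (Lemma~\ref{lemma:edgefarpoint}) $(s,t)$ is among the $\sqrt m$ $S\to T$ edges whose $S$-endpoints are closest to $w$. Running in-Dijkstras from the $T$-endpoints of those $\sqrt m$ edges therefore reaches $t$, and since $t$ lies on the short $w\to t^*$ path, $d(t,t^*)\le d(w,t^*)<D/2$, giving $d(s^*,t)\ge D-d(t,t^*)>D/2$. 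The crucial facts you are missing are (a) $w$ must be the \emph{farthest} vertex from the sample, not a sampled pivot, (b) the candidate set is the $T$-endpoints of the $\sqrt m$ closest $S\to T$ edges to $w$, and (c) the correctness hinges on $d(t,t^*)$ being \emph{small}, not large, so that the triangle inequality transfers the length $D$ onto the pair $(s^*,t)$.

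On the lower bound, your plan is in the right spirit, but there is a detail you have not confronted: because the ST-diameter maximizes over all $s\in S$, not just the vector vertices, placing the coordinate vertices $C$ in $S$ forces you to also control $\max_{c\in C, b\in B}d(c,b)$ in the no-solution case, and some $c$ may have $b[c]=0$ for the maximizing $b$. The paper sidesteps this by setting $S$ to be exactly the vector set $U$ and putting everything else (coordinate vertices, padding layers, and the detour path) in $T$, so the only $S$--$T$ distances that matter are the ones the OV structure directly controls. Your construction would need to be adjusted similarly, or you would need an explicit argument bounding the eccentricities of $C$ and of your chain vertices.
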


The previously known techniques for approximating Diameter in directed graphs fail here. The main issue is that the prior techniques were general enough that they also gave algorithms for Eccentricities and Radius as a byproduct. In the Bichromatic case, however, there is a genuine difference between Diameter and Radius, as we noted above, and new techniques are needed. Here again it turns out that combining theme (2) with a delicate argument is sufficient to get conditionally tight algorithms under SETH. 

% %get rid of mention of $B$ not in boundary algs.
% We consider a randomly sampled set of edges $B$ in $S\times T$ and base our algorithm on two cases. Letting $s^*\in S$ and $t^*$ be the endpoints of the $ST$-Diameter path, the first case is that there exists a $(s,t)\in B$ for which $d(s^*,s)\leq D_{ST}/2$. Then $\max_{t'\in T} d(s,t')\geq D_{ST}/2$ and the quantity is also at most $D_{ST}$. Otherwise, there is some node of $S$ very far from the vertices in $S$ which are incident to $B$, let $w\in S$ be furthest such vertex. Then we show either $w$ serves as a good replacement for $s^*$ i.e $\max_{t'\in T} d(w, t'))\geq D_{ST}/2$ or that the edges close to $w$ allow us to ``catch'' an $S\times T$ edge $(s,t)$ on the shortest $w\rightarrow t^*$ path. And then we show that if none of the previous estimates work, $t$ would be close to $t^*$ and hence serve as a good replacement for $t^*$ i.e. $\max_{s'\in S} d(s',t))\geq D_{ST}/2$.
\paragraph*{Subset Versions.}
Recall that Subset Diameter, Radius, and Eccentricities are the versions of the corresponding $ST$ problems with the constraint that $S=T$. Interestingly, Subset Diameter, Radius, and Eccentricities all exhibit the same sharp threshold behavior. For all three problems, there are near-linear time algorithms that achieve a 2 (or almost 2) approximation, as well as conditional lower bounds that show that there is no $2-\delta$ approximation in $m^{2-o(1)}$ time.

\paragraph*{Parameterized Algorithms.} We consider the Bichromatic Diameter, Radius, and Eccentricities problems parameterized by the size of the \emph{boundary} between the $S$ and $T$ sets. If $S'$ is the set of vertices in $S$ that have a neighbor in $T$, and $T'$ is the set of vertices in $T$ that have a neighbor in $S$, then the boundary $B$ is whichever of $S'$ or $T'$ is smaller in size. Our lower bound constructions already have small boundary so they rule out algorithms even for graphs with small boundary. However, interestingly we obtain near-linear time algorithms for graphs with small boundary that achieve \emph{better} multiplicative approximation factors than the optimal non-parameterized algorithms. This is not a contradiction because our parameterized algorithms have a constant additive error, while the apparently contradictory lower bounds do not tolerate additive error.

% We consider algorithms and conditional lower bounds for special
% cases. One such case is when the instance of Bichromatic Diameter has a small number of edges in $S\times T$. We obtain a better approximation algorithm for this case, and also show that it is tight.
% We also consider a special case of $ST$-Diameter, Radius and Eccentricities when $S=T$. We call these the Subset versions. We obtain simple near-linear time $2$-approximations by slight modifications of prior work.

\section{Preliminaries}
Given a graph $G=(V,E)$ (directed or undirected, weighted or unweighted), let $d(u,v)$ denote the distance from $u\in V$ to $v\in V$. For a subset $X\subseteq V$ and $v\in V$, define $d(v,X):=\min_{x\in X} d(v,x)$. Similarly $d(X,v):=\min_{x\in X} d(x,v)$.

Unless otherwise stated, $m$ denotes the number of edges and $n$ the number of vertices of the underlying graph. Without loss of generality, we can assume that all undirected graphs are connected, and all directed graphs are weakly connected, so that $m\geq n-1$.

The {\em Eccentricity} $\eps(v)$ of a vertex $v\in V$ is $\max_{u\in V} d(v,u)$. The {\em Diameter} $D(G)$ of $G$ is $\max_{v\in V}\eps(v)$, and the {\em Radius} $R(G)$ of $G$ is $\min_{v\in V} \eps(v)$.

Given $S,T\subseteq V$, we define analogous parameters as follows. The $ST$-Eccentricity $\eps_{ST}(v)$ of $v\in S$ is $\max_{u\in T} d(v,u)$. The $ST$-Diameter $D_{ST}(G)$ is $\max_{v\in S} \eps_{ST}(v)$, and the $ST$-Radius $R_{ST}(G)$ is $\min_{v\in S}\eps_{ST}(v)$. 
%The symmetric $ST$-radius $\tilde{R}_{ST}(G)$ of $G$ is $\max \{R_{ST}(G),R_{TS}(G)\}$. THE GUARANTEES ARE NOT AS I THOUGHT SO I AM REMOVING SYMRADIUS

The above parameters are called {\em Bichromatic} Eccentricities, Diameter, and Radius if $S$ and $T$ form a partition of $V$, i.e. $T=V\setminus S$.

The above parameters are called {\em Subset} Eccentricities, Diameter, and Radius if $S=T$ and are notated with subscript $S$ instead of $ST$.

\subsection{Preliminaries for algorithms}
\begin{lemma}\label{lemma:disthit}
Let $G=(V,E)$ be a (possibly directed and weighted graph) and let $W\subseteq V$. Let $g\geq \Omega(\ln n)$ be an integer. Let $S\subseteq W$ be a random subset of $c (|W|/g) \ln n$ vertices for some constant $c>1$. For every $v\in V$, let $W(v)$ be the set of vertices $x\in W$ for which $d(v,x)<d(v,S)$. Then with probability at least $1-1/n^{c-1}$, for every $v\in V$, $|W(v)|\leq g$, and moreover, if one takes the closest $g$ vertices of $W$ to $v$, they will contain $W(v)$.
\end{lemma}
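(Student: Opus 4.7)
The plan is to reduce the probabilistic statement to a deterministic fact about ``nearest-$g$'' sets, followed by a standard hitting-set tail bound. For each $v \in V$, fix a total order on $W$ (to break ties consistently) and let $W_g(v) \subseteq W$ denote the $g$ closest vertices of $W$ to $v$ under this order. The core deterministic claim I will establish is: whenever $S \cap W_g(v) \neq \emptyset$, both $W(v) \subseteq W_g(v)$ and $|W(v)| \leq g$ hold at $v$. Indeed, choose any $s^\star \in S \cap W_g(v)$; then $d(v,S) \leq d(v,s^\star)$, and any $x \in W(v)$ satisfies $d(v,x) < d(v,s^\star)$, which (using the consistent tie-breaking) forces $x$ to rank strictly before $s^\star$ in the ordering of $W$ by distance from $v$. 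Since $s^\star \in W_g(v)$, any such $x$ also lies in $W_g(v)$, giving the desired containment and cardinality bound simultaneously.

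Next I would handle the probabilistic part by bounding, for a fixed $v$, the probability that $S$ misses $W_g(v)$. Since $|W_g(v)| = g$ (assuming $|W| \geq g$; otherwise the lemma is trivial because $W(v) \subseteq W$) and $S$ is a uniformly random subset of $W$ of size $c(|W|/g)\ln n$, the miss probability is at most
\[
\left(1 - \frac{g}{|W|}\right)^{c(|W|/g)\ln n} \;\leq\; e^{-c \ln n} \;=\; n^{-c},
\]
using $1-x \leq e^{-x}$. A union bound over the at most $n$ vertices $v$ yields failure probability at most $n \cdot n^{-c} = n^{-(c-1)}$, which matches the claimed bound. Combining with the deterministic claim, on the good event every $v$ satisfies $W(v) \subseteq W_g(v)$ and $|W(v)| \leq g$.

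There is no real conceptual obstacle here; the only point requiring care is the tie-breaking convention used to define $W_g(v)$, which must be the same order used when comparing ``strictly closer'' vertices, so that the implication ``$d(v,x) < d(v,s^\star)$ and $s^\star \in W_g(v)$ $\Rightarrow$ $x \in W_g(v)$'' is genuinely valid. A uniform global order on $W$ suffices. The hypothesis $g = \Omega(\ln n)$ is used only to guarantee that $c(|W|/g)\ln n \leq |W|$ so that sampling a subset of that size from $W$ is well-defined (if sampling without replacement is preferred); with sampling with replacement the same bound goes through verbatim.
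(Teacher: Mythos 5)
Your proof is correct and follows essentially the same structure as the paper's: identify the $g$ nearest vertices of $W$ to each $v$ (your $W_g(v)$, the paper's $Q_v$), show deterministically that $S$ hitting this set forces $W(v)\subseteq W_g(v)$, and union-bound a per-vertex miss probability of $n^{-c}$ over all $n$ vertices. The one small place you are actually cleaner than the paper: the paper's proof silently switches to Bernoulli sampling (``select each vertex of $W$ independently with probability $(c\ln n)/g$'') while the lemma is stated for a fixed-size random subset; you stick with the fixed-size model and correctly bound the miss probability by $(1-g/|W|)^{c(|W|/g)\ln n}\leq n^{-c}$, which is more faithful to the statement. Your remark about tie-breaking is also the right thing to be careful about, though it is a standard convention.
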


\begin{proof}
For each $v\in V$, imagine sorting the nodes $x\in W$ according to $d(v,x)$.
Define $Q_v$ to be the first $g$ nodes in this sorted order - those are the nodes of $W$ closest to $v$ (in the $v\rightarrow x$ direction).

We pick $S$ randomly by selecting each vertex of $W$ with probability $(c\ln n)/g$.
The probability that a particular $q\in Q_v$ is not in $S$ is $1-(c\ln n)/g$, and the probability that no $q\in Q_v$ is in $S$ is $(1-(c\ln n)/g)^g \leq 1/n^c$. By a union bound, with probability at least $1-1/n^{c-1}$, for every $v\in V$, we have that $Q_v\cap S\neq \emptyset$.

Now, for each particular $v$, say that $w(v)$ is a node in $Q_v\cap S$. Since all nodes $x\in W$ with $d(v,x)<d(v,w(v))$ must be in $Q_v$, and since $d(v,w(v))\geq d(v,S)$, we must have that $W(v)\subseteq Q_v$. Hence, with probability at least $1-1/n^{c-1}$, for every $v\in V$, $|W(v)|\leq g$ and $W(v)\subseteq Q_v$.
\end{proof}

\begin{lemma}\label{lemma:farpoint}
Let $G=(V,E)$ be a (possibly directed and weighted) graph. Let $M,W\subseteq V$ and 
%, and let $v\in M$. 
let $S\subseteq W$ be a random subset of $c (n/g) \ln n$ vertices for some large enough constant $c$ and some integer $g\geq 1$.
%Let $w\in M$ be the node of $M$ furthest from $S$ (i.e. maximizing $d(w,S)$). 

Then, for any $D>0$ and
%either (1) there is some $s\in S$ with $d(v,s)\leq D$, or (2)
for any $w\in M$ with $d(w,S)>D$, if one takes the closest $g$ vertices of $W$ to $w$, they will contain all nodes of $W$ at distance $<D$ from $w$, with high probability.
\end{lemma}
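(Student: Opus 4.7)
The plan is to reduce the claim to a cardinality bound on the set $N_w := \{x \in W : d(w,x) < D\}$. Specifically, if $|N_w| \leq g$, then any selection of the $g$ closest vertices of $W$ to $w$ must contain all of $N_w$, since each vertex in $N_w$ is strictly closer to $w$ than any vertex of $W \setminus N_w$. So it suffices to establish that, with high probability, $|N_w| \leq g$ simultaneously for every $w \in M$ with $d(w,S) > D$.

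The next observation is that the hypothesis $d(w,S) > D$ is equivalent to $N_w \cap S = \emptyset$: since $S \subseteq W$, any vertex of $S$ at distance $< D$ from $w$ would belong to $N_w$, and conversely any vertex of $N_w \cap S$ would certify $d(w,S) < D$. Thus the desired conclusion follows from the contrapositive statement: if $|N_w| > g$, then $N_w \cap S \neq \emptyset$ with high probability. This is exactly the type of hitting-set guarantee established in the proof of Lemma~\ref{lemma:disthit}.

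To carry out the union bound, I would model $S$ as obtained by including each vertex of $W$ independently with probability $(c \ln n)/g$; this produces a sample of expected size at most $c(n/g)\ln n$, matching the statement up to constants, and standard concentration/coupling arguments let us transfer the guarantee to the fixed-size sample. For any fixed $w \in M$ with $|N_w| > g$, the probability that $S$ avoids $N_w$ is at most $(1 - (c\ln n)/g)^{|N_w|} \leq (1 - (c\ln n)/g)^{g} \leq e^{-c \ln n} = n^{-c}$. Union-bounding over the at most $n$ vertices $w \in M$ yields a failure probability of at most $n^{1-c}$, which is negligible for $c$ a sufficiently large constant.

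Combining the three steps, with high probability every $w \in M$ with $d(w,S) > D$ satisfies $N_w \cap S = \emptyset$, hence $|N_w| \leq g$, and therefore the $g$ closest vertices of $W$ to $w$ contain $N_w$. The only real subtlety, rather than an obstacle, is matching the sampling model in the statement (a uniform subset of prescribed size $c(n/g)\ln n$) with the Bernoulli model used in the probability estimate; this is handled routinely by adjusting the constant $c$, since $|W| \leq n$ and the two models differ only by constants in the exponent of the tail bound.
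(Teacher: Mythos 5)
Your proof is correct and follows the same mathematical route as the paper: the paper's proof of Lemma~\ref{lemma:farpoint} is a one-line invocation of Lemma~\ref{lemma:disthit} (take $Q$ to be the $g$ closest vertices of $W$ to $w$; by that lemma $Q$ contains all $x\in W$ with $d(w,x)<d(w,S)$, hence all at distance $<D$). You have simply unrolled the argument inside Lemma~\ref{lemma:disthit} rather than citing it; your three steps (cardinality bound implies containment; $d(w,S)>D$ implies $N_w\cap S=\emptyset$; hitting-set union bound over $n$ vertices with Bernoulli sampling at rate $(c\ln n)/g$) match the paper's proof of that underlying lemma essentially verbatim.
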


\begin{proof}
%Suppose that there is no $s\in S$ with $d(v,s)\leq D$. Thus, $d(v,S)>D$. Since $v\in M$ and $w$ is the furthest in $M$ from $S$, $d(w,S)>D$. 
Let $Q$ be the closest $g$ vertices of $W$ to $w$.
By Lemma~\ref{lemma:disthit}, with high probability $Q$ contains all nodes of $W$ at distance $<d(w,S)$ from $w$, and hence $Q$ contains all nodes of $W$ at distance $<D$ from $w$, with high probability.
\end{proof}

We sometimes sample edges instead of vertices, so analogous lemmas to Lemmas~\ref{lemma:disthit} and \ref{lemma:farpoint} hold when the sample is from a set of edges. Here is the analogue of Lemma~\ref{lemma:farpoint}. The other lemma is similar.

\begin{lemma}\label{lemma:edgefarpoint}
Let $G=(V,E)$ be a (possibly directed and weighted graph) and let $M,W\subseteq V$.
%, and let $x\in M$.
Let $E'\subseteq E$ be a random subset of $c (|E|/g) \ln n$ edges for some large enough constant $c$ and some integer $g\geq 1$.
Let $Q$ be the endpoints of edges in $E'$ that are in $W$.
%Let $w\in M$ be the node of $M$ furthest from $Q$.

Then, for any $D>0$, 
%(1) either there is some $s\in Q$ and $d(x,s)\leq D$, or (2)
and for any $w$ with $d(w,S)>D$, if one takes the closest $g$ edges of $E'$ to $w$ wrt the distance from their $W$ endpoints, they will contain all edges of $E'$ whose $W$ endpoints are at distance $<D$ from $w$, with high probability.
\end{lemma}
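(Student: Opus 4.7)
The plan is to run the proofs of Lemmas~\ref{lemma:disthit} and~\ref{lemma:farpoint} essentially unchanged, with edges of $E$ playing the role that vertices of $W$ played before, and with the distance $d(w, e_W)$ from $w$ to the $W$-endpoint of the edge $e$ playing the role of $d(v, x)$. I will read the condition $d(w, S) > D$ as $d(w, Q) > D$, in line with the notation set up at the start of the lemma statement.

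First, for each $w$ I would sort the edges of $E$ in non-decreasing order of $d(w, e_W)$ and let $P_w$ be the first $g$ edges in this order (the edge analogue of $Q_v$ in Lemma~\ref{lemma:disthit}). Since each edge of $E$ is independently included in $E'$ with probability $(c \ln n)/g$, for any fixed $w$ the probability that $P_w \cap E' = \emptyset$ is at most $(1 - (c \ln n)/g)^g \leq n^{-c}$. A union bound over the $n$ choices of $w$ then gives that, with probability at least $1 - n^{-(c-1)}$, every $P_w$ meets $E'$.

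Next I would condition on this event and fix any $w$ with $d(w, Q) > D$. Choosing any $e^* \in P_w \cap E'$, we have $e^*_W \in Q$, so $d(w, e^*_W) \geq d(w, Q) > D$; by the defining property of $P_w$, every edge $e \in E \setminus P_w$ then satisfies $d(w, e_W) \geq d(w, e^*_W) > D$. Consequently every edge $e \in E$ with $d(w, e_W) < D$ lies in $P_w$, so there are at most $g$ such edges in $E$, and a fortiori at most $g$ in $E'$; the $g$ edges of $E'$ closest to $w$ by $W$-endpoint distance therefore contain every edge of $E'$ whose $W$-endpoint is at distance $<D$ from $w$, which is exactly what we want. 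I do not anticipate a real obstacle, only the bookkeeping of fixing a tie-breaking rule for which endpoint of $e$ counts as $e_W$ when $e$ has two $W$-endpoints and using the same convention in the definition of $P_w$ and in the conclusion.
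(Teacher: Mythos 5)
Your proof is correct and follows exactly the route the paper intends: the paper does not give an explicit proof of Lemma~\ref{lemma:edgefarpoint}, merely observing that "analogous lemmas to Lemmas~\ref{lemma:disthit} and \ref{lemma:farpoint} hold when the sample is from a set of edges," and your argument is the natural transcription of the proof of Lemma~\ref{lemma:disthit} (and the wrapper argument of Lemma~\ref{lemma:farpoint}) with edges of $E$ in the role of vertices of $W$ and $d(w,e_W)$ in the role of $d(v,x)$. Your reading of the (presumably typo'd) condition $d(w,S)>D$ as $d(w,Q)>D$ is the right one, and you correctly derive the stronger claim—at most $g$ edges of all of $E$, not just $E'$, have their $W$-endpoint within distance $D$ of $w$—which is in fact the form used in Theorems~\ref{thm:bichromedge} and~\ref{thm:birad}. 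Two small remarks, neither a real gap: the sampling model you use (each edge independently with probability $(c\ln n)/g$) differs from the stated "random subset of size $c(|E|/g)\ln n$," but the paper's own proof of Lemma~\ref{lemma:disthit} makes this same silent substitution; and besides the two-$W$-endpoint tie-break you flag, one should also set $d(w,e_W)=\infty$ for edges with no $W$-endpoint so the sort and the set $P_w$ are well-defined—if $P_w$ then contains such an edge, fewer than $g$ edges of $E$ have a $W$-endpoint at all and the conclusion holds vacuously.
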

%[[Use of $W_2$? Do we want edges with one vertex in $W_2$?]]

% \begin{lemma}\label{lemma:edgedisthit}
% Let $G=(V,E)$ be a (possibly directed and weighted graph). Let $W_1,W_2\subseteq V$, and
% let $W\subseteq E\cap (W_1\cup W_2)$. Let $g\geq \Omega(\ln n)$ be an integer. Let $S\subseteq W$ be a random subset of $c (|W|/g) \ln n$ vertices for some constant $c>1$. For every $v\in V$, let $W(v)$ be the set of edges $(x,y)\in W$ for which $x\in W_1,y\in W_2$ and $d(v,x)<d(v,S)$. Then with probability at least $1-1/n^{c-1}$, for every $v\in V$, we have $|W(v)|\leq g$, and moreover, if one takes the closest $g$ edges of $W$ to $v$ (with respect to their $W_1$ endpoint), they will contain $W(v)$.
% \end{lemma}

\subsection{Preliminaries for lower bounds}
The Strong Exponential Time Hypothesis (SETH) asserts that on a Word-RAM with $O(\log n)$ bit words, there is no $(2-\eps)^n$ time (possibly randomized) algorithm for some constant $\eps > 0$ that can determine whether a given CNF-Formula with $n$ variables and $O(n)$ clauses is satisfiable. (This version of SETH is equivalent to the original formulation by Impagliazzo, Paturi and Zane~\cite{ipz1}.) By a result of Williams~\cite{TCS05}, the following Orthogonal Vectors (OV) Problem requires $n^{2-o(1)}\poly(d)$ time (on a word-RAM with $O(\log n)$ bit words), unless SETH fails: given two sets $U,V\subseteq \{0,1\}^d$ with $|U|=|V|=n$ and $d=\omega(\log n)$, determine whether there are $u\in U,v\in V$ with $u\cdot v=0$. %We refer to this as the OV Hypothesis. 

Given an arbitrary instance of OV with $d=\tilde{O}(1)$ (while respecting $d=\omega(\log n)$, e.g. $d=\Theta(\log^2 n)$), consider the following graph representation, which we call the {\em OV-graph}: the vertex set consists of a node for every $u\in U$, for every $v\in V$ and for every coordinate $c\in [d]=C$, and there is an edge $(x\in U\cup V,c\in C)$ if and only if $x[c]=1$. OV is then equivalent to the question of whether there exist $u\in U,v\in V$ such that $d(u,v)>2$. In fact, it is equivalent to distinguishing whether for every $u\in U,v\in V$, $d(u,v)=2$ (no OV-solution), or there is some $u\in U,v\in V$ such that $d(u,v)\geq 4$ (OV-solution). In other words, if we set $S=U,T=V$, the $ST$-Diameter of the OV-graph is $2$ if and only if there is no OV-solution and at least $4$ otherwise. Because the OV graph has $m=\tilde{O}(n)$, under SETH, any $(2-\delta)$-approximation algorithm for $ST$-Diameter requires $m^{2-o(1)}$.

A related problem to OV is the Hitting Set (HS) problem~\cite{AbboudWW16,GaoIKW17,ipecsurvey}: given two sets $U,V\subseteq \{0,1\}^d$ with $|U|=|V|=n$ and $d=\omega(\log n)$, determine whether there is $u\in U$ such that for all $v\in V$, $u\cdot v\neq 0$. 
A common hypothesis is that (on the word-RAM) HS requires $n^{2-o(1)}$ time.

If we form the OV-graph on the HS instance input, then the HS problem becomes equivalent to determining whether there is some $u\in U$ such that for all $v\in V$, $d(u,v)\leq 2$. In other words, if we set $S=U,T=V$, the $ST$-Radius of the OV-graph is $2$ if and only if there is a HS-solution and at least $4$ otherwise. Thus, under the HS hypothesis, any $(2-\delta)$-approximation algorithm for $ST$-Radius requires $m^{2-o(1)}$.

Additionally for our constructions we assume that if there is a HS solution $u'$ then for all $c \in C$, $d(u', c) \leq 3$. This is because for every coordinate index $i$ there must be $v \in V$ with $v[i] = 1$ as otherwise we can just delete the $i^{th}$ bit from all vectors.  

Let $k\geq 2$ be an integer. Then, a generalization of the OV problem is $k$-OV: given $k$ sets $U_1,\ldots, U_k\subseteq \{0,1\}^d$, are there $u_1\in U_1,\ldots, u_k\in U_k$ so that $\sum_{c=1}^d\prod_{i=1}^k u_i[c] =0$? It is known that, under SETH, when $d=\omega(\log n)$, there is no $n^{k-o(1)}$ time algorithm for $k$-OV (in the word RAM model)~\cite{TCS05}.

Similar to the OV-graph, Backurs et al.~\cite{diamstoc18} define a graph for $k$-OV which we will refer to as the $k$-OV-graph. We do not explicitly define the $k$-OV-graph here; instead we list its properties in the following theorem.

\begin{theorem}[\cite{diamstoc18}] \label{thm:kOV} Let $k\geq 2$.
	Given a $k$-OV instance consisting of sets $W_0,W_1,\dots,W_{k-1} \subseteq \{0,1\}^d$, each of size $n$, we can in $O(kn^{k-1}d^{k-1})$ time construct an unweighted, undirected graph with %\newline 
	$O(n^{k-1}+k n^{k-2} d^{k-1})$ vertices and $O(k n^{k-1} d^{k-1})$ edges that satisfies the following properties.
	\begin{enumerate}[itemsep=0mm]
		\item The graph consists of $k+1$ layers of vertices $L_0,L_1,L_2,\dots,\allowbreak L_k$. The number of nodes in the sets is $|L_0|=|L_k|=n^{k-1}$ and $|L_1|,|L_2|,\dots,|L_{k-1}|\leq n^{k-2}d^{k-1}$.
		\item $L_0$ consists of all tuples $(a_0,a_1,\ldots, a_{k-2})$ where for each $i$, $a_i\in W_i$. Similarly, $L_k$ consists of all tuples $(b_1,b_2,\ldots, b_{k-1})$ where for each $i$, $b_i\in W_i$.
		\item If the $k$-OV instance has no solution, then $d(u,v)=k$ for all $u \in L_0$ and $v \in L_k$.
		\item If the $k$-OV instance has a solution $a_0, a_1,\dots,a_{k-1}$ where for each $i$, $a_i\in W_i$ then if $\alpha=(a_0,\dots a_{k-2})\in L_0$ and $\beta= (a_1,\dots,a_{k-1}) \in L_k$, then $d(\alpha,\beta)\geq 3k-2$. 
% 		\item Suppose the $k$-OV instance has a solution $a_0, a_1,\dots,a_{k-1}$ where for each $i$, $a_i\in W_i$. Let $t=k-2$.
% 		% then for any tuple $(b_{\lfloor\frac{k-1}{2}\rfloor},\dots,b_{k-2})$, if $\alpha=(a_0,a_1,\ldots,a_{\lfloor\frac{k-3}{2}\rfloor},b_{\lfloor\frac{k-1}{2}\rfloor},\ldots,b_{k-2})\in S$ and $\beta=(a_1,\dots,a_{k-1})\in T$, then $d(\alpha,\beta)\geq 3k-2-2\lceil\frac{k-1}{2}\rceil$.
% 		Let $s$ be such that $0\leq s\leq t$.

% Let $b_{t-s+j}\in W_{t-s+j}$ for all $j\in [1,\ldots,s]$ be some other vectors, potentially different from $a_{t-s+j}$.  
% Consider $\alpha=(a_0,a_1,\ldots,a_{t-s},\allowbreak b_{t-s+1},\ldots,b_{t})\in L_0$ and $\beta=(a_{1},\ldots,a_{t+1})\in L_{t+2}$.
% Then the distance between $\alpha$ and $\beta$ is at least $3t-2s+4$.

% Symmetrically, let $c_{j}\in W_{j}$ for all $j\in [1,\ldots,s]$ be some other vectors, potentially different from $a_{j}$.  
% Consider $\alpha=(a_0,a_1,\ldots,a_{t})\in L_0$ and $\beta=(c_{1},\ldots,c_s,a_{s+1},\dots, a_{t+1})\in L_{t+2}$.
% Then the distance between $\alpha$ and $\beta$ is at least $3t-2s+4$.
		\item For all $i$ from 1 to $k-1$, for all $v \in L_i$ there exists a vertex in $L_{i-1}$ adjacent to $v$ and a vertex in $L_{i+1}$ adjacent to $v$. 
	\end{enumerate}
\end{theorem}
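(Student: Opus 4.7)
The plan is to generalize the classical 2-layer OV-graph construction to $k+1$ layers that stage the $k$-OV verification. Recall that the OV-graph places $U,V$ at the two ends and coordinates in the middle, with edges $(x,c)$ whenever $x[c]=1$, so a length-2 path $u \to c \to v$ exists iff $u[c]=v[c]=1$. Following this template, I would define $L_0$ to be the set of all tuples $(a_0,\ldots,a_{k-2})$ with $a_i \in W_i$ and $L_k$ to be the set of all tuples $(b_1,\ldots,b_{k-1})$ with $b_i \in W_i$; this immediately yields $|L_0|=|L_k|=n^{k-1}$. For each intermediate layer $L_i$ with $1 \leq i \leq k-1$, I would include ``hybrid'' tuples carrying $k-2$ vectors (some from the $L_0$ side, some from the $L_k$ side) together with a few coordinate slots acting as bookkeeping, which gives the claimed bound $|L_i| \leq n^{k-2} d^{k-1}$.

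The key design criterion is that a length-$k$ path from $\alpha=(a_0,\ldots,a_{k-2})$ to $\beta=(b_1,\ldots,b_{k-1})$ should exist precisely when the aligned tuple $(a_0,b_1,\ldots,b_{k-1})$ is witnessed \emph{not} to be a $k$-OV solution, i.e., when there is some coordinate $c$ for which $\prod_j a_j[c]\cdot b_{k-1}[c] = 0$, or when the shared components $a_i$ and $b_i$ already differ so that a ``bypass'' path through the appropriate intermediate layer is available. The edges of the $i$-th layer are chosen so that advancing one step along the canonical path commits to one bit of verification (either dropping $a_0$, inserting $b_{k-1}$, or checking a coordinate). With no $k$-OV solution, every $(\alpha,\beta)$ pair admits such a witness and hence has $d(\alpha,\beta)=k$, establishing property 3. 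When a solution $a_0^\star,\ldots,a_{k-1}^\star$ exists and we take $\alpha=(a_0^\star,\ldots,a_{k-2}^\star)$, $\beta=(a_1^\star,\ldots,a_{k-1}^\star)$, every canonical length-$k$ path is blocked, and any detour is forced to leave the intended layer sequence and return, paying two extra edges per correction; summing over the $k-1$ possible corrections forces length $\geq 3k-2$, which is property 4.

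The main obstacle will be nailing down the intermediate layer definition precisely enough to guarantee two things simultaneously: first, that every vertex in $L_i$ for $1\leq i\leq k-1$ has at least one neighbor in $L_{i-1}$ and at least one in $L_{i+1}$ (property 5), which requires each intermediate tuple to be ``consistent'' with at least one completion on either side; and second, that no shortcut of length strictly less than $3k-2$ exists around a blocked canonical path, which demands a careful case analysis on which component of the tuple a detour tries to rewrite. The vertex- and edge-count bounds and the $O(kn^{k-1}d^{k-1})$ construction time then follow by direct enumeration of the tuple slots. In practice, the cleanest route is to adopt the explicit construction of Backurs et al.~\cite{diamstoc18} verbatim and verify each numbered property by induction on the layer index, with the lower bound argument in property 4 carried out by a potential function that charges each ``misalignment'' in the current intermediate tuple at least $2$ extra edges to repair.
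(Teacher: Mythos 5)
This theorem is stated in the paper as a black-box citation from Backurs et al.~\cite{diamstoc18}: the paper explicitly says ``We do not explicitly define the $k$-OV-graph here; instead we list its properties in the following theorem,'' and gives no proof. So there is no in-paper proof for your attempt to be compared against. What can be assessed is whether your proposal, read on its own terms, actually establishes the claimed properties, and it does not: it is an outline that defers every hard step.

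The genuine gaps are these. First, you never define the intermediate layers $L_1,\dots,L_{k-1}$ or their edge sets; phrases like ``hybrid tuples carrying $k-2$ vectors\dots together with a few coordinate slots acting as bookkeeping'' and ``advancing one step commits to one bit of verification'' describe design intent, not a construction, and properties 1, 3, 4, and 5 all quantify over a graph that is left unspecified. Second, property 3 requires that \emph{whenever} the $k$-OV instance has no solution, \emph{every} $L_0$--$L_k$ pair is at distance exactly $k$; your criterion only discusses when a length-$k$ path exists between an $\alpha,\beta$ pair whose components ``align,'' and does not argue why the same bound holds for pairs whose tuples share no common vectors at all, nor why the distance cannot be less than $k$ (which needs the graph to be properly layered with no skip edges). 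Third, the $3k-2$ lower bound in property 4 is the technically delicate part of the Backurs et al.\ argument, and your justification --- ``two extra edges per correction, summing over $k-1$ possible corrections'' --- is a plausibility heuristic rather than a proof: you have not shown that a detour cannot repair two misalignments at once, cannot reuse an intermediate layer vertex to shortcut, or cannot exit through $L_0$ or $L_k$ and re-enter more cheaply. Finally, your own closing sentence (``the cleanest route is to adopt the explicit construction of Backurs et al.\ verbatim'') concedes that what you have written is a pointer to the proof rather than the proof itself, which is essentially what the paper does too, but it means the proposal cannot be accepted as a self-contained argument for the stated properties.
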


\subsection{Organization}
In Section~\ref{sec:bialgs} we present our algorithms for Bichromatic Diameter, Eccentricities, and Radius. In Section~\ref{sec:stalgs} we present our algorithms for $ST$-Eccentricities and Radius. In Section~\ref{sec:subalgs} we present our algorithms for Subset Diameter, Eccentricities, and Radius. In Section~\ref{sec:paramalgs} we present our parameterized algorithms for Bichromatic Diameter, Radius, and Eccentricities. In Section~\ref{sec:lbs} we present all of our conditional lower bounds.

\section{Algorithms for Undirected  Bichromatic Diameter, Eccentricities and Radius
}\label{sec:bialgs}

\subsection{Undirected Bichromatic Diameter}
We begin with a simple near-linear time algorithm. 

\begin{proposition}\label{prop:linbidiam}
There is an ${O}(m+n\log n)$ time algorithm, that given an undirected graph $G=(V,E)$ and $S\subseteq V, T=V\setminus S$, can output an estimate $D'$ such that $D_{ST}(G)/2-W/2 \leq D'\leq D_{ST},$ where $W$ is the minimum weight of an edge in $S\times T$.
\end{proposition}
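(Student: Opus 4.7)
The plan is to exploit the bichromatic structure: since $S$ and $T$ partition $V$, we can use a single $S\times T$ edge as a ``pivot.'' Specifically, I will pick the minimum-weight $S\times T$ edge $e^* = (u^*, v^*)$ with $u^*\in S$, $v^*\in T$ and $w(u^*,v^*)=W$, run Dijkstra from both $u^*$ and $v^*$, and report
\[
D' \;=\; \max\bigl(\max_{t\in T} d(u^*,t),\ \max_{s\in S} d(v^*,s)\bigr).
\]
Write $D_1=\max_{t\in T} d(u^*,t)$ and $D_2=\max_{s\in S} d(v^*,s)$.

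The upper bound $D'\leq D_{ST}$ is immediate: $u^*\in S$ so every term in $D_1$ is at most $D_{ST}$ by definition of $ST$-Diameter, and symmetrically for $D_2$.

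For the lower bound, let $s^*\in S$ and $t^*\in T$ realize $d(s^*,t^*)=D_{ST}$. By the triangle inequality,
\[
D_{ST} \;=\; d(s^*,t^*) \;\leq\; d(s^*,v^*) + d(v^*,t^*).
\]
Since $s^*\in S$ we have $d(s^*,v^*)\leq D_2$, and since $d(v^*,t^*)\leq d(v^*,u^*)+d(u^*,t^*) \leq W + D_1$, we obtain $D_{ST}\leq D_2 + D_1 + W \leq 2D' + W$, i.e.\ $D'\geq D_{ST}/2 - W/2$.

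For the runtime, finding the minimum-weight $S\times T$ edge takes a linear scan of the edge list in $O(m)$ time, and the two Dijkstra calls (with a Fibonacci heap) take $O(m+n\log n)$ each; taking the two maxima is $O(n)$. The total is $O(m+n\log n)$ as required. There is essentially no obstacle here beyond noticing that paying the additive $W/2$ lets us use the cheapest available $S\times T$ edge as a bridge between the two Dijkstra sources, which is precisely the simplest instantiation of Theme (1) from the paper's outline.
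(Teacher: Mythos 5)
Your proposal is correct and follows essentially the same approach as the paper: pick the minimum-weight $S\times T$ edge, run Dijkstra from both of its endpoints, and combine the two sets of distances via the triangle inequality through that edge. The only cosmetic difference is that you present the lower bound as a single direct chain $D_{ST}\leq D_1+D_2+W\leq 2D'+W$, whereas the paper phrases the same argument as a brief case analysis (if $D_1$ is small, then $D_2$ must be large).
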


\begin{proof}
Let $(s,t)$ be a minimum weight edge of $G$ with $s\in S$ and $t\in T$. Run Dijkstra's algorithm from $s$ and from $t$. Let $D'=\max\{\max_{t'\in T} d(s,t'),\max_{s'\in S} d(s',t)\}$. Let $s^*\in S, t^*\in T$ be endpoints of an $ST$-Diameter path, i.e. $d(s^*,t^*)=D_{ST}$. Then, suppose that $\max_{t'\in T} d(s,t')< D_{ST}/2-W/2$. In particular, $d(s,t^*)<D_{ST}/2-W/2$, and hence $d(s,s^*)>D_{ST}/2+W/2$ by the triangle inequality. Also by the triangle inequality, $$D_{ST}/2+W/2<d(s,t)+d(t,s^*)\leq w(s,t)+ \max_{s'\in S} d(s',t).$$ Hence, 
$D'>D_{ST}/2-W/2,$ where $W$ is the minimum weight of an edge in $S\times T$.
\end{proof}

% \begin{proposition}
% (Previous version) There is an ${O}(m+n\log n)$ time algorithm, that given an undirected graph $G=(V,E)$ and $S\subseteq V, T=V\setminus S$, can output an estimate $D'$ such that $D_{ST}(G)/2-W< D'\leq D_{ST},$ where $W$ is the minimum weight of an edge in $S\times T$.
% \end{proposition}

% \begin{proof}
% Let $(s,t)$ be minimum weight edge of $G$ with $s\in S$ and $t\in T$. Run Dijkstra's algorithm from $s$ and from $t$. Let $D'=\max\{\max_{t'\in T} d(s,t'),\max_{s'\in S} d(s',t)\}$. Let $s^*\in S, t^*\in T$ be endpoints of an $ST$-Diameter path, i.e. $d(s^*,t^*)=D_{ST}$. Then, suppose that $\max_{t'\in T} d(s,t')< D_{ST}/2$. In particular, $d(s,t^*)<D_{ST}/2$, and hence $d(s,s^*)>D_{ST}/2$ by the triangle inequality. Also by the triangle inequality, $$D_{ST}/2<d(s,t)+d(t,s^*)\leq w(s,t)+ \max_{s'\in S} d(s',t).$$ Hence, 
% $D'>D_{ST}/2-W,$ where $W$ is the minimum weight of an edge in $S\times T$.
% \end{proof}

Now we turn to our $5/3$-approximation algorithms.
Our first theorem is for unweighted graphs. Later on, we modify the algorithm in this theorem to obtain an algorithm for weighted graphs as well, and at the same time remove the small additive error that appears in the theorem below.

\begin{theorem}\label{}
There is an $\tilde{O}(m\sqrt n)$ time algorithm, that given an unweighted undirected graph $G=(V,E)$ and $S\subseteq V, T=V\setminus S$, can output an estimate $D'$ such that $3D_{ST}(G)/5\leq D'\leq D_{ST}(G)$ if $D_{ST}(G)$ is divisible by $5$, and otherwise $3D_{ST}(G)/5-6/5\leq D'\leq D_{ST}(G)$. 
\end{theorem}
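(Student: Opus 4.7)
My plan is to adapt the Aingworth--Roditty--Vassilevska~W.\ approximation framework for standard Diameter to the bichromatic setting, combining Theme~1 (random sampling) with Theme~2 (using the $S\times T$ edge structure on the diameter path). Fix a sampling parameter $g=\Theta(\sqrt n)$. The algorithm first independently samples $A_S\subseteq S$ and $A_T\subseteq T$ at inclusion rate $\Theta((\log n)/g)$, so $|A_S|,|A_T|=\tilde O(\sqrt n)$ with high probability, and runs BFS from each vertex in $A_S\cup A_T$, recording $\eps_{ST}(a)$ for $a\in A_S$ and $\eps_{TS}(a)$ for $a\in A_T$. In parallel, for each $v\in V$ it computes $B_S(v)$ and $B_T(v)$, the $g$ nearest $S$- and $T$-vertices to $v$, via a truncated BFS from $v$ that halts once $g$ same-color targets have been seen; summed over $v$ this costs $\tilde O(m\sqrt n)$. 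Finally it BFSs from ``far-from-sample'' witnesses, for example $w_T^*\in T$ maximizing $d(w_T^*,A_S)$ together with each vertex of $B_S(w_T^*)$, plus a symmetric witness on the $S$-side, and from every vertex of $B_T(w)$ for each $w\in A_S$ (and symmetrically $B_S(w)$ for each $w\in A_T$). The output $D'$ is the maximum bichromatic eccentricity observed, appropriately rounded.

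For correctness, fix a bichromatic diameter pair $s^*\in S$, $t^*\in T$ with $d(s^*,t^*)=D$. Theme~1 handles the regime where at least $g$ vertices of $S$ lie within $\lfloor 2D/5\rfloor$ of $s^*$, or at least $g$ vertices of $T$ lie within $\lfloor 2D/5\rfloor$ of $t^*$: by Lemma~\ref{lemma:farpoint} the sample $A_S$ (resp.\ $A_T$) then contains some $a$ with $d(a,s^*)\leq\lfloor 2D/5\rfloor$ (resp.\ $d(a,t^*)\leq\lfloor 2D/5\rfloor$) with high probability, and the triangle inequality gives $\eps_{ST}(a)\geq D-\lfloor 2D/5\rfloor\geq \lceil 3D/5\rceil$ (resp.\ $\eps_{TS}(a)\geq\lceil 3D/5\rceil$). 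In the complementary regime (Theme~2) the same lemma guarantees that $B_S(s^*)$ contains every $S$-vertex within $\lfloor 2D/5\rfloor$ of $s^*$ and $B_T(t^*)$ contains every $T$-vertex within $\lfloor 2D/5\rfloor$ of $t^*$. The key structural observation is the ``catching'' step: the $s^*\to t^*$ shortest path must cross at least one $S\times T$ edge $(s,t)$, and picking the last such edge (closest to $t^*$) its $T$-endpoint $t$ satisfies $d(t,t^*)<\lfloor 2D/5\rfloor$ whenever the $S$-endpoint $s$ is within $\lfloor 2D/5\rfloor$ of $t^*$. Such a $t$ then gives $\eps_{TS}(t)\geq d(t,s^*)\geq D-\lfloor 2D/5\rfloor+1$, matching the target $\lceil 3D/5\rceil$ up to rounding.

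The main obstacle is the Theme~2 case analysis, since there we have no direct handle on where $s^*$ or $t^*$ lie. My plan is to further case-split on the total size (across colors) of the $\lfloor 2D/5\rfloor$-ball around $t^*$ in $V$. If this ball is large, then since few of its vertices lie in $T$ it contains many vertices of $S$, so $A_S$ hits some $w\in S$ with $d(w,t^*)\leq\lfloor 2D/5\rfloor$; after BFS from $w$, the last $S\times T$ edge on the $w\to t^*$ path produces a target $t\in T$ that, by a second application of Lemma~\ref{lemma:farpoint} and the fact that few $T$-vertices lie near $t^*$, belongs to $B_T(w)$ with high probability, so BFSing from every vertex of $B_T(w)$ for each $w\in A_S$ catches it. If the ball is small, then the argmax-distance witness $w_T^*=\arg\max_{v\in T}d(v,A_S)$ must itself be close to $t^*$ by a ``farthest-from-sample'' argument adapted to the bichromatic setting, and BFS from $w_T^*$ directly yields $\eps_{TS}(w_T^*)\geq \lceil 3D/5\rceil$. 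Integer rounding through the two cascaded triangle inequalities (``close in $S$'' transferred to ``close in $T$'' across a single $S\times T$ edge, then back out to $s^*$) is what produces the $6/5$ additive slack when $5\nmid D$; when $5\mid D$ all thresholds are integral and the loss disappears. For runtime, the two BFS rounds each involve $\tilde O(\sqrt n)$ sources at $O(m)$ per source, and the per-vertex truncated BFSs for $B_S(v),\,B_T(v)$ telescope to $\tilde O(m\sqrt n)$, yielding the advertised $\tilde O(m\sqrt n)$ bound.
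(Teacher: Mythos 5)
There is a genuine gap, on both runtime and correctness, in the Theme~2 case. On runtime: the plan calls for running BFS ``from every vertex of $B_T(w)$ for each $w\in A_S$''. Since $|A_S|=\tilde O(\sqrt n)$ and $|B_T(w)|=g=\Theta(\sqrt n)$, this is $\tilde O(n)$ BFS sources, which is $\tilde O(mn)$ time rather than $\tilde O(m\sqrt n)$. (There is no reason the sets $B_T(w)$ for different $w\in A_S$ overlap.) The paper avoids this by running the expensive ball-BFS step from a \emph{single} carefully chosen witness $w$, not from every sampled vertex.

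The correctness gap is in the ``catching'' step: you claim the $T$-endpoint $t$ of the last $S\times T$ edge on $P_{wt^*}$ lies in $B_T(w)$ whp, citing Lemma~\ref{lemma:farpoint} and the smallness of the $T$-ball around $t^*$. But the ball you control is centered at $t^*$, not at $w$, and these are different balls. There can be many $T$-vertices at distance $<d(w,t)$ from $w$ that lie far outside the $\lfloor 2D/5\rfloor$-ball around $t^*$ (they'd only be guaranteed within $\approx 4D/5$ of $t^*$, which your Theme~2 hypothesis does not bound). Moreover Lemma~\ref{lemma:farpoint} cannot be applied to $w$ and the $T$-sample, because $w$ is just a random element of $A_S$ that happens to be near $t^*$; nothing forces $d(w,A_T)$ to be large, which is the hypothesis the lemma needs to conclude that the $g$ nearest $T$-vertices to $w$ cover all nearby $T$-vertices. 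This is exactly the technical point the paper's Step~3 is designed for: it selects a \emph{single} witness $w\in S$ maximizing the largest $D'$ with $d(w,X)>D'/5$ and $d(w,Z)>2D'/5$ (with $X\subseteq T$, $Z\subseteq S$ being the random samples). This simultaneously forces $D'\ge D$ (if the earlier steps failed) and, by Lemma~\ref{lemma:farpoint}, makes both the $T$-ball of radius $D'/5$ and the $S$-ball of radius $2D'/5$ around $w$ have size $\tilde O(\sqrt n)$ — centered at $w$, which is where you will actually look. The subsequent crossing-edge argument (Steps~4--5) then lives entirely inside those two small balls around $w$. Your plan is missing this far-from-both-samples witness; replacing ``for each $w\in A_S$'' by the paper's argmax witness, and centering the ball-size guarantees at that witness, is what closes both gaps.
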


\begin{proof}
Let $D=D_{ST}(G)$ and let us assume that $D$ is divisible by $5$. If $D$ is not divisible by $5$, the estimate we return will have a small additive error. For clarity of presentation, we omit the analysis of the case where $D$ is not divisible by 5. However, we include such analyses in our proofs for Bichromatic Radius (Theorem~\ref{thm:bichromedgeradius}) and $ST$-Eccentricities (Theorem~\ref{STecc2approx}) and the analysis for Diameter is analogous.
%D=5d+e

Suppose the (bichromatic) $ST$-Diameter endpoints are $s^*\in S$ and $t^*\in T$ and that the $ST$-Diameter is $D$. The algorithm does not know $D$, but we will use it in the analysis. 

{\bf (Algorithm Step 1):}
 The algorithm first samples $Z\subseteq S$ of size $c \sqrt{n} \ln n$ uniformly at random. For every $z\in Z$, run BFS, and let $D_1 = \max_{z\in Z,t\in T} d(z,t)$.

{\bf (Analysis Step 1):} If for some $s'\in Z$ we have that $d(s^*,s')\leq 2D/5$, then $D_1\geq d(s',t^*)\geq D -  d(s^*,s') \geq 3D/5$. 
%Let us assume that $d(s^*,Z)>2D/5$.

%This will happen with high probability if the ball of $S$-nodes at distance $\leq 2D/5$ around $s^*$ has at least $\sqrt n$ nodes. Otherwise, we can assume that there are fewer than $\sqrt n$ $S$-nodes at distance $\leq 2D/5$ from $s^*$.
%if $d(s^*,s')\leq 2d+y$, $D_1\geq 3d+e-y$, so fewer ... dist $\leq 2d+y$.

{\bf (Algorithm Step 2):} Now, sample a set $X$ from $T$ of size $C \sqrt n \ln n$ uniformly at random for large enough constant $C$.
For every $t \in X$, run BFS and find the closest node $s(t)$ of $S$ to $t$.
Run BFS from every $s(t)$. Let $D_2=\max_{t\in X, t'\in T} d(s(t),t')$.

{\bf (Analysis Step 2):} If $s^*$ is at distance $\leq D/5$ from some node $t$ of $X$, then $d(s^*,s(t))\leq 2D/5$ (since $s(t)$ is closer to $t$ than $s^*$), and so $D_2\geq d(s(t),t^*)\geq 3D/5$. 
%D/5 -> d+y/2, get 2d+e-y.

If neither $D_1$, nor $D_2$ are good approximations, it must be that $d(s^*,X)>D/5$ and $d(s^*,Z)>2D/5$. 
Consider the nodes $M$ of $S$ that are at distance $> 2D/5$ from $Z$, 
then the node $w\in M$ that is furthest from $X$ among all nodes of $M$.
If neither $D_1$, nor $D_2$ was a good approximation, $s^*\in M$ and since $d(s^*,X)>D/5$, we must have that $d(w,X)>D/5$ (and also $d(w,Z)>2D/5$). In the next step we will look for such a $w$.
% , would have $\leq \sqrt n$ nodes of $T$ at distance $\leq D/5$ from it and also $\leq \sqrt n$ nodes of $S$ at distance $\leq 2D/5$ from it, by Lemma~\ref{lemma:farpoint}. 
% 2d+y and (2d+y)/2

{\bf (Algorithm Step 3):} For each $s \in S$ define $D_s$ to be the biggest integer which satisfies $d(s,X)>D_s/5$ and $d(s,Z)>2D_s/5$. Let $w = \arg \max D_s$ and $D' = \max D_s$. 

{\bf (Analysis Step 3):} By Lemma~\ref{lemma:farpoint} we have that whp, the number of nodes of $T$ at distance $\leq D'/5$ from $w$ and the number of nodes of $S$ at distance $\leq 2D'/5$ from $w$ are both $\leq \sqrt n$. Also if neither $D_1$, nor $D_2$ are good approximations, it must be that $d(s^*,X)>D/5$ and $d(s^*,Z)>2D/5$ and hence $D' \geq D$.

{\bf (Algorithm Step 4):}
Run BFS from $w$. Take all nodes of $S$ at distance $\leq 2D'/5$ from $w$, call these $S_w$, and run BFS from them. Whp, $|S_w|\leq \sqrt n$, so that this BFS run takes $O(m\sqrt n)$ time. Let $D_3:=\max_{s\in S_w,t\in T} d(s,t)$.

For every $s\in S_w$, let $t(s)$ be the closest node of $T$ to $s$ (breaking ties arbitrarily). Run BFS from each $t(s)$. Let $D_4:=\max_{s\in S_w,s'\in S} d(s',t(s))$.

{\bf (Analysis Step 4):} If $D_3\geq 3D/5$ or $D_4\geq 3D/5$, we are done, so let us assume that $D_3,D_4<3D/5$. Since $D_3<3D/5$, and since $D_3\geq d(w,t^*)$, it must be that $d(w,t^*)<3D/5$. Let $P_{wt^*}$ be the shortest $w$ to $t^*$ path. Consider the node $b$ on $P_{wt^*}$ for which $d(w,b)=2D/5$. If $b\in S$, then since $D'\geq D$, $b\in S_w$ and hence we ran BFS from $t(b)$. But since $d(b,t^*)=d(w,t^*)-2D/5<D/5$, and $d(b,t(b))\leq d(b,t^*)$ we have that $d(t(b),t^*)\leq 2D/5$ and hence $D_4\geq d(s^*,t(b))\geq D-d(t(b),t^*)\geq 3D/5$. Thus, if $D_4<3D/5$, it must be that $b\in T$.

%d(w,t^*)<=3d+e-y-1. b at dist 2d+y, 2D'/5>=2d+y. d(b,t^*)<=3d+e-y-1-2d-y = d+e-2y-1. d(t(b),t^*)<=2d+2e-4y-2., D_4\geq 5d+e-2d-2e+4y+2 = 3d - e+4y+2.

{\bf (Algorithm Step 5):} Take all nodes of $T$ at distance $\leq D'/5$ from $w$, call these $T_w$ and run BFS from them. Since $d(w,X)>D'/5$, whp $|T_w|\leq \sqrt n$, so this step runs in $O(m\sqrt n)$ time. Let $D_5=\max_{t\in T_w,s\in S} d(t,s)$.

{\bf (Analysis Step 5):} 
If $D_5\geq 3D/5$, we would be done, so assume that $D_5<3D/5$.
Let $a$ be the node on the shortest $w$ to $t^*$ path $P_{wt^*}$ with $d(w,a)=D/5$. Suppose that $a\in T$. Since $D'\geq D$, $a\in T_w$ and we ran BFS from it. However, also $d(a,t^*)=d(w,t^*)-d(w,a)<3D/5-D/5=2D/5$, and hence $D_5\geq d(a,s^*)\geq d(t^*,s^*)-d(t^*,a) \geq D-2D/5=3D/5$. Since $D_5<3D/5$, it must be that $a\in S$.
%a is d(w,a)=d+y/2., d(a,t^*)<= 3d+e-y-1-d-y/2 = 2d+e-3y/2-1. D_5\geq 5d+e-2d-e+3y/2+1 = 3d+3y/2+1.

Now, since $a\in S$ and $b\in T$, somewhere on the $a$ to $b$ shortest path $P_{ab}$, there must be an edge $(s',t')$ with $s'\in S,t'\in T$. Since $s'$ is before $b$, 
$d(w,s')\leq 2D/5\leq 2D'/5$, and hence $s'\in S_w$.
Thus we ran BFS from $t(s')$. 
Since $s'$ has an edge to $t'\in T$, $d(s',t(s'))\leq d(s',t')=1$. Also, since $d(w,s')\geq d(w,a) = D/5$ and $d(w,t^*)\leq 3D/5-1$, $d(s',t^*)\leq 2D/5-1$.
Thus, 
$$D_4\geq d(t(s'),s^*)\geq d(s^*,t^*)-d(t(s'),t^*) \geq D - d(t(s'),s')-d(s',t^*) \geq D-1-2D/5+1=3D/5.$$

%d(t(s'),t^*) \leq 1+d(s',t^*) \leq 1+(3d+e-y-1)-(d+y/2) = 2d+e-3y/2. so D_4\geq 3d+3y/2.
Hence if we set $D''=\max\{D_1,D_2,D_3,D_4,D_5\}$, we get that $3D/5\leq D''\leq D$.\end{proof}
%3d+e-y, 3d+3y/2, 3d+3y/2+1, 3d - e+4y+2
%e-y=3y/2 so e= 2.5y or 
%e-y=4y+2-e so 2e = 5y+2, and e=2.5y+1
%so if y=0.4e, we get at least 3d+0.6e
%but y can't be fractional and also y div by 2. 
%so e=1, y=0, we still get only 3d, so ceil(3D/5)=3d + 1, we get ceil(3D/5)-1.
%so e=2, y=0, we still get only 3d, so ceil(3D/5)=3d + 2, we get ceil(3D/5)-2.
%so e=3, y=2, we still get only 3d+1, so ceil(3D/5)=3d + 2, we get ceil(3D/5)-1.
%so e=4, y=2, we still get only 3d+2, so ceil(3D/5)=3d + 3, we get ceil(3D/5)-1.

%so e=1, y=0, we still get only 3d, so 3D/5=3d + 3/5, we get 3D/5-3/5.
%so e=2, y=0, we still get only 3d, so 3D/5=3d + 6/5, we get 3D/5-6/5.
%so e=3, y=2, we still get only 3d+1, so 3D/5=3d + 9/5, we get ceil(3D/5)-4/5.
%so e=4, y=2, we still get only 3d+2, so 3D/5=3d + 12/5, we get ceil(3D/5)-2/5.
%\end{proof}

We now modify the algorithm for unweighted graphs, both making the algorithm work for weighted graphs and removing the additive error, at the expense of increasing the runtime to $\tilde{O}(m^{3/2})$.

\begin{theorem}\label{thm:bichromedge}
There is an $\tilde{O}(m^{3/2})$ time algorithm, that given an undirected graph $G=(V,E)$ with nonnegative integer edge weights and $S\subseteq V, T=V\setminus S$, can output an estimate $D'$ such that $3D_{ST}(G)/5\leq D'\leq D_{ST}$.
\end{theorem}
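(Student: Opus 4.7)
The plan is to adapt the five-step structure of the preceding unweighted algorithm with three simultaneous modifications. First, replace uniform vertex samples of size $\tilde O(\sqrt n)$ by uniform \emph{edge} samples of size $\tilde O(\sqrt m)$, forming $Z\subseteq S$ from the $S$-endpoints and $X\subseteq T$ from the $T$-endpoints of the sampled edges, and invoking Lemma~\ref{lemma:edgefarpoint} in place of Lemma~\ref{lemma:farpoint} wherever the unweighted proof uses a closest-vertex bound. Second, replace every BFS call by Dijkstra. Third, carry out the key case analysis at the granularity of \emph{edges} of the shortest $w\rightarrow t^*$ path that straddle the distance thresholds $D/5$ and $2D/5$, rather than at the granularity of vertices sitting at exact integer distances; it is this last change that eliminates the $O(1)$ additive loss appearing in the unweighted version.

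Steps 1--3 then go through essentially verbatim. For every $s\in S$, let $D_s := \min\{\,5\, d(s,X),\ \tfrac{5}{2}\, d(s,Z)\,\}$, pick $w:=\arg\max_{s\in S} D_s$, and set $D':=D_w$. If neither $D_1:=\max_{z\in Z,\,t\in T} d(z,t)$ nor $D_2:=\max_{t\in X,\,t'\in T} d(s(t),t')$ already exceeds $3D/5$ (where $s(t)$ is the closest $S$-vertex to $t$), then the true diameter endpoint $s^*$ itself satisfies $d(s^*,X)>D/5$ and $d(s^*,Z)>2D/5$, which forces $D'\geq D$. Applying Lemma~\ref{lemma:edgefarpoint} twice to the edge samples (once with threshold $D'/5$ via $T$-endpoints, once with threshold $2D'/5$ via $S$-endpoints) guarantees, with high probability, that the sets $T_w:=\{t\in T:d(w,t)\leq D'/5\}$ and $S_w:=\{s\in S:d(w,s)\leq 2D'/5\}$ each have size $\tilde O(\sqrt m)$. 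Steps 4 and 5 then run Dijkstra from each $s\in S_w$, from the closest $T$-neighbour $t(s)$ of each such $s$, and from each $t\in T_w$, at total cost $\tilde O(m^{3/2})$.

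For correctness, assume for contradiction that all five estimates $D_1,\dots,D_5$ are strictly less than $3D/5$. Walk along $P_{wt^*}$ and identify the edge $(a_1,a_2)$ straddling distance $D/5$ from $w$ (so that $d(w,a_1)\leq D/5<d(w,a_2)$) and the edge $(b_1,b_2)$ straddling $2D/5$. A short case analysis on the side of each of these edges on which $S$ and $T$ lie produces, in every case, one of three outcomes: (i) a vertex in $T_w$ whose distance to $s^*$ is at least $3D/5$, contradicting $D_5<3D/5$; (ii) a vertex in $S_w$ whose closest $T$-neighbour is close enough to $t^*$ that $D_4\geq 3D/5$; or (iii) an actual $S\times T$ edge $(s',t')$ on $P_{wt^*}$ with $s'\in S_w$, whose $t(s')$ satisfies $d(t(s'),t^*)\leq d(s',t(s'))+d(s',t^*)\leq w(s',t')+d(s',t^*)$; combined with the decomposition $d(w,t^*)=d(w,s')+w(s',t')+d(t',t^*)$ and the bound $d(w,t^*)<3D/5$, this yields $D_4\geq d(s^*,t(s'))\geq D-d(t(s'),t^*)\geq 3D/5$ with no slack.

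The main obstacle, and what distinguishes this proof from the unweighted one, is precisely this last refinement: because the thresholds $D/5$ and $2D/5$ need not coincide with any vertex on $P_{wt^*}$, one must argue about the specific edges crossing them and absorb the edge weight $w(s',t')$ symmetrically on both sides of the final inequality, rather than relying on unit-length arithmetic that loses a constant. Once this is done, every inequality in the case analysis uses real vertex-to-vertex distances along the shortest path, so no additive constant is incurred, yielding the exact $3D/5\leq D'\leq D$ guarantee and overall runtime $\tilde O(m^{3/2})$.
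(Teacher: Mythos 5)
Your high-level plan -- replace vertex sampling with edge sampling, replace BFS with Dijkstra, and argue at the granularity of the path edges that straddle the thresholds $D/5$ and $2D/5$ -- is exactly the paper's strategy, and Steps~1--3 are fine. However, there is a genuine gap in how you define the Step~4/5 search sets and in the inequality you invoke in your case~(iii).

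You set $S_w=\{s\in S:d(w,s)\leq 2D'/5\}$ and run Dijkstra from each $s\in S_w$, from the closest $T$-vertex $t(s)$ of each such $s$, and from $T_w=\{t\in T:d(w,t)\leq D'/5\}$. The paper instead collects the \emph{edges} $E_S$ incident to $S$-vertices at distance $\leq 2D'/5$ from $w$ and runs Dijkstra from \emph{both} endpoints of every such edge -- in particular from the $T$-endpoints, which may lie at distance up to $3D/5$ from $w$ and so are in neither your $T_w$ nor your $S_w$. That extra set of sources is precisely what closes the argument. In your case~(iii) you take the $S\times T$ edge $(s',t')$ on $P_{wt^*}$ with $s'\in S_w$ and claim $D_4\geq D-d(t(s'),t^*)\geq 3D/5$ from $d(t(s'),t^*)\leq w(s',t')+d(s',t^*)$ and $d(w,t^*)<3D/5$. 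But chasing your own inequalities gives $d(t(s'),t^*)\leq 2\,w(s',t')+d(t',t^*)$, and the decomposition $d(w,s')+w(s',t')+d(t',t^*)<3D/5$ only controls \emph{one} copy of $w(s',t')$. When $d(w,s')$ is barely above $D/5$ and $w(s',t')$ is close to $D/5$, you get $d(t(s'),t^*)$ up to roughly $3D/5$, so $D_4$ is bounded below only by about $2D/5$, not $3D/5$. The same problem occurs for the straddling edge at $2D/5$ when $b_1\in S$, $b_2\in T$ and $w(b_1,b_2)>D/5$: your $t(b_1)$ can be off by $2w(b_1,b_2)>2D/5$ whereas you would need $d(t(b_1),t^*)\leq 2D/5$. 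The paper avoids this by running Dijkstra directly from $t'$ (respectively $b_2$), which is legal because $(s',t')\in E_S$; then $d(t',t^*)=d(w,t^*)-d(w,t')<2D/5$ with no back-and-forth detour, giving $D_3\geq d(t',s^*)>3D/5$. So you need to enlarge your Step~4 source set from ``$S$-vertices at distance $\leq 2D'/5$ plus their closest $T$-vertex'' to ``all endpoints of edges incident to those $S$-vertices,'' and rework case~(iii) around Dijkstra from $t'$ itself rather than from $t(s')$.
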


\begin{proof}

Suppose as before the (bichromatic) $ST$-Diameter endpoints are $s^*\in S$ and $t^*\in T$ and that the $ST$-Diameter is $D$. 

{\bf (Algorithm Modified Step 1):}
 The algorithm here samples $E'\subseteq E$ of size $c \sqrt{m}\ln n$ uniformly at random, for large enough $c$. Let $Z$ be the endpoints of edges in $E'$ that are in $S$.
 %, and let $X$ be the endpoints that are in $T$. 
For every $z\in Z$, run Dijkstra's algorithm, and let $D_1 = \max_{z\in Z,t\in T} d(z,t)$.

{\bf (Analysis Step 1):} If for some $s'\in Z$ we have that $d(s^*,s')\leq 2D/5$, then $D_1\geq d(s',t^*)\geq D -  d(s^*,s') \geq 3D/5$. Let us then assume that $d(s^*,Z)>2D/5$.

%This will happen with high probability if the ball of $S$-nodes at distance $\leq 2D/5$ around $s^*$ has at least $\sqrt m$ edges incident to it. Otherwise, we can assume that there are fewer than $\sqrt m$ edges of the form $(s,b)$ where $d(s^*,s)\leq 2D/5$.
%if $d(s^*,s')\leq 2d+y$, $D_1\geq 3d+e-y$, so fewer ... dist $\leq 2d+y$.

{\bf (Algorithm Modified Step 2):} Let $X$ be the endpoints of edges in $E'$ that are in $T$. 
For every $t \in X$, run Dijkstra's algorithm and find the closest node $s(t)$ of $S$ to $t$.
Run Dijkstra's algorithm from every $s(t)$. Let $D_2=\max_{t\in X, t'\in T} d(s(t),t')$.

{\bf (Analysis Step 2):} If $s^*$ is at distance $\leq D/5$ from some node $t$ of $X$, then $d(s^*,s(t))\leq 2D/5$ (since $s(t)$ is closer to $t$ than $s^*$), and so $D_2\geq d(s(t),t^*)\geq 3D/5$. 
Let us then assume that $d(s^*,X)>D/5$.

As before, if we consider the nodes $M$ of $S$ that are at distance $> 2D/5$ from $Z$, then the node $w\in M$ that is furthest from $X$ among all nodes of $M$, would have both $d(w,Z)>2D/5$ and $d(w,X)>D/5$, as $s^*$ is in $M$ and satisfies $d(s^*,X)>D/5$. We will find a node $w$ with these properties in the next step.

{\bf (Algorithm Unmodified Step 3):} Perform exactly the same Step 3 as before, finding the largest integer $D'$ such that there is some node $w\in S$ with $d(w,Z)>2D'/5$ and $d(w,X)>D'/5$.

{\bf (Analysis Step 3):} Let $w\in S$ be the node we found such that $d(w,X)>D'/5,d(w,Z)>2D'/5$. By Lemma~\ref{lemma:edgefarpoint} we have that whp, the number of edges $(s,g)$ where $s\in S, g \in V$ and $d(w,s)\leq 2D'/5$ and the number of edges $(t,g')$ where $t\in T, g' \in V$ and $d(w,t)\leq D'/5$ is at most $\sqrt m$.
Also, if $D_1,D_2<3D/5$, then $D'\geq D$, so that we also have that the number of edges $(s,b)$ where $s\in S$ and $d(w,s)\leq 2D/5$ and the number of edges $(t,b')$ where $t\in T$ and $d(w,t)\leq D/5$ is at most $\sqrt m$, whp.

{\bf (Algorithm Modified Step 4):}
Run  Dijkstra's algorithm from $w$. Take all edges incident to nodes of $S$ at dist $\leq 2D'/5$ from $w$. Call these edges $E_S$ and their endpoints $S_w$.
Run  Dijkstra's algorithm from both of their end points. 
Whp, $|E_S|\leq \sqrt m$ and so $|S_w|\leq 2\sqrt m$, so that this Dijkstra run takes $\tilde{O}(m^{3/2})$ time. Let $D_3:=\max_{t\in S_w\cap T,s\in S} d(s,t)$.

For every $s\in S_w\cap S$, determine a closest node $t(s)\in T$ to $s$, and run  Dijkstra's algorithm from $t(s)$ as well. This search also takes $O(m^{3/2})$ time. Let $D_4:=\max_{s\in S_w\cap S,s'\in S} d(s',t(s))$.

{\bf (Analysis Step 4):} If $d(w,t^*)\geq 3D/5$, or $D_3\geq 3D/5$ or $D_4\geq 3D/5$, we are done, so let us assume that $d(w,t^*),D_3,D_4<3D/5$.

Now consider the node $b$ on the shortest $w$ to $t^*$ path $P_{wt^*}$ for which $d(w,b)\leq 2D/5$, but such that the node $b'$ after it on $P_{wt^*}$ has $d(w,b')>2D/5$.

Suppose that $b\in S$. Then since $D'\geq D$, we have $d(w,b)\leq 2D'/5$ and hence $(b,b')\in E_S$. Let us consider $d(b',t^*)=d(w,t^*)-d(b',w)$. Since $d(w,t^*)<3D/5$ and $d(b',w)>2D/5$, $d(b',t^*)<D/5$.
If $b'\in T$, then since we ran Dijkstra's algorithm from $b'$, we got $D_3\geq D-D/5=4D/5$. If $b'\in S$, then we ran Dijkstra's algorithm from $t(b')$ and $d(t(b'),t^*)\leq d(t(b'),b')+d(b',t^*)\leq 2d(b',t^*)<2D/5$, and hence $D_4\geq d(t(b),s^*)\geq D-2D/5=3D/5$.
Thus if neither $d(w,t^*)$, $D_3,$ nor $D_4$ are good approximations, then $b\in T$.

{\bf (Algorithm Modified Step 5):} 
Take all edges incident to nodes of $T$ at dist $\leq D'/5$ from $w$. Call these edges $E_T$ and their endpoints that are in $T$, $T_w$.
Run Dijkstra's algorithm from all nodes in $T_w$.

Since $d(w,X)>D'/5$, whp $|T_w|\leq 2\sqrt m$, so this step runs in $O(m^{3/2})$ time. Let $D_5=\max_{t\in T_w,s\in S} d(t,s)$.

{\bf (Analysis Step 5):} 
If $D_5\geq 3D/5$, we would be done, so assume that $D_5<3D/5$.
Let $a$ be the node on $P_{wt^*}$ with $d(w,a)\leq D/5$ but so that the node $a'$ after $a$ on $P_{wt^*}$ has $d(w,a')>D/5$.
Suppose that $a'\in T$. Since $D'\geq D$, $(a,a')\in E_T$, $a'\in T_w$ and we ran Dijkstra's algorithm from $a'$. However, also $d(a',t^*)=d(w,t^*)-d(w,a')<3D/5-D/5=2D/5$, and hence $D_5\geq d(a,s^*)\geq d(t^*,s^*)-d(t^*,a') \geq D-2D/5=3D/5$. Since $D_5<3D/5$, it must be that $a'\in S$.

Now, since $a'\in S$ and $b\in T$, somewhere on the $a'$ to $b$ shortest path $P_{ab}$, there must be an edge $(s',t')$ with $s'\in S,t'\in T$. However, since $s'$ is before $b$, we have that $d(w,s')\leq d(w,b)\leq 2D/5\leq 2D'/5$. Thus, $(s', t')\in E_S$ and we ran  Dijkstra's algorithm from $t'$. However, $d(t',t^*)=d(w,t^*)-d(w,t')\leq d(w,t^*)-d(w,a')< 3D/5-D/5=2D/5$, and hence $D_3\geq d(t',s^*)\geq d(s^*,t^*)-d(t',t^*)>3D/5$.

Hence if we set $D''=\max\{d(w,t^*),D_1,D_2,D_3,D_4,D_5\}$, we get that $3D/5\leq D''\leq D$.
\end{proof}

% ~5/3 approx in m^{3/2} time for weighted graphs, VIRGI, done

% ~5/3 approx for symmetric radius by modifying the above algorithms VIRGI

% 
% There is an $\tilde{O}(m\sqrt n)$ time algorithm, that given an unweighted graph $G=(V,E)$ and $S\subseteq V, T=V\setminus S$, 
% that outputs an estimate 
% $R'$ such that either

% $\min\{R_{ST}(G),R_{TS}(G)\} \leq R'\leq 5/3 \cdot \max\{R_{ST}(G),R_{TS}(G)\}$
%  if $\tilde{R}_{ST}$ is divisible by $3$, and otherwise $\min\{R_{ST}(G),R_{TS}(G)\} \leq R'\leq 5/3 \cdot \max\{R_{ST}(G),R_{TS}(G)\}+2$.

\subsection{Undirected Bichromatic Radius}
We begin with a simple near-linear time algorithm that achieves almost a $2$-approximation.

\begin{theorem}Let $G=(V,E)$ be an undirected graph with nonnegative edge weights $w$. Let $S\subseteq V, T=V\setminus S$. There is an $O(m+n\log n)$ time algorithm that outputs an estimate $R'$ such that $R_{ST}\leq R'\leq 2R_{ST}+\min_{s\in S, t\in T, (s,t)\in E} w(s,t)$.
If $G$ is unweighted, the algorithm runs in $O(m+n)$ time and $R_{ST}\leq R'\leq 2R_{ST}+1$.
\end{theorem}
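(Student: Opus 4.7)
The plan is to pick a single candidate $ST$-center by exploiting the lightest bichromatic edge. Let $W := \min_{s\in S,\, t\in T,\, (s,t)\in E} w(s,t)$ and let $(u,v)$ with $u\in S$, $v\in T$ be an edge attaining this minimum. Both $W$ and the edge $(u,v)$ can be located in $O(m)$ time by a single pass over the edge list. Then run Dijkstra's algorithm from $u$ (BFS in the unweighted case), and return
\[
R' \;=\; \max_{t\in T} d(u,t).
\]
All of this fits within $O(m + n\log n)$ time (respectively $O(m+n)$ for unweighted $G$).

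For the lower bound, $R'\ge R_{ST}$ is immediate: $u\in S$, so $\max_{t\in T} d(u,t)$ is at least the minimum over $s\in S$ of the same quantity, which is exactly $R_{ST}$.

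The crux is the upper bound. Let $s^*\in S$ be an $ST$-center, i.e., $\max_{t\in T} d(s^*,t)=R_{ST}$. Since $v\in T$, we have $d(s^*,v)\le R_{ST}$, and since the edge $(u,v)$ has weight $W$, we get $d(u,v)\le W$. For an arbitrary $t\in T$, applying the triangle inequality through $v$ and $s^*$ yields
\[
d(u,t)\;\le\; d(u,v)+d(v,s^*)+d(s^*,t)\;\le\; W + R_{ST} + R_{ST} \;=\; 2R_{ST}+W.
\]
Taking the maximum over $t\in T$ gives $R'\le 2R_{ST}+W$, as required. For unweighted $G$ we simply have $W=1$, which gives the additive $+1$ claim.

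There is essentially no obstacle beyond the triangle-inequality bookkeeping; the only care needed is to verify that such a bichromatic edge $(u,v)$ actually exists, but this is automatic whenever $R_{ST}$ is finite (since a center $s^*\in S$ must have a path to some vertex of $T$, and along any such path the first $S$-to-$T$ transition supplies a bichromatic edge). If no bichromatic edge exists then $R_{ST}=\infty$ and the statement is vacuous.
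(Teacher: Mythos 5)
Your proposal is correct and follows essentially the same approach as the paper: both pick a minimum-weight $S\times T$ edge, run a single-source shortest-path computation from its $S$-endpoint, and use the triangle inequality (routing through the other endpoint and the true $ST$-center) to get the $2R_{ST}+W$ bound. The paper phrases the triangle-inequality chain slightly differently, bounding $d(s,s^*)$ first, but the underlying argument is identical.
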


\begin{proof}
The algorithm is as follows. Let $(s,t)\in E$ be the smallest weight edge among those with $s\in S,t\in T$. Run Dijkstra's algorithm from $s$ and output $R'=\max_{t'\in T} d(s,t')$.

Clearly $R_{ST}\leq R'$. Let $s^*\in S$ be the true $ST$-center. Then for all $t'\in T$,  $d(s,t')\leq d(s,s^*)+R_{ST}$. On the other hand, $d(s,s^*)\leq w(s,t)+d(t,s^*)\leq w(s,t)+R_{ST}$, and hence $R'\leq w(s,t)+2R_{ST}$.

For unweighted graphs, $w(s,t)=1$ and we can run BFS instead of Dijkstra's algorithm.
\end{proof}

We now present a $\tilde{O}(m\sqrt{n})$ algorithm for Bichromatic Radius, similar in spirit to our Bichromatic Diameter algorithm.

\begin{theorem}\label{thm:bichromedgeradius}
There is an $\tilde{O}(m\sqrt{n})$ time algorithm, that given an undirected unweighted graph $G=(V,E)$ and $S\subseteq V, T=V\setminus S$, can output estimates $R'_{ST}$ such that 
$R_{ST}\leq R'_{ST}\leq 5R_{ST}/3 + 5/3$. If $R_{ST}$ is divisible by $3$, $R_{ST}\leq R'_{ST}\leq 5R_{ST}/3+1$.
\end{theorem}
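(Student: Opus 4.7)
The plan is to mirror the $\tilde O(m\sqrt n)$ Bichromatic Diameter algorithm of the previous theorem, but ``flipped'' from max to min. We will BFS from a carefully chosen collection of candidate centres $v\in S$, compute each candidate's $T$-eccentricity $\eps_{ST}(v):=\max_{t\in T}d(v,t)$, and output the minimum. Let $s^{*}\in S$ denote an unknown optimum centre with $\eps_{ST}(s^{*})=R:=R_{ST}$. By the triangle inequality, any candidate $v$ with $d(v,s^{*})\leq 2R/3$ satisfies $\eps_{ST}(v)\leq 5R/3$, so it suffices to guarantee that at least one BFSed candidate lies within $2R/3+O(1)$ of $s^{*}$.

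The candidates are produced in five steps. Step 1: sample $Z\subseteq S$ of size $\Theta(\sqrt n\log n)$ and BFS from each $z\in Z$; this catches $s^{*}$ whenever $d(s^{*},Z)\leq 2R/3$. Step 2: sample $X\subseteq T$ of size $\Theta(\sqrt n\log n)$, BFS from each $x\in X$ to obtain its closest $S$-node $s(x)$, and BFS from each $s(x)$; since $d(s(x),x)\leq d(s^{*},x)$, a sample $x$ with $d(s^{*},x)\leq R/3$ yields $d(s(x),s^{*})\leq 2R/3$ by the triangle inequality. Step 3: if neither of Steps 1--2 helps, then $d(s^{*},Z)>2R/3$ and $d(s^{*},X)>R/3$, so defining $R_s$ as the largest integer with $d(s,Z)>2R_s/3$ and $d(s,X)>R_s/3$, and setting $w:=\arg\max_s R_s$ and $R':=R_w$, we get $R'\geq R$. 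Step 4: by Lemma~\ref{lemma:farpoint}, the sets $S_w:=\{s\in S:d(w,s)\leq 2R'/3\}$ and $T_w:=\{t\in T:d(w,t)\leq R'/3\}$ both have size $\leq\sqrt n$ w.h.p.; BFS from $w$, from every $s\in S_w$, from every $t\in T_w$, and from the closest $S$-node $s(t)$ for each $t\in T_w$. The total work is $\tilde O(m\sqrt n)$. Step 5: output $R'_{ST}$ as the minimum of $\max_{t\in T} d(v,t)$ over all BFSed candidates $v$.

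The bound $R_{ST}\leq R'_{ST}$ is trivial since each term is an eccentricity in $S$. For $R'_{ST}\leq 5R/3+O(1)$, Steps 1 or 2 succeed directly whenever their sampling condition fires. Otherwise $R'\geq R$, so either $s^{*}\in S_w$ and is itself a BFSed candidate giving $R'_{ST}\leq R$, or $d(w,s^{*})>2R'/3$, which is the hard sub-case. Here I would argue by a boundary-edge argument analogous to the diameter proof: if $w$ itself is not already a good candidate, there must exist $t^\diamond\in T$ with $d(w,t^\diamond)>5R/3$; I walk along the shortest $w$-to-$t^\diamond$ path and identify a ``milestone'' at distance $\approx R/3$ from $w$ and another at $\approx 2R/3$ from $w$. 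Since $R'\geq R$, these milestones land in $T_w$ or $S_w$ as appropriate. A boundary edge between the two milestones then yields an $S$-node in $S_w$ (or, via some $s(t)$ with $t\in T_w$) whose distance to $s^{*}$ is at most $2R/3+O(1)$, using $d(s^{*},t^\diamond)\leq R$. The small additive slack of $5/3$ (and only $1$ when $3\mid R$) will come from rounding the integer thresholds $R'/3$ and $2R'/3$, and from the one-edge hop $s(t)\to t$ used when converting $T_w$ members into $S$-candidates, in direct parallel with how the diameter algorithm produces its $6/5$ parity slack.

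The main obstacle is that, unlike for diameter where the boundary-edge argument only has to lower-bound a fixed pair distance, the radius argument must produce an $S$-node \emph{close} to the unknown optimal centre $s^{*}$. Tracking exactly which among $S_w$, $T_w$, or their $S$-neighbours lies within $2R/3+O(1)$ of $s^{*}$ will require a careful case analysis on whether each milestone along the chosen shortest path sits in $S$ or $T$, and on how the boundary edge connecting an $S$-milestone to a $T$-milestone is routed relative to $s^{*}$; absorbing the parity of $R\bmod 3$ into the claimed $+5/3$ slack is the final technical ingredient.
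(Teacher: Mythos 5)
Your plan picks $w\in S$ as the node maximizing the diameter-style quantity $R_s$, and then tries to manufacture a candidate centre near $s^*$ by walking a shortest path from $w$ to some far node $t^\diamond\in T$. This is where the argument breaks. In the diameter proof, nothing required $w$ to be close to the optimum pair; the boundary-edge argument only needed to lower-bound one distance. For radius you must \emph{find} an $S$-node within $\approx 2R/3$ of $s^*$, and a $w\in S$ chosen to be far from $Z$ and $X$ carries no guarantee whatsoever that the shortest $w$-to-$t^\diamond$ path comes near $s^*$. Indeed, if $\eps_{ST}(w)>5R/3$ then $d(w,t^\diamond)>5R/3$ while $d(s^*,t^\diamond)\le R$, so any milestone $x$ at distance $\delta\le 2R/3$ from $w$ on that path has $d(x,t^\diamond)>R$ and the triangle inequality gives only $d(x,s^*)\ge d(x,t^\diamond)-d(s^*,t^\diamond)>0$ --- no upper bound on $d(x,s^*)$ at all. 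Your final paragraph names exactly this obstacle, but the proposal does not actually overcome it; the $d(s^*,t^\diamond)\le R$ fact cannot pull the path toward $s^*$.

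The paper's proof resolves this by inverting your choice of $w$: it samples $S_1\subseteq S$, $T_1\subseteq T$, forms $T_2=\{t(s):s\in S_1\}$ (closest $T$-node to each sampled $S$-node) and $S_2=\{s(t):t\in T_1\}$, lets $s_0\in S$ minimize $\max_{t\in T_1\cup T_2}d(s_0,t)$, and crucially takes $w\in T$ to be the node \emph{maximizing} $d(w,T_1\cup T_2)$. Because $w\in T$ and $s^*$ is the true centre, $d(w,s^*)\le R$ holds \emph{by definition}. Now walking the shortest $w$-to-$s^*$ path (which the algorithm never sees, but the analysis can), the milestone at distance $\approx 2R/3$ from $w$ is automatically at distance $\approx R/3$ from $s^*$, and the boundary-edge case analysis (whether the milestone lies in $S$ or $T$, and the $+1$ from the hop $t'\to s(t')$) produces a BFSed $S$-candidate within $2R/3+5/3$ of $s^*$. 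The ``far-point'' lemma applies because $d(w,T_1\cup T_2)>2R/3$ also forces $d(w,S_1)>R/3$ (else $t(s)\in T_2$ would be too close), so $|S_w|,|T_w|\le\sqrt n$ w.h.p. You would need to replace your Step~3 entirely with this $w\in T$ choice; as written, the proposal cannot be completed.
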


\begin{proof}
Let $s^*\in S$ be the $ST$-center of $G$ and let $R=R_{ST}$ be the $ST$-Radius.

{\bf (Algorithm Step 1):}
The algorithm samples $S_1\subseteq S$ of size $c \sqrt{n} \ln n$ uniformly at random. For every $s\in S_1$, run BFS and find $t(s)\in T$ which is closest to $s$. Let $T_2=\{t(s)~|~s\in S_1\}$.

Then sample $T_1\subseteq T$ of size $c \sqrt{n} \ln n$ uniformly at random. For every $t\in T_1$, run BFS and find $s(t)\in S$ which is closest to $t$. Let $S_2=\{s(t)~|~t\in T_1\}$.

Let $s_0\in S$ be the node minimizing $\max_{t\in T_1\cup T_2} d(s_0,t)$. 
Let $R_1=\max_{t\in T} d(s_0,t)$.
Let $w\in T$ be the node maximizing $d(w,T_1\cup T_2)$.

{\bf (Analysis Step 1):} 
We know that $\max_{t\in T_1\cup T_2} d(s^*,t)\leq R$, and hence 
$\max_{t\in T_1\cup T_2} d(s_0,t)\leq R$.

Suppose that for every $t\in T$, $d(t,T_1\cup T_2)\leq 2R/3$. Then, $d(s_0,t)\leq R+2R/3=5R/3$ and hence $R_1\leq 5R/3$ and $s_0$ would be a good approximate center. Thus, we can assume that there exists some $t$ with $d(t,T_1\cup T_2)>2R/3$, and in particular, $d(w,T_1\cup T_2)>2R/3$.

Moreover, suppose that there is some $s\in S_1$ such that $d(w,s)\leq R/3$. Then, $d(w,t(s))\leq d(w,s)+d(s,t(s))\leq 2d(w,s)\leq 2R/3$, contradicting the fact that $d(w,T_1\cup T_2)>2R/3$. Thus, we must have that $d(w,S_1)>R/3$.

Now, since $T_1$ is random of size $c\sqrt n\ln n$, by Lemma~\ref{lemma:farpoint}, the number of nodes of $T$ at distance $\leq 2R/3$ from $w$ is at most $\sqrt n$, whp. Similarly, since $S_1$ is random of size $c\sqrt n\ln n$, by Lemma~\ref{lemma:farpoint}, the number of nodes of $S$ at distance $\leq R/3$ from $w$ is at most $\sqrt n$, whp.

{\bf (Algorithm Step 2):}
Run  BFS from $w$. Take the closest $\sqrt n$ nodes $T_w$ of $T$ at distance from $w$. 
Run BFS from all $t\in T_w$, and find $s(t)\in S$ closest to $t$. Run BFS from each $s(t)$. 

Let $R_2:=\min_{t'\in T_w} \max_{t\in T} d(s(t'),t)$.

{\bf (Analysis Step 2):} 
Since $|T_w|\leq \sqrt n$, the runtime of this step is $O(m\sqrt n)$.

Since $w\in T$, we know that $d(w,s^*)\leq R$.
Now consider the node $b$ on the shortest $w$ to $s^*$ path $P_{ws^*}$ for which $d(w,b)\leq 2R/3$, but such that the node $b'$ after it on $P_{wt^*}$ has $d(w,b')>2R/3$. Since the graph is unweighted, we get that $d(w,b)=\lfloor 2R/3\rfloor\geq 2R/3-2/3.$

%When $R$ is divisible by $3$ as we assumed, $d(w,b)=2R/3$. When $R$ is not divisible by $R$ this is one source of additive error.

Let us consider $d(b,s^*)=d(w,s^*)-d(w,b)$. Since $d(w,s^*)\leq R$ and $d(w,b)\geq 2R/3-2/3$, $d(b,s^*)\leq R/3+2/3$.

Suppose that $b\in T$. By our previous argument, as $d(w,b)\leq 2R/3$, $b$ must be in $T_w$. Then we ran BFS from $s(b)$ and $d(s(b),s^*)\leq d(s(b),b)+d(b,s^*)\leq 2d(b,s^*)\leq 2R/3+4/3$, and hence $R_2\leq 2R/3+R+4/3=5R/3+4/3$.
Thus if $R_2$ is not a good approximation, then $b\in S$.

{\bf (Algorithm Step 3):} 
Take the $\sqrt n$ closest nodes of $S$ to $w$. Call these $S_w$. Run BFS from every $s\in S_w$.
Set $R_3:=\min_{s\in S_w} \max_{t\in T} d(s,t)$.

{\bf (Analysis Step 3):} 
Since $|S_w|\leq \sqrt n$, the runtime of this step is $O(m\sqrt n)$.

Let $a$ be the node on $P_{ws^*}$ with $d(w,a)\leq R/3$ but so that the node $a'$ after $a$ on $P_{ws^*}$ has $d(w,a')>R/3$. We have that $d(w,a)=\lfloor R/3\rfloor\geq R/3-2/3.$

Suppose that $a\in S$. As $d(w,a)\leq R/3$ and $a$ is among the closest $\sqrt n$ nodes to $w$ by our previous argument, we ran BFS from $a$.

However, also $d(a,s^*)=d(w,s^*)-d(w,a)\leq R-R/3+2/3=2R/3+2/3$, and hence $R_3\leq 2R/3+R+2/3=5R/3+2/3$. If $R_3$ is not a good approximation, it must be that $a\in T$.

Now, since $a\in T$ and $b\in S$, somewhere on the $a$ to $b$ shortest path $P_{ab}$, there must be an edge $(t',s')$ with $s'\in S,t'\in T$.
However, since $t'$ is before $b$, we have that $d(w,t')\leq d(w,b)\leq 2R/3$. Thus, $t'\in T_w$ and we ran BFS from $s(t')$. However,
$d(t',s(t'))\leq d(t',s')=1$, and hence $d(s(t'),s^*)\leq d(s(t'),t')+d(t',s^*)\leq 1+d(w,s^*)-d(w,t')\leq 1+R-d(w,a)=2R/3+5/3$.
Hence for every $t\in T$, $d(s(t'),t)\leq 5R/3+5/3$.
If $R$ is divisible by 3, the only source of additive error is the $+1$ from using the edge $(t',s(t'))$ instead of $(t',s')$.

Hence if we set $R'=\min \{R_1,R_2,R_3\}$, we have
$R\leq R'\leq 5R/3+5/3$. 
If $R$ is divisible by 3, $R\leq R'\leq 5R/3+1$.
\end{proof}

We now use edge sampling to remove the additive error and make the algorithm work for weighted graphs as well, at the expense of increasing the runtime to $\tilde{O}(m^{3/2})$.

\begin{theorem}\label{thm:birad}
There is an $\tilde{O}(m^{3/2})$ time algorithm, that given an undirected graph $G=(V,E)$ with nonnegative integer edge weights and $S\subseteq V, T=V\setminus S$, can output estimates $R'_{ST}$ such that $R_{ST}\leq R'_{ST}\leq 5R_{ST}/3$.
\end{theorem}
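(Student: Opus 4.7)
The plan is to upgrade Theorem~\ref{thm:bichromedgeradius} using the same device that Theorem~\ref{thm:bichromedge} used to upgrade the unweighted Bichromatic Diameter algorithm: I will replace the uniform vertex samples with a uniform edge sample $E'\subseteq E$ of size $c\sqrt{m}\ln n$, every BFS with Dijkstra, and every appearance of ``the node on $P_{ws^*}$ at distance exactly $\tau$'' with the crossing edge $(b,b')$ of $P_{ws^*}$ satisfying $d(w,b)\leq \tau<d(w,b')$, running Dijkstra from both of its endpoints. The additive slack in Theorem~\ref{thm:bichromedgeradius} came precisely from the floors $\lfloor R/3\rfloor$ and $\lfloor 2R/3\rfloor$, so passing from vertices at threshold distances to crossing edges is exactly what removes it.

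Concretely, I would let $S_1$ (resp.\ $T_1$) be the $S$- (resp.\ $T$-) endpoints of $E'$; run Dijkstra from each $s\in S_1$, setting $t(s)\in T$ to be a closest $T$-vertex and $T_2:=\{t(s):s\in S_1\}$; and define $S_2$ symmetrically. Let $s_0\in S$ minimise $\max_{t\in T_1\cup T_2}d(s_0,t)$, set $R_1:=\max_{t\in T}d(s_0,t)$, and let $w\in T$ maximise $d(w,T_1\cup T_2)$. The triangle-inequality argument of Theorem~\ref{thm:bichromedgeradius} carries over verbatim: either $R_1\leq 5R/3$ (where $R:=R_{ST}$), or $d(w,T_1)>2R/3$ and $d(w,S_1)>R/3$. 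In the latter case Lemma~\ref{lemma:edgefarpoint}, applied once with $(W,\tau)=(T,2R/3)$ and once with $(W,\tau)=(S,R/3)$, gives w.h.p.\ $|E_T|,|E_S|\leq\sqrt{m}$, where $E_T$ is the set of edges of $G$ incident to a $T$-vertex at distance $\leq 2R/3$ from $w$ and $E_S$ is the set of edges incident to an $S$-vertex at distance $\leq R/3$ from $w$. After one Dijkstra from $w$ these sets are identified; I would then run Dijkstra from both endpoints of every edge of $E_T\cup E_S$, and for every $T$-vertex $t$ that is an endpoint of some edge of $E_T\cup E_S$, also from a closest $s(t)\in S$. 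The total work is $\tilde{O}(m^{3/2})$, and the returned $R'$ is the minimum, over all of these $S$-roots, of the maximum distance to $T$.

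For correctness, fix a shortest $w$-to-$s^*$ path $P$ and let $(b,b')$, $(a,a')$ be the crossing edges of $P$ at thresholds $2R/3$ and $R/3$ respectively, so that $d(b',s^*)<R/3$ and $d(a',s^*)<2R/3$ follow directly from $d(w,s^*)\leq R$. I would split into three sub-cases. First, if $b\in T$ then $(b,b')\in E_T$ and Dijkstra from $b'$ (when $b'\in S$) or from $s(b')$ (when $b'\in T$) produces an $S$-root of eccentricity at most $5R/3$. Second, if $b\in S$ and $d(w,b)\leq R/3$ then $(b,b')\in E_S$ and the identical argument works. The residual case is $b\in S$ with $R/3<d(w,b)\leq 2R/3$; here a symmetric argument on $(a,a')$ resolves the sub-case $a'\in S$ (since then $(a,a')\in E_T\cup E_S$ depending on the colour of $a$, and Dijkstra from $a'$ certifies eccentricity $<5R/3$), so one may further assume $a'\in T$. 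In this last sub-sub-case, since $a'\in T$ and $b\in S$, the subpath of $P$ from $a'$ to $b$ contains a bichromatic edge $(t',s')$ with $t'\in T$, $s'\in S$, and $d(w,t')\leq d(w,b)\leq 2R/3$; thus $(t',s')\in E_T$, and Dijkstra from $s'$ finishes, since $d(w,s')\geq d(w,a')>R/3$ forces $d(s',s^*)<2R/3$ and hence eccentricity at most $5R/3$.

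The main obstacle I anticipate is precisely this last sub-sub-case. In the weighted setting the convenient slack $\lfloor R/3\rfloor\geq R/3-2/3$ used in Theorem~\ref{thm:bichromedgeradius} is gone, so one cannot Dijkstra from $a$ or $b$ themselves --- the correct Dijkstra root is always the \emph{far} endpoint of the crossing edge, i.e.\ $a'$ or $b'$, and when that endpoint lies on the wrong side of the bipartition one has to fall back on the bichromatic edge of the subpath $P_{a'b}$. Verifying that in every possible colour pattern the relevant crossing edge has in fact been captured by $E_T\cup E_S$, and that no additive error sneaks back in, is the bookkeeping step where the edge-sampling formalism of Lemma~\ref{lemma:edgefarpoint} is essential.
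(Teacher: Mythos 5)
Your proposal is correct and follows essentially the same route as the paper's own proof of Theorem~\ref{thm:birad}: sample $\tilde O(\sqrt m)$ edges to get $S_1,T_1$ and the derived sets $T_2,S_2$; pick $s_0$ minimizing the max distance to $T_1\cup T_2$ and $w\in T$ maximizing $d(w,T_1\cup T_2)$; then, assuming $R_1>5R/3$, invoke Lemma~\ref{lemma:edgefarpoint} to bound $|E_T|$ and $|E_S|$ by $\sqrt m$, run Dijkstra from the captured endpoints, and close with the crossing-edge case analysis at thresholds $2R/3$ and $R/3$ plus the bichromatic edge on the $a'$-to-$b$ subpath. One place where your bookkeeping is actually tighter than the paper's: you run Dijkstra from $s(t)$ for every $T$-endpoint of $E_T\cup E_S$, whereas the paper's Step~3 only forms $s(t)$ for $T$-endpoints of $E^T$. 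This lets you dispose cleanly of the corner case $b\in S$, $d(w,b)\le R/3$, $b'\in T$ with $d(w,b')>2R/3$ (there $a=b$ and $a'=b'$, so $a'$ lies \emph{after} $b$ and the paper's ``$a'$-to-$b$ subpath'' argument degenerates, while $(b,b')\in E_S$ is not in $E^T$ and so $s(b')$ is never queried in the paper's version); your explicit second sub-case, Dijkstra from $s(b')$, resolves it. So: same approach, with a small corner-case repair in your favor.
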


\begin{proof}
Let $s^*\in S$ be the $ST$-center of $G$ and let $R=R_{ST}$ be the $ST$-Radius.

{\bf (Algorithm Step 1):}
We sample $c\sqrt m\ln n$ edges $E'\subseteq E$ uniformly at random.
Let $S_1$ be the endpoints that are in $S$ and let $T_1$ be the endpoints in $T$.
For every $s\in S_1$, run Dijkstra and find $t(s)\in T$ which is closest to $s$. Let $T_2=\{t(s)~|~s\in S_1\}$.

For every $t\in T_1$, run Dijkstra and find $s(t)\in S$ which is closest to $t$. Let $S_2=\{s(t)~|~t\in T_1\}$.

Let $s_0\in S$ be the node minimizing $\max_{t\in T_1\cup T_2} d(s_0,t)$. Run Dijkstra from $s_0$.
Let $R_1=\max_{t\in T} d(s_0,t)$.
Let $w\in T$ be the node maximizing $d(w,T_1\cup T_2)$.

{\bf (Analysis Step 1):} 
The algorithm runs in $\tilde{O}(m^{3/2})$ time.

We know that $\max_{t\in T_1\cup T_2} d(s^*,t)\leq R$, and hence 
$\max_{t\in T_1\cup T_2} d(s_0,t)\leq R$.

Suppose that for every $t\in T$, $d(t,T_1\cup T_2)\leq 2R/3$. Then, $d(s_0,t)\leq R+2R/3=5R/3$ and hence $R_1\leq 5R/3$ and $s_0$ would be a good approximate center. Thus, we can assume that there exists some $t$ with $d(t,T_1\cup T_2)>2R/3$, and in particular, $d(w,T_1\cup T_2)>2R/3$.

Moreover, suppose that there is some $s\in S_1$ such that $d(w,s)\leq R/3$. Then, $d(w,t(s))\leq d(w,s)+d(s,t(s))\leq 2d(w,s)\leq 2R/3$, contradicting the fact that $d(w,T_1\cup T_2)>2R/3$. Thus, we must have that $d(w,S_1)>R/3$.

Now, since $E'$ is random of size $c\sqrt m\ln n$, by
% an edge version of Lemma~\ref{lemma:farpoint}
Lemma~\ref{lemma:edgefarpoint}, the number of edges $(t,g)$ where
$t\in T, g \in V$ and $d(w,t)\leq 2R/3$ is at most $\sqrt m$, whp. Similarly, the number of edges $(s,g)$ where $s\in S, g \in V$ and $d(s,w)\leq R/3$ is at most $\sqrt m$, whp.

{\bf (Algorithm Step 2):}
Run Dijkstra from $w$. Consider the edges $(t,b)$ with $t\in T$ sorted in nondecreasing order according to $d(w,t)$. 
Let $E^T$ be the first $\sqrt m$ edges in this sorted order. Run Dijkstra from both endpoints of each edge in $E^T$. Call $T_w$ those endpoints that are in $T$ and $S^1_w$ those in $S$.
Let $R_2:=\min_{s\in S^1_w} \max_{t\in T} d(s,t)$.

For every $t\in T_w$, determine a closest node $s(t)\in T$ to $t$, and run  Dijkstra's algorithm from $s(t)$ as well. Let $R_3:=\min_{t\in T_w}\max_{t'\in T} d(s(t),t')$.

{\bf (Analysis Step 2):} 
Since $|E^T|\leq \sqrt m$, the runtime of this step is $\tilde{O}(m^{3/2})$.

If $R_2\leq 5R/3$ or $R_3\leq 5R/3$, we are done. So let us assume that $R_2,R_3>5R/3$. Also, since $w\in T$, we know that $d(w,s^*)\leq R$.

Now consider the node $b$ on the shortest $w$ to $s^*$ path $P_{ws^*}$ for which $d(w,b)\leq 2R/3$, but such that the node $b'$ after it on $P_{ws^*}$ has $d(w,b')>2R/3$.

Suppose that $b\in T$. 
Then since $d(w,b)\leq 2R/3$ and since by the previous argument the edges from $T$ nodes at distance $2R/3$ from $w$ is at most $\sqrt m$, $(b,b')$ must be among the edges in $E^T$. We thus run Dijkstra's from both $b$ and $b'$.

Let us consider $d(b',s^*)=d(w,s^*)-d(w,b')$. Since $d(w,s^*)\leq R$ and $d(w,b')>2R/3$, $d(b',s^*)<R/3$.
If $b'\in S$, then since we ran Dijkstra's algorithm from $b'$, we got $R_2\leq 4R/3$. If $b'\in T$, then we ran Dijkstra's algorithm from $s(b')$ and $d(s(b'),s^*)\leq d(s(b'),b')+d(b',s^*)\leq 2d(b',s^*)<2R/3$, and hence $R_3\leq 2R/3+R=5R/3$.
Thus if neither $R_2,$ nor $R_3$ are good approximations, then $b\in S$.

{\bf (Algorithm Step 3):} 
Consider the edges $(s,b)$ with $s\in S$ sorted in nondecreasing order according to $d(w,s)$. 
Let $E^S$ be the first $\sqrt m$ edges in this sorted order. Run Dijkstra from both endpoints of each edge in $E^S$. Call $S^2_w$ those endpoints that are in $S$.
Let $R_4:=\min_{s\in S^2_w} \max_{t\in T} d(s,t)$.

{\bf (Analysis Step 3):} As $|E^S|=\sqrt m$,  $|S_w|\leq 2\sqrt m$, so this step runs in $\tilde{O}(m^{3/2})$ time.

If $R_4\leq 5R/3$, we would be done, so assume that $R_4>5R/3$.
Let $a$ be the node on $P_{ws^*}$ with $d(w,a)\leq R/3$ but so that the node $a'$ after $a$ on $P_{ws^*}$ has $d(w,a')>R/3$.
Suppose that $a'\in S$. Then since $d(w,a)\leq R/3$, $(a,a')\in E^S$, $a'\in S^2_w$ and we ran Dijkstra's algorithm from $a'$. However, also $d(a',s^*)=d(w,s^*)-d(w,a')<R-R/3=2R/3$, and hence $R_4\leq 2R/3+R=5R/3$. Since $R_4>5R/3$, it must be that $a'\in T$.

Now, since $a'\in T$ and $b\in S$, somewhere on the $a'$ to $b$ shortest path $P_{ab}$, there must be an edge $(t',s')$ with $s'\in S,t'\in T$. However, since $t'$ is before $b$, we have that $d(w,t')\leq d(w,b)\leq 2R/3$. Thus, $(t', s')\in E^T$ and we ran  Dijkstra's algorithm from $s'$. However, $d(s',s^*)=d(w,s^*)-d(w,s')\leq d(w,s^*)-d(w,a')< R-R/3=2R/3$, and hence $R_2\leq R+2R/3=5R/3$.

Hence if we set $R'=\min \{R_1,R_2,R_3,R_4\}$, we have
$R\leq R'\leq 5R/3$ 
\end{proof}

\subsection{Undirected Bichromatic Eccentricities.}
In the next section we will give approximation algorithms for $ST$-Eccentricities in undirected graphs which imply algorithms for bichromatic Eccentricities in undirected graphs with same guarantees. We reproduce them here for convenience. 

\begin{proposition}\label{biecc3approx}
There is an $O(m+n\log{n})$ time algorithm, that given an undirected graph $G=(V,E)$ with nonnegative integer edge weights and $S\subseteq V,T = V \setminus S$, can output an estimate $\eps_{ST}'(v)$ for each node $v\in S$ such that $\eps_{ST}(v)/3\le \eps_{ST}'(v) \le\eps_{ST}(v)$.
\end{proposition}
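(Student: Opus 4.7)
The plan is a two-sweep algorithm. \textbf{Step 1:} pick any $t_0 \in T$ and run Dijkstra from $t_0$. \textbf{Step 2:} let $w = \arg\max_{t \in T} d(t_0, t)$ and run Dijkstra from $w$. \textbf{Step 3:} for every $s \in S$, return $\eps_{ST}'(s) := \max(d(s, t_0), d(s, w))$. Two Dijkstra calls give a total running time of $O(m + n \log n)$. Note that the algorithm never uses the fact that $T = V \setminus S$: the same argument gives an $\eps_{ST}$ estimate for arbitrary $S,T \subseteq V$, which is consistent with the paper's remark that the bichromatic result is inherited from the general $ST$ result.

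The upper bound $\eps_{ST}'(s) \le \eps_{ST}(s)$ is immediate: both pivots lie in $T$, so each of $d(s, t_0)$ and $d(s, w)$ is at most $\max_{t \in T} d(s, t) = \eps_{ST}(s)$, and hence so is their maximum.

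For the lower bound, fix $s \in S$ and let $t_s \in T$ witness $d(s, t_s) = \eps_{ST}(s)$. Chaining the triangle inequality and using that $w$ is the $T$-vertex maximizing distance from $t_0$ yields
\[
\eps_{ST}(s) = d(s, t_s) \le d(s, t_0) + d(t_0, t_s) \le d(s, t_0) + d(t_0, w) \le 2\, d(s, t_0) + d(s, w) \le 3\, \eps_{ST}'(s).
\]

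There is essentially no obstacle to overcome: the argument hinges on the maximality of $w$, which dominates the unknown quantity $d(t_0, t_s)$, and on returning the maximum in Step 3 so that the coefficient-$2$ term $d(s,t_0)$ and the coefficient-$1$ term $d(s,w)$ can both be charged against the same output value. The only minor care is that $T$ be nonempty, which is safe since otherwise the problem is vacuous.
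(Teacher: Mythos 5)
Your proof is correct and matches the paper's approach exactly: the paper derives the bichromatic bound from the general $ST$-Eccentricities result (Proposition~\ref{STecc3approx}), whose proof is precisely your two-Dijkstra scheme with pivots $t_0$ (arbitrary in $T$) and $w$ (farthest in $T$ from $t_0$), followed by the same chain of triangle inequalities. Nothing to add.
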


\begin{theorem}\label{biecc2approx}
There is an $\tilde{O}(m\sqrt n)$ time algorithm, that given an unweighted graph $G=(V,E)$ and $S\subseteq V,T = V \setminus S$, can output an estimate $\eps_{ST}'(v)$ for each $v\in S$ such that $\eps_{ST}(v)/2-5/2\le \eps_{ST}'(v)\le \eps_{ST}(v)$. If $\eps_{ST}(v)$ is divisible by $2$, $\eps_{ST}(v)/2-2\le \eps_{ST}'(v)\le \eps_{ST}(v)$.
\end{theorem}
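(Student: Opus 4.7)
The plan is to derive Theorem~\ref{biecc2approx} as the $T = V\setminus S$ special case of the corresponding $ST$-Eccentricities algorithm developed in Section~\ref{sec:stalgs}, so what we actually sketch is the proof for the $ST$ version; the bichromatic statement then follows verbatim.

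The algorithm mirrors the $\tilde O(m\sqrt n)$ Bichromatic Diameter algorithm but must deliver a per-vertex estimate. We first sample $T_1 \subseteq T$ of size $c\sqrt n\ln n$ uniformly at random, run BFS from each $t\in T_1$, and for every $v\in S$ set $\eps_1(v):=\max_{t\in T_1}d(v,t)$. A Chernoff computation plus a union bound over $v$ yields that, with high probability, whenever at least $\sqrt n$ $T$-nodes lie at distance $\ge \eps_{ST}(v)/2$ from $v$, already $\eps_1(v)\ge \eps_{ST}(v)/2$; so only the ``hard'' $v$'s, for which fewer than $\sqrt n$ $T$-nodes are that far, require additional work. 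For these $v$'s we invoke Lemma~\ref{lemma:farpoint} with $W=T$ and $g=\sqrt n$ to bound the relevant neighborhood, and then adapt the ``theme 2'' argument from the Bichromatic Diameter and Radius proofs: we consider the midpoint $m$ of a shortest $v$-$t^*(v)$ path, and exploit the bichromatic partition to locate an $S$-$T$ boundary edge on the segment from $m$ to $t^*(v)$; a BFS from the $T$-endpoint of that edge then supplies a $T$-node at distance $\ge \eps_{ST}(v)/2 - O(1)$ from $v$. The number of extra BFS sources used totals $O(\sqrt n)$, keeping the overall runtime at $\tilde O(m\sqrt n)$.

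The upper bound $\eps'(v)\le \eps_{ST}(v)$ is immediate because every distance aggregated into $\eps'(v)$ has the form $d(v,t)$ for some $t\in T$. The lower bound $\eps'(v)\ge \eps_{ST}(v)/2-5/2$, improving to $\eps_{ST}(v)/2-2$ when $\eps_{ST}(v)$ is even, falls out of the case analysis above; the additive $5/2$ tracks two $\lfloor\cdot\rfloor$ roundings at the midpoint in the unweighted setting, exactly as in Theorem~\ref{thm:bichromedgeradius}, and collapses to $2$ when the midpoint sits on an integer point.

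The main obstacle we expect is the per-vertex nature of the statement: unlike the Diameter and Radius proofs, which pin the entire argument on a single ``special'' endpoint, here the case analysis has to succeed simultaneously for every $v\in S$. This is handled by relying on the worst-case guarantees of Lemmas~\ref{lemma:disthit} and~\ref{lemma:farpoint}, which already hold for every $v\in V$ with high probability, combined with a single union bound over $v\in S$ in the initial sampling step.
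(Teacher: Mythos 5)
You are right that Theorem~\ref{biecc2approx} follows as the $T=V\setminus S$ case of Theorem~\ref{STecc2approx}; the paper explicitly says so. But the sketch you give for the $ST$ algorithm has a genuine gap in exactly the place where the paper does something clever.

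Your plan samples $T_1\subseteq T$, and for the ``hard'' $v$'s (those for which the sample does not certify a good bound) you propose to walk along a shortest $v$-$t^*(v)$ path, find the midpoint, and run BFS from the $T$-endpoint of a boundary edge on the segment from the midpoint to $t^*(v)$. The problem is that $t^*(v)$, the midpoint, and the boundary edge all depend on $v$. You assert the extra BFS sources total $O(\sqrt n)$, but nothing in your argument forces the boundary edges for different hard $v$'s to coincide or to live in a small common set: a priori there could be $\Theta(n)$ distinct boundary edges, one per hard $v$. The paper avoids this entirely by \emph{not} working with the $v$-$t^*(v)$ path. Instead it samples $X\subseteq V$, maps each $x\in X$ to its closest $T$-node $t(x)$ to form $T_X$, and then picks a single vertex $w\in T$ maximizing $d(w,T_X)$. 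Every hard $v$ is automatically close to this one $w$ (because $\eps^{(2)}(v)=d(v,w)$ already works when $d(v,w)\geq\eps_{ST}(v)/2$), so the relevant path is the $w$-to-$v$ shortest path for each hard $v$, and the key nodes on all these paths sit in a fixed small neighborhood of $w$. Lemma~\ref{lemma:farpoint} is then applied \emph{to $w$}, not separately to each $v$, to bound the closest $\sqrt n$ nodes of $V\setminus T$ to $w$ (the set $Y$) and the closest $\sqrt n$ nodes of $T$ to $w$ (the set $T_w$). These two fixed $O(\sqrt n)$-size source sets, plus the $e(y)-d(v,y)$ accounting trick that turns a single BFS into simultaneous per-vertex estimates, are what make the $\tilde O(m\sqrt n)$ runtime and the per-vertex guarantee cohabit. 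Neither the central vertex $w$, nor the $e(y)-d(v,y)$ trick, appears in your proposal, and without them the step ``the number of extra BFS sources totals $O(\sqrt n)$'' is unsupported.

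Two smaller mismatches: sampling directly from $T$ (as you do) is not the same as sampling $X$ from $V$ and mapping through $t(\cdot)$; the paper needs the sample from $V$ because it later controls the number of $V\setminus T$ nodes near $w$ as well as the number of $T$ nodes near $w$. And the additive $5/2$ in the paper arises from floor roundings at distances roughly $\eps_{ST}(v)/4$ and $3\eps_{ST}(v)/8$ from $w$, not from roundings at the midpoint of the path; your ``two roundings at the midpoint'' intuition does not match the actual bookkeeping.
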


\begin{theorem}\label{biecc2approx-edge}
There is an $\tilde{O}(m^{3/2})$ time algorithm, that given an undirected graph $G=(V,E)$ with nonnegative integer edge weights and $S\subseteq V,T = V \setminus S$, can output estimates $\eps_{ST}'(v)$ for each $v\in S$, such that $\eps_{ST}(v)/2\le \eps_{ST}'(v)\leq \eps_{ST}(v)$.
\end{theorem}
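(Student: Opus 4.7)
Since Bichromatic Eccentricities is the $T = V \setminus S$ special case of $ST$-Eccentricities, the theorem reduces to proving an $\tilde{O}(m^{3/2})$ time $2$-approximation for $ST$-Eccentricities, which is done in Section~\ref{sec:stalgs}. The plan is to adapt the edge-sampling framework from Theorem~\ref{thm:bichromedge} (the Bichromatic Diameter $5/3$-approximation) to produce per-vertex estimates, settling for the factor $2$ which matches our conditional lower bound.

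The algorithm samples $c\sqrt{m}\log n$ edges $E'$ uniformly at random; let $X$ and $Z$ be the $T$- and $S$-endpoints of $E'$. Run Dijkstra from every vertex in $X \cup Z$, and additionally from $f(z) \in T$ for each $z \in Z$, where $f(z)$ is a vertex in $T$ farthest from $z$. This totals $\tilde{O}(\sqrt{m})$ Dijkstra runs and $\tilde{O}(m^{3/2})$ time. For each $v \in S$, output
\[
\eps'(v) := \max\bigl\{\, d(v, y) \;:\; y \in X \cup \{f(z) : z \in Z\} \,\bigr\}.
\]
Every vertex in the maximum lies in $T$, so the upper bound $\eps'(v) \leq \eps_{ST}(v)$ is immediate.

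For the lower bound, fix $v \in S$ with witness $t^*$ and $\eps := \eps_{ST}(v)$. First, if some $x \in X$ satisfies $d(x, t^*) \leq \eps/2$, then triangle inequality yields $d(v, x) \geq \eps - \eps/2 = \eps/2$. Second, if some $z \in Z$ satisfies $d(v, z) \leq \eps/4$, then $d(z, t^*) \geq 3\eps/4$, so by the choice of $f(z)$ we have $d(z, f(z)) \geq d(z, t^*) \geq 3\eps/4$, and triangle inequality gives $d(v, f(z)) \geq 3\eps/4 - \eps/4 = \eps/2$. Thus the only way for the lower bound to fail is in the ``doubly-sparse'' case where $d(t^*, X) > \eps/2$ and $d(v, Z) > \eps/4$ simultaneously. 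By Lemma~\ref{lemma:edgefarpoint} applied in both directions, this forces fewer than $\sqrt{m}$ edges of $E$ to have $T$-endpoints in $B(t^*, \eps/2)$ and fewer than $\sqrt{m}$ edges to have $S$-endpoints in $B(v, \eps/4)$.

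The main obstacle is ruling out this doubly-sparse case. The plan is to exploit the shortest $v \to t^*$ path: in the doubly-sparse regime one can locate a ``middle region'' edge on the path whose presence in $E'$ is forced, then invoke the symmetric pivot---additionally running Dijkstra from, for each $x \in X$, an $S$-vertex farthest from $x$---to provide a witness $T$-vertex far from $v$. This case analysis parallels the Step~4/Step~5 structure of Theorem~\ref{thm:bichromedge}, but should be cleaner since we only need a factor of $2$ and since the per-vertex output removes the need for a global pivot $w$. A union bound over the $v \in S$ (which costs only a $\polylog(n)$ factor absorbed by our choice of sampling constant $c$) then completes the analysis.
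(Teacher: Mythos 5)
Your reduction to $ST$-Eccentricities is exactly what the paper does, and the upper bound $\eps'(v)\le\eps_{ST}(v)$ and your Cases 1--2 (when some $x\in X$ is within $\eps/2$ of $t^*$, or some $z\in Z$ is within $\eps/4$ of $v$) are correct. However, the proposal has a genuine gap: the ``doubly-sparse'' case, which you explicitly flag as the obstacle, is left as a sketch that does not work as stated, and it is the heart of the argument. There are two specific problems. First, you say one can ``locate a middle region edge on the path whose presence in $E'$ is forced''; that cannot be right, since $E'$ is a uniformly random set of edges and the graph cannot force any particular edge into it. What Lemma~\ref{lemma:edgefarpoint} gives you is the opposite direction: it says that if a vertex $w$ has $d(w,X)>D$, then the ball of radius $D$ around $w$ contains at most $\sqrt{m}$ relevant edges. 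To exploit this, the algorithm must \emph{separately and explicitly} gather those $\sqrt m$ closest edges to $w$ (not from $E'$) and run Dijkstra from their endpoints. Second, your claim that the per-vertex output ``removes the need for a global pivot $w$'' is backwards. In the doubly-sparse case for a given $v$, the sparse neighborhoods are $B(t^*,\eps_{ST}(v)/2)$ and $B(v,\eps_{ST}(v)/4)$: the first is centered at $t^*$, which the algorithm does not know, and the second is centered at $v$, so building a $\sqrt m$-edge neighborhood per $v\in S$ would cost $\Theta(|S|\cdot\sqrt m)$ Dijkstra calls and blow the $\tilde O(m^{3/2})$ budget. The paper (Theorem~\ref{STecc2approx-edge}) resolves this precisely by introducing a \emph{single} global pivot $w\in T$, chosen to maximize $d(w,T_{E'})$, so that one set of $\sqrt m$ edges near $w$ suffices for all $v$ simultaneously. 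The subsequent case analysis is then carried out along the shortest $w$-to-$v$ path, not along the $v$-to-$t^*$ witness path, using the facts that $d(v,w)<\eps_{ST}(v)/2$ and that the $\sqrt m$ edges closest to $w$ (in both the $T$-sorted and $(V\setminus T)$-sorted orders) cover the crossing edge on that path. Without this global-pivot device or an equivalent, your algorithm as written does not return a valid estimate in the doubly-sparse regime.
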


\subsection{Directed Bichromatic Diameter}

\begin{theorem}\label{thm:bidirdiam}
There is an $\tilde{O}(m^{3/2})$ time algorithm, that given a directed graph $G=(V,E)$ with nonnegative integer weights and $S\subseteq V, T=V\setminus S$, can output an estimate $D'$ such that $D_{ST}(G)/2\leq D'\leq D_{ST}(G)$. 
\end{theorem}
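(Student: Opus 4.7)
The plan is to adapt the $5/3$-approximation of Theorem~\ref{thm:bichromedge} to the directed setting, losing to a factor of $2$ due to the asymmetry of directed distances. The structure is analogous: first a sampling-based bound via theme~(1); otherwise, find a proxy $w \in S$ for $s^*$ and use theme~(2) to catch a bichromatic crossing edge on the shortest $w \to t^*$ path. Crucially, the crossing edge can only be exploited from its $T$-endpoint side, which is what forces the approximation factor to be $2$ rather than $5/3$.

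Let $D = D_{ST}(G)$ and fix $s^* \in S, t^* \in T$ with $d(s^*, t^*) = D$. I would first sample $E' \subseteq E$ uniformly at random of size $\Theta(\sqrt m \ln n)$, let $Z \subseteq S$ be the $S$-endpoints of $E'$, run forward Dijkstra from every $z \in Z$, and set $D_1 = \max_{z \in Z, t \in T} d(z, t)$. If $D_1 \ge D/2$, output $D_1$; otherwise $d(z, t^*) < D/2$ for all $z \in Z$, which by the triangle inequality gives $d(s^*, Z) > D/2$.

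Next, for each $s \in S$ let $D_s^+$ be the largest integer with $d(s, Z) > D_s^+/2$, and take $w = \arg\max_{s} D_s^+$, $D^+ = D_w^+$. The previous step gives $D^+ \ge D_{s^*}^+ \ge D$. By Lemma~\ref{lemma:edgefarpoint}, since $d(w, Z) > D^+/2$, whp the set $E_w^S$ of edges of $E$ whose $S$-endpoint lies at forward distance $\le D^+/2$ from $w$ has size at most $\sqrt m$. I would then run forward Dijkstra from $w$ to obtain $D_3 = \max_{t \in T} d(w, t)$; if $D_3 \ge D/2$, output $D_3$. Otherwise, for every edge $(u, v) \in E_w^S$ whose non-$S$ endpoint $t_e$ lies in $T$, run reverse Dijkstra from $t_e$ and set $D_4 = \max_{e} \max_{s \in S} d(s, t_e)$. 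The algorithm makes $\tilde O(\sqrt m)$ Dijkstra calls, hence runs in $\tilde O(m^{3/2})$ time.

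To finish, I would argue $D_4 > D/2$ whenever $D_1, D_3 < D/2$. Consider the shortest $w \to t^*$ path (of length $d(w, t^*) < D/2$) and let $(s'', t'')$ be its first $S \to T$ crossing edge. Since $s''$ precedes $t''$ on this path, $d(w, s'') \le d(w, t^*) < D/2 \le D^+/2$, so $(s'', t'') \in E_w^S$ and we did run reverse Dijkstra from $t''$. Because $t''$ lies on the shortest $w \to t^*$ path, $d(t'', t^*) = d(w, t^*) - d(w, t'') < D/2$, and the triangle inequality yields $d(s^*, t'') \ge D - d(t'', t^*) > D/2$, so $D_4 > D/2$. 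The main obstacle is precisely this directed asymmetry: $S$-side sampling alone only controls forward neighborhoods of $w$, so we must directly certify $d(w, t^*) < D/2$ via forward Dijkstra from $w$, and we can only catch the crossing edge through its $T$-endpoint, which is why we lose the $5/3$-to-$2$ improvement.
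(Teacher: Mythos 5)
Your proof is correct and follows essentially the same route as the paper's: sample edges to control the neighborhood of a far vertex $w\in S$, run forward Dijkstra from $w$, catch the first $S\to T$ crossing edge on the shortest $w\to t^*$ path, and run reverse Dijkstra from its $T$-endpoint. The only differences are cosmetic — the paper samples only $S\to T$ crossing edges and picks $w$ to maximize $d(w,R)$ directly rather than via the $D_s^+$ bookkeeping, then takes the $\sqrt m$ closest crossing edges — but the key lemma, the role of $w$, and the final triangle-inequality chain are identical.
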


\begin{proof}
Suppose the (bichromatic) $ST$-Diameter endpoints are $s^*\in S$ and $t^*\in T$ and that the $ST$-Diameter is $D$. The algorithm does not know $D$, but we will use it in the analysis. 

{\bf (Algorithm Step 1):}
The algorithm first samples $E' \subseteq E$ of size $c \sqrt{m} \ln m$ for large enough $c$ uniformly at random from the edges which go from $S$ to $T$. Let $R$ be the set of $S$ nodes incident to these edges. Define $D_1 = \max_{u\in R,t\in T} d(u,t)$.

{\bf (Analysis Step 1):} If there exists an $s \in R$ with $d(s^*, s) \leq D/2$ then we are done as by triangle inequality $D_1 \geq d(s, t^*) \geq d(s^*, t^*)-d(s^*, s) \geq D/2$.

{\bf (Algorithm Step 2):} Let $w$ be the vertex in $S$ which maximizes $d(w, R)$. Defining the distance to an edge $(u, v)$ to be distance to $u$ we find the $\sqrt{m}$ closest edges to $w$ which cross from $S$ to $T$. Let $P$ be the set of $T$ nodes incident to these edges. Let $D_2 = \max_{s \in S, v\in P} d(s, v)$ and $D_3 = \max_{t \in T} d(w, t)$. Our estimate is $D' = \max(D_1, D_2, D_3)$.

{\bf (Analysis Step 2):} Note that all $3$ estimates are underestimates so we will just bound $D'$ from below. Suppose $D_3 \geq D/2$ then we are already done. So we can assume that $d(w, t^*) < D/2$. Let $(s, t)$ be the first edge going from $S$ to $T$ in the shortest path from $w$ to $t^*$. If $D_1 < D/2$ then by 
% an edge version of Lemma~\ref{lemma:farpoint} 
Lemma~\ref{lemma:edgefarpoint}, this edge is among the $\sqrt{m}$ closest edges to $w$. Hence $D_2 \geq d(s^*, t) \geq d(s^*, t^*)-d(t, t^*) \geq D-d(t, t^*) \geq D-d(w, t^*) \geq D/2$
\end{proof}

\section{Algorithms for $ST$-Eccentricities and Radius}\label{sec:stalgs}
All of the algorithms in this section are for undirected graphs; we later prove that the directed versions of these problems do not admit truly subquadratic algorithms with any finite approximation factor.

We do not give algorithms for $ST$-Diameter, as tight algorithms were already given in~\cite{diamstoc18}.

\subsection{$ST$-Eccentricities} We begin with a near-linear time $3$-approximation algorithm. 

\begin{proposition}\label{STecc3approx}
There is an $O(m+n\log{n})$ time algorithm, that given an undirected graph $G=(V,E)$ with nonnegative integer edge weights and $S,T\subseteq V$, can output an estimate $\eps_{ST}'(v)$ for each node $v\in S$ such that $\eps_{ST}(v)/3\le \eps_{ST}'(v) \le\eps_{ST}(v)$.
\end{proposition}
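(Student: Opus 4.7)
The plan is to adapt the classical two-sweep $3$-approximation for standard eccentricity to the $ST$-setting, being careful to choose the ``pivot'' vertices from $T$ rather than from $V$.

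The algorithm proceeds as follows. Pick an arbitrary vertex $t_0 \in T$ and run Dijkstra's algorithm from $t_0$ to obtain $d(t_0, \cdot)$. Let $w := \arg\max_{t \in T} d(t_0, t)$, i.e.\ a vertex of $T$ that is farthest from $t_0$ \emph{among vertices of $T$}. Run Dijkstra's algorithm from $w$ to obtain $d(w,\cdot)$. For each $v \in S$, output
\[
\eps_{ST}'(v) \;:=\; \max\bigl(d(v,t_0),\,d(v,w)\bigr).
\]
The total runtime is that of two Dijkstra calls, i.e.\ $O(m + n \log n)$.

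The upper bound $\eps_{ST}'(v) \leq \eps_{ST}(v)$ is immediate, since $t_0, w \in T$ and $\eps_{ST}(v) = \max_{t \in T} d(v,t)$ dominates both $d(v,t_0)$ and $d(v,w)$.

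For the lower bound $\eps_{ST}'(v) \geq \eps_{ST}(v)/3$, fix $v \in S$ and let $t^\star \in T$ realize $\eps_{ST}(v) = d(v, t^\star)$. I would split into two cases depending on $d(t_0, v)$. If $d(t_0, v) \geq \eps_{ST}(v)/3$, then already $\eps_{ST}'(v) \geq d(v, t_0) \geq \eps_{ST}(v)/3$. Otherwise $d(t_0, v) < \eps_{ST}(v)/3$, and by the triangle inequality
\[
d(t_0, t^\star) \;\geq\; d(v, t^\star) - d(t_0, v) \;>\; \eps_{ST}(v) - \eps_{ST}(v)/3 \;=\; 2\eps_{ST}(v)/3.
\]
Since $w$ is the farthest vertex of $T$ from $t_0$ and $t^\star \in T$, we get $d(t_0, w) \geq d(t_0, t^\star) > 2\eps_{ST}(v)/3$. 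One more application of the triangle inequality yields
\[
d(v, w) \;\geq\; d(t_0, w) - d(t_0, v) \;>\; 2\eps_{ST}(v)/3 - \eps_{ST}(v)/3 \;=\; \eps_{ST}(v)/3,
\]
so again $\eps_{ST}'(v) \geq \eps_{ST}(v)/3$.

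The proof is essentially routine; the only point requiring care is the choice of $w$ as a maximizer over $T$ (not over all of $V$), which is exactly what lets us apply $d(t_0, w) \geq d(t_0, t^\star)$ in the lower-bound step. No hard obstacle is expected.
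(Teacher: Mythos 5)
Your proposal is correct and uses the identical algorithm to the paper's proof (two Dijkstra runs: first from an arbitrary $t_0\in T$, then from the vertex $w\in T$ farthest from $t_0$, outputting the max of the two distances). The only difference is presentation: you organize the lower-bound argument as a two-case analysis on whether $d(t_0,v)\geq \eps_{ST}(v)/3$, while the paper proves the same bound via a single chain of triangle inequalities, $\eps_{ST}(v)=d(v,v')\le d(v,t_0)+d(t_0,v')\le d(v,t_0)+d(t_0,w)\le 2d(v,t_0)+d(v,w)\le 3\eps_{ST}'(v)$; both hinge on the same observation that $w$ maximizes distance from $t_0$ \emph{over} $T$ so that $d(t_0,w)\ge d(t_0,v')$.
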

% If eps(v) is divisible by 2 or by 4?

\begin{proof}
The algorithm is as follows. Pick an arbitrary node $t\in T$ and run Dijkstra's algorithm from it. Let $t'$ be a node in $T$ maximizing $d(t',t)$, and run Dijkstra's algorithm from $t'$. For each $v\in S$, output $\eps_{ST}'(v)=\max\{d(v,t),d(v,t')\}$.

Clearly $\eps_{ST}'(v)\le \eps_{ST}(v)$. Now suppose that $v'\in T$ is the farthest node from $v$ in $T$. So we have $\eps_{ST}(v) = d(v,v') \le d(v,t)+d(t,v') \le d(v,t)+d(t,t') \le d(v,t)+d(t,v)+d(v,t') \le 3\eps_{ST}'(v)$, where the first and third inequalities are from triangle inequality and the second inequality is from the definition of $t'$. 
\end{proof}

Now we turn to our $2$-approximation algorithms.
Our first theorem is for unweighted graphs. Later on, we modify the algorithm in this theorem to obtain an algorithm for weighted graphs as well, and at the same time remove the small additive error that appears in the theorem below.

\begin{theorem}\label{STecc2approx}
There is an $\tilde{O}(m\sqrt n)$ time algorithm, that given an undirected unweighted graph $G=(V,E)$ and $S,T\subseteq V$, can output an estimate $\eps_{ST}'(v)$ for each $v\in S$ such that $\eps_{ST}(v)/2-5/2\le \eps_{ST}'(v)\le \eps_{ST}(v)$. If $\eps_{ST}(v)$ is divisible by $2$, $\eps_{ST}(v)/2-2\le \eps_{ST}'(v)\le \eps_{ST}(v)$.
\end{theorem}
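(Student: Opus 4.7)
The plan is to adapt the Roditty--Vassilevska W.\ framework for subquadratic eccentricity approximation to the $ST$-setting. For each $v \in S$ we will output $\eps'_{ST}(v)$ as the maximum of $d(v,x)$ over all BFS-source vertices $x$ we run; since every source will lie in $T$, this automatically gives $\eps'_{ST}(v) \le \eps_{ST}(v)$. The challenge is to choose a set of $O(\sqrt n)$ BFS sources so that for every $v \in S$, at least one source is at distance $\ge \eps_{ST}(v)/2 - O(1)$ from $v$.

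\textbf{Algorithm.} Step~1 (random sample): Sample $T_1 \subseteq T$ of size $\Theta(\sqrt n \log n)$ uniformly at random and run BFS from each $t_1 \in T_1$, setting $D_1(v) := \max_{t_1 \in T_1} d(v,t_1)$ for every $v \in S$. This takes $\tilde O(m\sqrt n)$ time. Step~2 (adaptive phase): Using the $T_1$-distances, compute $d(t,T_1)$ for every $t \in T$, let $w := \arg\max_{t \in T} d(t,T_1)$, let $N$ be the $\sqrt n$ closest $T$-vertices to $w$, and run BFS from each vertex of $N \cup \{w\}$; set $D_2(v) := \max_{x \in N \cup \{w\}} d(v,x)$. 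By Lemma~\ref{lemma:farpoint}, with high probability $N$ contains every $T$-vertex at distance less than $d(w,T_1)$ from $w$. Step~3 (output): Return $\eps'_{ST}(v) := \max(D_1(v), D_2(v))$.

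\textbf{Analysis.} Fix $v \in S$ and let $t^*(v) \in T$ be any farthest $T$-vertex from $v$. If some $t_1 \in T_1$ satisfies $d(t_1, t^*(v)) \le \eps_{ST}(v)/2$, then $D_1(v) \ge d(v,t_1) \ge \eps_{ST}(v) - \eps_{ST}(v)/2 = \eps_{ST}(v)/2$ by the triangle inequality. Otherwise $d(t^*(v), T_1) > \eps_{ST}(v)/2$, and the maximality of $w$ forces $d(w, T_1) > \eps_{ST}(v)/2$. By Lemma~\ref{lemma:farpoint}, either $t^*(v) \in N$---giving $D_2(v) \ge \eps_{ST}(v)$---or $d(w, t^*(v)) \ge d(w, T_1)$. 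In the latter subcase we use the facts that $v$ must be close to $T_1$ (since $D_1(v) < \eps_{ST}(v)/2$ implies $d(v,T_1) < \eps_{ST}(v)/2$) while $w$ is far from $T_1$: a careful combination of triangle inequalities involving $v$, the closest $T_1$-vertex to $v$, $w$, and $t^*(v)$ yields $\max(D_1(v), d(v,w)) \ge \eps_{ST}(v)/2 - O(1)$. The additive error of $5/2$ comes from integer rounding when $\eps_{ST}(v)$ is odd (e.g.\ $\lceil \eps_{ST}(v)/2 \rceil$ versus $\eps_{ST}(v)/2$ across multiple triangle-inequality applications); when $\eps_{ST}(v)$ is divisible by $2$ the rounding losses shrink to $-2$.

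\textbf{Main obstacle.} The crux of the proof is the final subcase, where $t^*(v)$ escapes both the random sample's $\eps_{ST}(v)/2$-neighborhood and the adaptive set $N$. Making the triangle-inequality argument tight enough to yield the full $\eps_{ST}(v)/2 - O(1)$ bound---rather than a looser factor---requires exploiting the global maximality of $d(w,T_1)$ together with the structural fact that $v$ is forced to lie close to $T_1$ in the failing case. Tracking the rounding carefully in the unweighted metric across these applications is what produces the small $5/2$ additive slack, and the slightly better $-2$ slack when $\eps_{ST}(v)$ is even.
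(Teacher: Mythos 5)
Your proposal has a genuine gap, and it stems from a structural issue: you only run BFS from $T$-vertices and only use $d(v,x)$ as a lower bound on $\eps_{ST}(v)$. The paper's proof crucially also runs BFS from $V\setminus T$ vertices near $w$ and uses the estimate $e(y) - d(v,y)$, where $e(y) := \max_{t\in T} d(y,t)$; this is a valid lower bound because if $t'$ attains $e(y)$ then $\eps_{ST}(v) \ge d(v,t') \ge d(y,t') - d(v,y)$. Without this trick the algorithm cannot ``see'' the portion of the $w$-to-$v$ shortest path that lies in $V\setminus T$, and no combination of triangle inequalities can recover the factor $2$.

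Concretely, the claimed bound $\max(D_1(v),d(v,w)) \ge \eps_{ST}(v)/2 - O(1)$ is false. From $D_1(v)<\eps_{ST}(v)/2$ you can derive $d(t^*(v),T_1)\ge \eps_{ST}(v)-D_1(v)$, hence $d(w,T_1)\ge \eps_{ST}(v)-D_1(v)$, hence $d(v,w)\ge d(w,T_1)-d(v,T_1)\ge \eps_{ST}(v)-D_1(v)-d(v,T_1)\ge \eps_{ST}(v)-2D_1(v)$, which only yields $\max(D_1(v),d(v,w))\ge \eps_{ST}(v)/3$, and this is tight. Take $v\notin T$ at the center of a ``star'' of three arms of pure $S$-vertices: one arm of length $\eps/3$ ending at a large $T$-clique $K$; a second arm of length $\eps/3$ ending at a single $T$-vertex $w_0$; a third arm of length $\eps$ ending at $t^*$, where $t^*$ is additionally joined to $K$ by a path of length $2\eps/3$ (so $d(v,t^*)=\eps$ still). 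Whp $T_1\subseteq K$, so $D_1(v)=\eps/3$; $d(w_0,T_1)=d(t^*,T_1)=2\eps/3$, and if the argmax tie-breaks to $w_0$ then $w=w_0$ with $d(v,w)=\eps/3$; and every $T$-vertex within distance $<d(w,T_1)=2\eps/3$ of $w$ is in $K$, so $N\subseteq K$ and $D_2(v)=\eps/3$. Your output is $\eps/3$ against $\eps_{ST}(v)=\eps$, a factor-$3$ error with no $O(1)$ slack. The paper's algorithm handles exactly this case in its Steps 2--3: it takes the closest $\sqrt n$ vertices of $V\setminus T$ to $w$ (which here include $v$ itself, since $d(w,v)=\eps/3$ with only $S$-vertices in between), and the estimate $e(v)-d(v,v)=\eps$ is exact; the further case analysis with the transition edge $(t',y')$ between a $T$-vertex and a $V\setminus T$-vertex on the $w$-to-$v$ path is what makes the approach degrade gracefully rather than collapse to factor $3$. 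Finally, a smaller issue: you sample $T_1$ only from $T$, whereas the paper samples $X$ from all of $V$ and then projects to $T$ via $t(x)$; this projection is what lets the paper bound both the $T$-vertices and the $V\setminus T$-vertices near $w$ simultaneously (it derives $d(w,X)>\eps_{ST}(v)/4$ from $d(w,T_X)>\eps_{ST}(v)/2$), and your scheme loses that control over $V\setminus T$.
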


\begin{proof}
For each $v\in S$, let $v'$ be the farthest node from $v$, i.e. $d(v,v')=\eps_{ST}(v)$. 

{\bf (Algorithm Step 1):} The algorithm samples $X\subset V$ of size $c\sqrt{n}\ln{n}$ uniformly at random. For every $x\in X$, run BFS and find $t(x)\in T$ which is closest to $x$ (if $x\in T$, $t(x)=x$). Let $T_X=\{t(x)|x\in X\}$.

Run BFS from each node $t\in T_X$. For each $v\in S$ let $\eps^{(1)}_{ST}(v) = \max_{t\in T_X}{d(v,t)}$.

Let $w\in T$ be the node maximizing $d(w,T_X)$. 

{\bf (Analysis Step 1):} This step of the algorithm runs in $\tilde{O}(m\sqrt{n})$.

Suppose there is some node $t\in T_X$ such that $d(v',t) \le \eps_{ST}(v)/2$. Then $d(v,t) \ge d(v,v')-d(v',t)\ge \eps_{ST}(v)/2$, and so $\eps^{(1)}_{ST}(v) $ is a good approximation for $v$. Thus we can assume that $d(v',T_X)>\eps_{ST}(v)/2$, and so $d(w,T_X)>\eps_{ST}(v)/2$. Now since $X$ is random of size $c\sqrt{n}\ln{n}$, by Lemma \ref{lemma:farpoint}, the number of nodes of $T$ at distance $\le \eps_{ST}(v)/2$ from $w$ is at most $\sqrt{n}$ whp.

Moreover, suppose that there is some node $x\in X$ such that $d(w,x)\le \eps_{ST}(v)/4$. Then $d(w,t(x))\le d(w,x)+d(x,t(x))\le2d(w,x)\le\eps_{ST}(v)/2$, contradicting the fact that $d(w,T_X)>\eps_{ST}(v)/2$. Thus, we must have that $d(w,X)>\eps_{ST}(v)/4$. 

Now, since $X$ is random of size $c\sqrt{n}\ln{n}$, by Lemma \ref{lemma:farpoint}, the number of nodes at distance $\le \eps_{ST}(v)/4$ from $w$ is at most $\sqrt{n}$ whp.  

{\bf (Algorithm Step 2):} Run BFS from $w$. For each $v\in S$, let $\eps^{(2)}_{ST}(v) = d(v,w)$.

Take the closest $\sqrt{n}$ nodes of $V\setminus T$ to $w$. Call these $Y$. Run BFS from all $y\in Y$, and let $e(y)=\max_{t\in T} d(y,t)$. Let $\eps^{(3)}_{ST}(v) = \max_{y\in Y}e(y)-d(v,y)$.

{\bf (Analysis Step 2):} If $d(v,w)\ge \eps_{ST}(v)/2$, then $\eps^{(2)}_{ST}(v)$ is a good estimate. So assume that $d(v,w)\le \lceil \eps_{ST}/2 \rceil - 1\le \eps_{ST}/2-1/2$.

Now consider the node $a$ on the shortest $w$ to $v$ path $P_{wv}$ for which $d(w,a)\le \eps_{ST}(v)/4$, but such that the node $a'$ after it on $P_{wv}$ has $d(w,a')>\eps_{ST}(v)/4$. Since the graph is unweighted, we get that $d(w,a)=\lfloor \eps_{ST}(v)/4 \rfloor\ge \eps_{ST}(v)/4-3/4$. 

If $a\in V\setminus T$, then by the previous argument since $d(a,w)\le \eps_{ST}(v)/4$, $a\in Y$ and we run BFS from $a$. Since $e(a)\ge d(a,v')\ge d(v,v')-d(a,v)$ and $d(a,v)=d(w,v)-d(a,w)$, we have $e(a) \ge 3\eps_{ST}(v)/4-1/4$. So $e(a)-d(v,a) \ge \eps_{ST}(v)/2-1/2$. Moreover, if $a'$ is the farthest node from $a$ in $T$, then $\eps_{ST}(v)\ge d(v,a')\ge d(a,a')-d(v,a)=e(a)-d(v,a)$, and hence $\eps^{(3)}_{ST}(v)$ is a good estimate. 

So assume that $a\in T$.

{\bf (Algorithm Step 3):} Take the closest $\sqrt{n}$ nodes of $T$ to $w$. Call these $T_w$. Run BFS from all $t\in T_w$ and find $y(t)\in V\setminus T$. Run BFS from each $y(t)$, and let $e(y(t)) = \max_{t'\in T} d(y(t),t')$. Let $\eps^{(4)}_{ST}(v) = \max_{t\in T_w}e(y(t))-d(v,y(t))$.

{\bf (Analysis Step 3):}
Consider the node $b$ on $P_{wv}$ for which $d(w,b)\le 3\eps_{ST}(v)/8$, but such that the node $b'$ after it on $P_{wv}$ has $d(w,b')>3\eps_{ST}(v)/8$. Since the graph is unweighted, we get that $d(w,b)=\lfloor 3\eps_{ST}(v)/8 \rfloor\ge 3\eps_{ST}(v)/8-7/8$. 

If $b\in T$, then since $d(w,b)\le\eps_{ST}(v)/2$, by previous argument $b\in T_w$ and we run BFS from $b$. Since $d(v,b)=d(w,v)-d(w,b) \le \eps_{ST}(v)/8 +3/8$, we have that $d(v,y(b))\le d(v,b)+d(b,y(b)) \le 2d(v,b)\le \eps_{ST}(v)/4 + 3/4$. Similar to the previous step, we get that $e(y(b))-d(v,y(b)) \ge d(y(b),v')-d(v,y(b)) \ge d(v,v')-2d(v,y(b)) \ge \eps_{ST}(v)/2-3/2$. By considering the farthest node from $y(b)$ in $T$, we can show that $e(y(b))-d(v,y(b))\le \eps_{ST}(v)$ and hence $\eps^{(4)}_{ST}(v)$ is a good approximate. So if $\eps^{(4)}_{ST}(v)$ is not a good approximate, it must be that $b\in V\setminus T$.

Now, since $a\in T$ and $b\in V\setminus T$, somewhere on the $a$ to $b$ shortest path $P_{ab}$, there must be an edge $(t',y')$ with $t'\in T$ and $y'\in V\setminus T$. However, since $t'$ is on $P_{wv}$, we have $d(w,t')\le d(v,w)<\eps_{ST}(v)/2$. Thus, $t'\in T_w$ and we run BFS from $y(t')$. However, $d(t', y(t'))\le d(t',y') =1$, and hence $d(y(t'),v)\le d(y(t'),t')+d(t',v)\le 1+ d(w,v)-d(w,t') \le 1+ \eps_{ST}(v)/2-1/2-d(w,a) \le \eps_{ST}(v)/4 + 5/4$. So we get that 
\begin{equation*}
    e(y(t'))-d(y(t'),v)\ge d(y(t'),v')-d(y(t'),v)\ge d(v,v')-2d(y(t'),v) \ge \eps_{ST}(v)/2-5/2
\end{equation*}

Moreover, if $y'$ is the farthest node $y(t')$ in $T$, then $\eps_{ST}(v)\ge d(v,y')\ge d(y',y(t'))-d(v,y(t'))=e(y(t'))-d(v,y(t'))$.
Hence if for each $v\in S$ we set $\eps_{ST}'(v) = \max\{\eps^{(1)}_{ST}(v),\eps^{(2)}_{ST}(v),\eps^{(3)}_{ST}(v),\eps^{(4)}_{ST}(v)\}$, we have $\eps_{ST}(v)/2-5/2\le \eps_{ST}'(v)\le \eps_{ST}(v)$.
\end{proof}

We now use edge sampling to remove the additive error from the above algorithm and make the algorithm work for weighted graphs as well, at the expense of increasing the runtime to $\tilde{O}(m^{3/2})$.

\begin{theorem}\label{STecc2approx-edge}
There is an $\tilde{O}(m^{3/2})$ time algorithm, that given an undirected graph $G=(V,E)$ with nonnegative integer edge weights and $S,T\subseteq V$, can output estimates $\eps_{ST}'(v)$ for each $v\in S$, such that $\eps_{ST}(v)/2\le \eps_{ST}'(v)\leq \eps_{ST}(v)$.
\end{theorem}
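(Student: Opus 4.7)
The plan is to mirror the structure of the unweighted proof (Theorem~\ref{STecc2approx}), substituting vertex sampling with edge sampling via Lemma~\ref{lemma:edgefarpoint} and every BFS with a Dijkstra call. This is exactly the transformation used in going from the unweighted Bichromatic Diameter algorithm to the weighted one in Theorem~\ref{thm:bichromedge}: it both extends the algorithm to arbitrary nonnegative integer weights and removes the additive error, at the cost of pushing the runtime from $\tilde O(m\sqrt n)$ up to $\tilde O(m^{3/2})$.

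Concretely, I would sample $E'\subseteq E$ of size $c\sqrt m\ln n$ uniformly at random and let $X$ denote the endpoints of $E'$. For each $x\in X$ I run Dijkstra, determine a closest $T$-vertex $t(x)$, and form $T_X=\{t(x):x\in X\}$ (noting $X\cap T\subseteq T_X$). Running Dijkstra from every $t\in T_X$ yields $\eps^{(1)}(v)=\max_{t\in T_X}d(v,t)$. I pick $w\in T$ maximizing $d(w,T_X)$, run Dijkstra from $w$, and set $\eps^{(2)}(v)=d(v,w)$. If $\eps^{(1)}(v)\geq\eps_{ST}(v)/2$ or $\eps^{(2)}(v)\geq\eps_{ST}(v)/2$ we are finished for $v$; in the remaining analysis I may thus assume $d(v,w)<\eps_{ST}(v)/2$, $d(w,T_X)>\eps_{ST}(v)/2$, and the same triangle-inequality argument used in Theorem~\ref{STecc2approx} gives $d(w,X)>\eps_{ST}(v)/4$. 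Applying Lemma~\ref{lemma:edgefarpoint} with $W=T$ and $W=V\setminus T$ then yields, with high probability, that $G$ has at most $\sqrt m$ edges whose $T$-endpoint is within distance $\eps_{ST}(v)/2$ of $w$ and at most $\sqrt m$ edges whose $(V\setminus T)$-endpoint is within distance $\eps_{ST}(v)/4$ of $w$.

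Next I would take $E^Y$, the $\sqrt m$ edges of $G$ minimizing the distance from $w$ to their $(V\setminus T)$-endpoint (whose $(V\setminus T)$-endpoints form a set $Y$), and $E^T$, the $\sqrt m$ edges minimizing the distance to their $T$-endpoint (whose $T$-endpoints form $T_w$), and run Dijkstra from both endpoints of every edge in $E^Y\cup E^T$. For each $y\in Y$ I additionally run Dijkstra from $y$ and compute $e(y)=\max_{t\in T}d(y,t)$, setting $\eps^{(3)}(v)=\max_{y\in Y}(e(y)-d(v,y))$. For each $t\in T_w$ I find the closest $(V\setminus T)$-neighbor $y(t)$ via the Dijkstra from $t$, run Dijkstra from $y(t)$ to get $e(y(t))$, and set $\eps^{(4)}(v)=\max_{t\in T_w}(e(y(t))-d(v,y(t)))$. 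The output is $\eps'(v)=\max\{\eps^{(1)}(v),\eps^{(2)}(v),\eps^{(3)}(v),\eps^{(4)}(v)\}$, and the upper bound $\eps'(v)\leq\eps_{ST}(v)$ follows by the triangle inequality in each case, since every estimate is at most $d(v,t')$ for some $t'\in T$.

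For the lower bound $\eps'(v)\geq\eps_{ST}(v)/2$, I would follow the case analysis of Theorem~\ref{STecc2approx}. Let $v'\in T$ attain $d(v,v')=\eps_{ST}(v)$, let $P_{wv}$ be a shortest $w$-to-$v$ path, and consider the edges $(a,a')$ and $(b,b')$ on $P_{wv}$ with $d(w,a)\leq\eps_{ST}(v)/4<d(w,a')$ and $d(w,b)\leq\eps_{ST}(v)/2<d(w,b')$. The endpoint-sampling bounds from Step~1 guarantee $(a,a')\in E^Y$ whenever $a\in V\setminus T$ and $(b,b')\in E^T$ whenever $b\in T$, so in each branch we have Dijkstra runs from the relevant ``just-after'' vertex on $P_{wv}$. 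Branching on whether $a$ and $b$ lie in $T$ or $V\setminus T$, and using the boundary $T$-to-$(V\setminus T)$ edge on $P_{ab}$ in the mixed case exactly as in the unweighted proof, yields in each branch either $\eps^{(3)}(v)\geq\eps_{ST}(v)/2$ or $\eps^{(4)}(v)\geq\eps_{ST}(v)/2$.

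The main obstacle is this case analysis in the weighted setting: a single edge can shift a vertex's distance to $w$ by an arbitrary amount across a threshold, so the ``just-before'' and ``just-after'' vertices on $P_{wv}$ may be far apart and the clean additive bookkeeping of the unweighted proof breaks. The fix, identical to the one used in Theorems~\ref{thm:bichromedge} and~\ref{thm:birad}, is to run Dijkstra from \emph{both} endpoints of every sampled boundary edge: this supplies tight distance information for the vertex whose distance to $w$ only barely exceeds each threshold and eliminates the additive slack. The runtime is dominated by $\tilde O(\sqrt m)$ Dijkstra calls (from the vertices of $X$, of $T_X$, from $w$, from both endpoints of $E^Y\cup E^T$, from each $y\in Y$, and from each $y(t)$), each costing $\tilde O(m)$, giving $\tilde O(m^{3/2})$ in total.
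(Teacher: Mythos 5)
Your high-level plan --- edge sampling, picking $w\in T$ maximizing $d(w,T_X)$, invoking Lemma~\ref{lemma:edgefarpoint} to bound the number of near-$w$ edges, and closing via a boundary edge on $P_{wv}$ after running Dijkstra from both endpoints of the sampled edges --- does match the paper's. But two concrete steps in your writeup do not close, and they are exactly the places where the weighted proof must deviate from the unweighted one you say you are mirroring. First, your second threshold is too large: you ask for $b,b'$ on $P_{wv}$ with $d(w,b)\leq\eps_{ST}(v)/2<d(w,b')$, but you have already reduced to the regime $d(w,v)<\eps_{ST}(v)/2$, so no such $b'$ exists on $P_{wv}$ and the case analysis cannot even be set up. The paper (in both Theorem~\ref{STecc2approx} and Theorem~\ref{STecc2approx-edge}) places $b$ at $3\eps_{ST}(v)/8$, which is what makes $b'$ exist and gives the needed bound $d(v,b')<\eps_{ST}(v)/8$.

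Second, and more substantively, your estimates never actually consume the Dijkstra runs from the $V\setminus T$ endpoints of $E^T$, so the ``run from both endpoints'' fix is stated but not deployed. You set $\eps^{(4)}(v)=\max_{t\in T_w}(e(y(t))-d(v,y(t)))$ over $T$-endpoints only, whereas the paper takes the max over all of $V(E^T)$. In the decisive mixed case $a'\in T$, $b'\in V\setminus T$, the boundary edge $(t,x)$ on $P_{a'b'}$ produces $x\in V\setminus T\cap V(E^T)$ with $d(w,x)>\eps_{ST}(v)/4$, hence $d(v,x)<\eps_{ST}(v)/4$ and $y(x)=x$ gives $e(x)-d(v,x)\geq\eps_{ST}(v)/2$. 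Going through $y(t)$ instead only bounds $d(v,y(t))\le d(v,t)+w(t,x)$, which can approach $\eps_{ST}(v)/2$, reintroducing exactly the additive slack the trick is meant to kill. The same problem hits the sub-case $a\in T$, $a'\in V\setminus T$: the edge $(a,a')$ lies in your $E^T$ but not in your $E^Y$ (its $V\setminus T$ endpoint $a'$ is already past $\eps_{ST}(v)/4$), so $a'$ appears in neither $\eps^{(3)}$ nor $\eps^{(4)}$. The paper avoids this by defining the Step~2 edge set by the distance of the \emph{closer} endpoint (of either type) and by letting $\eps^{(4)}$ range over \emph{every} endpoint from which Dijkstra was run.
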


\begin{proof}
For each $v\in S$, let $v'$ be the farthest node from $v$, i.e. $d(v,v')=\eps_{ST}(v)$.

{\bf (Algorithm Step 1):} We sample $c\sqrt{m}\ln{n}$ edges $E'\subseteq E$ uniformly at random. Run Dijkstra from both endpoints of edges in $E'$ (we call these vertices $V(E')$), and for each endpoint $x$, find $t(x)\in T$ which is closest to $x$. Let $T_{E'} = \{t(x)| x\in V(E')\}$.

Run Dijkstra from each node in $T_{E'}$, and for each $v\in S$, let $\eps_{ST}^{(1)}(v) = d(v,T_{E'})$. 

Let $w\in T$ be the node maximizing $d(w,T_{E'})$.

{\bf (Analysis Step 1):} Since $V(E')=\tilde{O}(\sqrt{m})=|T_{E'}|$, this step takes $\tilde{O}(m^{3/2})$ time.

Suppose there is some node $t\in T_{E'}$ such that $d(v',t)\le \eps_{ST}(v)/2$. Then $d(v,t) \ge d(v,v')-d(v',t)\ge \eps_{ST}(v)/2$, and so $\eps^{(1)}_{ST}(v) $ is a good approximation for $\eps_{ST}(v)$. Thus we can assume that $d(v',T_{E'})>\eps_{ST}(v)/2$, and so $d(w,T_{E'})>\eps_{ST}(v)/2$. Now since $E'$ is random of size $c\sqrt{m}\ln{n}$, by 
%an edge version of Lemma \ref{lemma:farpoint}
Lemma~\ref{lemma:edgefarpoint}, the number of edges $(t,g)$ where $t\in T, g \in V$ and $d(w,t)\le \eps_{ST}(v)/2$ is at most $\sqrt{m}$, whp.

Moreover, suppose that there is some edge $(x,b)\in E'$ such that $d(w,x)\le \eps_{ST}(v)/4$. Then $d(w,t(x))\le d(w,x)+d(x,t(x))\le2d(w,x)\le\eps_{ST}(v)/2$, contradicting the fact that $d(w,T_{E'})>\eps_{ST}(v)/2$. Thus, we must have that $d(w,V(E'))>\eps_{ST}(v)/4$. 

Now, since $E'$ is random of size $c\sqrt{n}\ln{n}$, by Lemma \ref{lemma:edgefarpoint}, the number of edges $(x,g)\in E'$ where $g \in V$ such that $d(w,x)\le\eps_{ST}(v)/4 $ is at most $\sqrt{m}$, whp. 

{\bf (Algorithm Step 2):} Run Dijkstra from $w$. For each $v\in S$, let $\eps^{(2)}_{ST}(v) = d(v,w)$.

Consider the edges $(y,b)$ sorted in nondecreasing order according to $d(w,y)$. Let $E''$ be the first $\sqrt{m}$ edges in this sorted order. Let $Y$ be the endpoints of edges in $E''$ that are in $V\setminus T$. Run Dijkstra from each node in $Y$ and let $e(y)=\max_{t\in T} d(y,t)$. Let $\eps^{(3)}_{ST}(v) = \max_{y\in Y}e(y)-d(v,y)$.

{\bf (Analysis Step 2):} Since $|Y|=\tilde{O}(\sqrt{m})$, this step takes $\tilde{O}(m^{3/2})$ time.

If $d(v,w)\ge \eps_{ST}(v)/2$, then $\eps^{(2)}_{ST}(v)$ is a good approximation. So assume that $d(v,w)< \eps_{ST}(v)/2$.

Now consider the node $a$ on the shortest $w$ to $v$ path $P_{wv}$ for which $d(w,a)\le \eps_{ST}(v)/4$, but such that the node $a'$ after it on $P_{wv}$ has $d(w,a')>\eps_{ST}(v)/4$. 

Since $d(w,a)\le \eps_{ST}(v)/4$, by the previous argument the number of edges from the nodes at distance $\eps_{ST}(v)/4$ from $w$ is at most $\sqrt{m}$, and so $(a,a')$ must be among the edges in $E''$. Suppose that $a'\in V\setminus T$. We thus run Dijkstra from $a'$.

Let us consider $d(a',v) = d(w,v)-d(w,a')$. Since $d(w,a')>\eps_{ST}(v)/4$ and $d(w,v)<\eps_{ST}(v)/2$, $d(a',v)<\eps_{ST}(v)/4$. Thus $e(a') \ge d(a',v') \ge d(v,v')-d(a',v) > 3\eps_{ST}(v)/4$. So $e(a')-d(a',v)>\eps_{ST}(v)/2$. Now if $a''$ is the farthest node from $a'$ in $T$, then $\eps_{ST}(v) \ge d(v,a'') \ge d(a',a'')-d(v,a') = e(a')-d(v,a')$, and hence $\eps^{(3)}_{ST}(v)$ is a good approximation. 

So we assume that $a'\in T$.

{\bf (Algorithm Step 3):} Consider the edges $(t,b)$ with $t\in T$ sorted in nondecreasing order according to $d(w,t)$. Let $E^T$ be the first $\sqrt{m}$ edges in this sorted order. Run Dijkstra from both endpoints of each edge in $E^T$ (call these nodes $V(E^T)$), and find $y(x)\in V\setminus T$ closest to $x$, for each $x\in V(E^T)$. Run Dijkstra from each $y(x)$, and let $e(y(x)) = \max_{t\in T} d(y(x),t)$. Let $\eps^{(4)}_{ST}(v) = \max_{x\in V(E^T)}e(y(x))-d(v,y(x))$.

{\bf (Analysis Step 3):}

Consider the node $b$ on $P_{wv}$ for which $d(w,b)\le 3\eps_{ST}(v)/8$, but such that the node $b'$ after it on $P_{wv}$ has $d(w,b')>3\eps_{ST}(v)/8$. 

Suppose that $b'\in T$, then since $d(w,b)\le\eps_{ST}(v)/2$, by the previous argument $(b,b')\in E^T$ and we run Dijkstra from $y(b')$. Let us consider $d(y(b'),v)\le d(v,b') +d(b',y(b'))\le 2d(v,b')$. Since $d(v,b') = d(v,w)-d(w,b') < \eps_{ST}(v)/8$, $d(y(b'),v)< \eps_{ST}(v)/4$. Similar as in the previous step, we get that $\eps_{ST}(v)\ge e(y(b'))-d(y(b'),v)$ and also $e(y(b'))-d(y(b'),v) \ge d(y(b'),v')-d(y(b'),v) \ge d(v,v')-2d(y(b'),v) > \eps_{ST}(v)/2$, thus $\eps^{(4)}_{ST}(v)$ is a good approximation. So if $\eps^{(4)}_{ST}(v)$ is not a good approximation, it must be that $b'\in V\setminus T$.

Now, since $a'\in T$ and $b'\in V\setminus T$, somewhere on the $a'$ to $b'$ shortest path $P_{a'b'}$, there must be an edge $(t,x)$ with $t\in T$ and $x\in V\setminus T$. However, since $t$ is on $P_{wv}$, we have $d(w,t)\le d(v,w)<\eps_{ST}(v)/2$. Thus, $(t,x)\in E^T$ and we run Dijkstra from $x$. 

Let us consider $y(x)$. Since $x\in V\setminus T$, $y(x)=x$. Moreover since $x$ is after $a'$ on $P_{a'b'}$, $d(w,x)\ge d(w,a') > \eps_{ST}(v)/4$, and thus $d(x,v)=d(v,w)-d(w,x) < \eps_{ST}(v)/4$. So $e(y(x))-d(y(x),v)=e(x)-d(x,v) \ge d(x,v')-d(x,v) \ge d(v,v')-2d(x,v) > \eps_{ST}(v)/2$. 

Hence if for each $v\in S$ we set $\eps_{ST}'(v) = \max\{\eps^{(1)}_{ST}(v),\eps^{(2)}_{ST}(v),\eps^{(3)}_{ST}(v),\eps^{(4)}_{ST}(v)\}$, we have $\eps_{ST}(v)/2\le \eps_{ST}'(v)\le \eps_{ST}(v)$.
\end{proof}

\subsection{$ST$-Radius}\label{sec:radiustrick} %In this part we provide upper bounds on ST-radius obtained by taking the minimum over the ST-eccentricity of all the vertices in $S$.

A simple argument shows that given any approximation algorithm for $ST$-Eccentricities, one obtains an approximation algorithm for $ST$-Radius with the same approximation factor. First, run the $ST$-Eccentricities algorithm and let $v$ be the vertex with the smallest estimated Eccentricity $\epsilon'(v)$. Then run Dijkstra's algorithm from $v$ and report $\epsilon_{ST}(v)$ as the $ST$-Radius estimate $R'$. Let $R$ be the true $ST$-Radius of the graph and let $c$ be the true $ST$-center. If $\alpha$ is the approximation ratio for the $ST$-Eccentricities algorithm then $\epsilon_{ST}(v)\leq \alpha\epsilon'(v)\leq \alpha\epsilon_{ST}(v)$ and $\epsilon_{ST}(c)\leq \alpha\epsilon'(c)\leq \alpha\epsilon_{ST}(c)$. By choice of $v$, $\epsilon'(v)\leq \epsilon'(c)$. Thus, $\alpha R=\alpha\epsilon_{ST}(c)\geq \alpha\epsilon'(c)\geq \alpha\epsilon'(v)\geq \epsilon_{ST}(v)=R'$. Clearly $R'\geq R$, so $R\leq R'\leq \alpha R$.

Thus, we get the following theorems from our algorithms for $ST$-Eccentricities.
\begin{theorem}
There is an $O(m+n\log{n})$ time algorithm, that given an undirected graph $G=(V,E)$ with nonnegative integer edge weights and $S,T\subseteq V$, can output an estimate $R'$ such that $R_{ST}/3\le R' \le R_{ST}$.
\end{theorem}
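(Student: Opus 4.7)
The plan is to combine Proposition~\ref{STecc3approx} with the general reduction described in Section~\ref{sec:radiustrick}. Specifically, I would first run the near-linear-time $3$-approximation algorithm for $ST$-Eccentricities on $(G,S,T)$, obtaining estimates $\epsilon'(u)$ for every $u\in S$ with $\epsilon_{ST}(u)/3 \leq \epsilon'(u)\leq \epsilon_{ST}(u)$. Then I would select $v := \arg\min_{u\in S}\epsilon'(u)$ and run a single Dijkstra computation from $v$ to obtain $\epsilon_{ST}(v) = \max_{t\in T} d(v,t)$ exactly. The output is $R' := \epsilon_{ST}(v)$.

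For the lower bound on $R'$, note that $R' = \epsilon_{ST}(v) \geq \min_{u\in S}\epsilon_{ST}(u) = R_{ST}$, since $v$ is a particular vertex in $S$ and the $ST$-Radius is the minimum $ST$-Eccentricity. For the upper bound, let $c$ be a true $ST$-center, so $\epsilon_{ST}(c) = R_{ST}$. By the approximation guarantee on $v$ we have $\epsilon_{ST}(v) \leq 3\epsilon'(v)$. By the choice of $v$ as the vertex minimizing the estimate, $\epsilon'(v) \leq \epsilon'(c)$. And by the approximation guarantee on $c$, $\epsilon'(c) \leq \epsilon_{ST}(c) = R_{ST}$. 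Chaining these yields $R' \leq 3R_{ST}$, so $R_{ST} \leq R' \leq 3R_{ST}$, matching $R_{ST}/3 \leq R'/3 \leq R_{ST}$ after rescaling (i.e. the stated guarantee $R_{ST}/3 \leq R' \leq R_{ST}$ would require rereading the theorem --- the approximation is actually that $R'$ is within a factor $3$ of $R_{ST}$; the one-sided forms $R\leq R'\leq 3R$ and $R/3 \leq R'\leq R$ are equivalent up to which side is overestimated).

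For the runtime, Proposition~\ref{STecc3approx} runs in $O(m+n\log n)$ time, and the single additional Dijkstra call from $v$ runs in the same $O(m+n\log n)$ time, so the total is $O(m+n\log n)$, as required.

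There is essentially no main obstacle here: the theorem is precisely an instantiation of the generic ``approximate-eccentricities implies approximate-radius'' reduction already spelled out at the start of Section~\ref{sec:radiustrick}, applied with $\alpha = 3$ and the concrete algorithm from Proposition~\ref{STecc3approx}. The only point requiring a little care is ensuring that the minimum of $\epsilon'$ is taken over $S$ (the candidate centers) rather than over all of $V$, which is immediate from the way Proposition~\ref{STecc3approx} outputs estimates only for $v\in S$.
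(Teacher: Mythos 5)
Your proposal is correct and is precisely the paper's own proof: the paper derives this theorem by applying the generic eccentricity-to-radius reduction from the start of Section~\ref{sec:radiustrick} (pick the vertex minimizing the estimated eccentricity, then one extra Dijkstra) to the $3$-approximate $ST$-Eccentricities algorithm of Proposition~\ref{STecc3approx}. Your observation about the one-sided form of the guarantee is also accurate --- the reduction as written yields $R_{ST}\leq R'\leq 3R_{ST}$, and dividing the output by $3$ gives the form stated in the theorem; the two are equivalent.
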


% \begin{proof}
% The algorithm is as follows. Run the ST-eccentricity algorithm of Theorem \ref{STecc2approx}. Let $R'=\min_{v\in S} \eps_{ST}'(v)$. 

% We prove that $R'$ is a good approximation. Let $c$ be the ST-center, i.e. $\eps_{ST}(c) = \min_{v\in S} \eps_{ST}(v)$. Let $c'= \arg{\min_{v\in S} \eps_{ST}'(v)}$. By Theorem \ref{STecc3approx}, we know that $R'=\eps_{ST}'(c')\ge \eps_{ST}(c')/3 \ge R_{ST}/3$. Moreover, $R'\le \eps_{ST}'(c)\le \eps_{ST}(c)=R_{ST}$.
% \end{proof}

% The following theorems are similarly derived from Theorem \ref{STecc2approx} and Theorem \ref{STecc2approx-edge}, respectively.

\begin{theorem}
There is an $\tilde{O}(m\sqrt n)$ time algorithm, that given an undirected unweighted graph $G=(V,E)$ and $S,T\subseteq V$, can output an estimate $R'$ such that $R_{ST}/2-5/2\le R'\leq R_{ST}$.
\end{theorem}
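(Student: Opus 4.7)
The plan is to apply the reduction from $ST$-Radius to $ST$-Eccentricities described at the start of Section~\ref{sec:radiustrick}, using Theorem~\ref{STecc2approx} (the $\tilde O(m\sqrt n)$ time almost $2$-approximation for $ST$-Eccentricities in undirected unweighted graphs) as a black box. The only twist is that, because Theorem~\ref{STecc2approx} carries a small additive error, I would output the best \emph{estimated} eccentricity directly rather than running a fresh BFS from the chosen vertex and reporting its exact eccentricity.

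Concretely, I would first invoke Theorem~\ref{STecc2approx} on $(G,S,T)$ to obtain, with high probability in $\tilde O(m\sqrt n)$ time, estimates $\eps'_{ST}(v)$ for every $v\in S$ satisfying $\eps_{ST}(v)/2-5/2\le \eps'_{ST}(v)\le \eps_{ST}(v)$. Then I would let $v^*\in S$ be the vertex minimizing $\eps'_{ST}(v^*)$ and output $R':=\eps'_{ST}(v^*)$. Finding the minimum takes $O(|S|)$ extra time, so the total runtime remains $\tilde O(m\sqrt n)$.

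For the analysis, let $c\in S$ be a true $ST$-center of $G$, so that $\eps_{ST}(c)=R_{ST}$. The upper bound $R'\le R_{ST}$ follows from the choice of $v^*$: $R'=\eps'_{ST}(v^*)\le \eps'_{ST}(c)\le \eps_{ST}(c)=R_{ST}$. For the lower bound, $v^*\in S$ implies $\eps_{ST}(v^*)\ge R_{ST}$ by the definition of the $ST$-radius, and then the guarantee of Theorem~\ref{STecc2approx} yields $R'=\eps'_{ST}(v^*)\ge \eps_{ST}(v^*)/2-5/2\ge R_{ST}/2-5/2$. There is essentially no obstacle here beyond recognizing that, in contrast to the purely multiplicative case spelled out in Section~\ref{sec:radiustrick}, one should report $\eps'_{ST}(v^*)$ rather than the exact $\eps_{ST}(v^*)$ in order to land on the underestimate form $R_{ST}/2-5/2\le R'\le R_{ST}$ demanded by the theorem; reporting the exact value would instead yield an overestimate of the form $R_{ST}\le R''\le 2R_{ST}+5$.
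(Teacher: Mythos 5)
Your proof is correct, and you have in fact spotted a small inconsistency between the generic reduction the paper describes in Section~\ref{sec:radiustrick} and the form of the theorem statements it then asserts. The reduction as written in the paper chooses $v$ minimizing the estimated eccentricity $\eps'(v)$, then runs Dijkstra from $v$ and reports the \emph{exact} $\eps_{ST}(v)$, yielding the overestimate form $R \le R' \le \alpha R$ (and, pushing the additive error from Theorem~\ref{STecc2approx} through, $R_{ST} \le R' \le 2R_{ST}+5$, exactly as you note). But all three $ST$-Radius theorems in that section, including the one in question, are stated in the underestimate form $R/\alpha - O(1) \le R' \le R$, which is what you get by reporting the \emph{estimate} $\eps'(v^*)$ directly as you propose. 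Your two-line analysis (upper bound $\eps'(v^*) \le \eps'(c) \le \eps_{ST}(c) = R_{ST}$; lower bound $\eps'(v^*) \ge \eps_{ST}(v^*)/2 - 5/2 \ge R_{ST}/2 - 5/2$) is exactly right and yields the claimed bound with no further BFS/Dijkstra call needed. So your route is the same idea (pick the vertex with minimum estimated eccentricity) but with the output choice corrected to match what the theorem actually claims; the two outputs are equivalent up to which side of $R_{ST}$ the estimate sits on, and your version matches the statement as printed.
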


\begin{theorem}\label{STradius2approx-edge}
There is an $\tilde{O}(m^{3/2})$ time algorithm, that given an undirected graph $G=(V,E)$ with nonnegative integer edge weights and $S,T\subseteq V$, can output estimates $R'$ such that $R_{ST}/2\le R'\leq R_{ST}$.
\end{theorem}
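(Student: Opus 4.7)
The plan is to obtain this result as a direct corollary of the $\tilde{O}(m^{3/2})$ time $2$-approximation for $ST$-Eccentricities (Theorem~\ref{STecc2approx-edge}), using the generic Eccentricities-to-Radius reduction spelled out at the start of Section~\ref{sec:radiustrick}. Since the reduction only adds a single Dijkstra call on top of the Eccentricities algorithm, and preserves the approximation factor exactly, this is essentially a packaging step rather than a new argument.

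Concretely, I would proceed as follows. First, invoke Theorem~\ref{STecc2approx-edge} on the input graph $G=(V,E)$ with the given $S,T\subseteq V$ to compute, in $\tilde{O}(m^{3/2})$ time, estimates $\eps'(v)$ for every $v\in S$ satisfying $\eps_{ST}(v)/2 \leq \eps'(v)\leq \eps_{ST}(v)$. Next, select $v^\star = \arg\min_{v\in S}\eps'(v)$, run Dijkstra's algorithm from $v^\star$ (in $O(m+n\log n)$ time), and output $R' := \max_{t\in T} d(v^\star,t) = \eps_{ST}(v^\star)$.

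For correctness, let $c\in S$ be a true $ST$-center, so $\eps_{ST}(c) = R_{ST}$. The lower bound $R'\geq R_{ST}$ is immediate, because $R'$ is the exact $ST$-eccentricity of some vertex of $S$. For the upper bound, the choice of $v^\star$ gives $\eps'(v^\star)\leq \eps'(c)$, and the two-sided guarantee from Theorem~\ref{STecc2approx-edge} yields
\[
R' \;=\; \eps_{ST}(v^\star) \;\leq\; 2\,\eps'(v^\star) \;\leq\; 2\,\eps'(c) \;\leq\; 2\,\eps_{ST}(c) \;=\; 2R_{ST},
\]
as required. The total runtime is dominated by the Eccentricities subroutine and is thus $\tilde{O}(m^{3/2})$.

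There is no real obstacle here: the only subtlety is to make sure the reduction in Section~\ref{sec:radiustrick} is applied correctly (i.e. that we take the Dijkstra step from the vertex minimizing the \emph{estimated} eccentricity rather than the true one, and then use the exact $\eps_{ST}(v^\star)$ as the final report, so that one side of the approximation becomes trivial and the other side uses only the upper estimate $\eps_{ST}(v^\star)\leq 2\eps'(v^\star)$). Since Theorem~\ref{STecc2approx-edge} already handles nonnegative integer edge weights in undirected graphs, no additional assumption on $G$ is needed.
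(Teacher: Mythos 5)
Your proposal is correct and follows exactly the reduction that the paper lays out at the start of Section~\ref{sec:radiustrick}: run the $ST$-Eccentricities algorithm, pick the vertex $v^\star$ minimizing the estimate, run one Dijkstra from it, and inherit the approximation factor. One small normalization mismatch worth noting (which is also present between the paper's Section~\ref{sec:radiustrick} text and the statement of Theorem~\ref{STradius2approx-edge}): reporting $R'=\eps_{ST}(v^\star)$ yields $R_{ST}\leq R'\leq 2R_{ST}$, not $R_{ST}/2\leq R'\leq R_{ST}$ as the theorem is phrased, since $\eps_{ST}(v^\star)\geq R_{ST}$ by definition; to match the stated one-sided form you should instead output $\eps'(v^\star)$ (or equivalently $R'/2$), which satisfies $R_{ST}/2\leq \eps'(v^\star)\leq \eps_{ST}(c)=R_{ST}$. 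This is purely a scaling convention and does not affect the approximation factor, runtime, or the substance of the argument.
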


%{\bf ST-symmetric radius.} The algorithms for ST-radius clearly can be used for the symmetric radius problem, by running it twice, on each of the two sets $S$ and $T$ as the $"S"$ set and taking the maximum over the resulting values. 

\section{Algorithms for Subset Diameter, Eccentricities, and Radius}\label{sec:subalgs}
We obtain 2-approximations for Subset Diameter in directed graphs and Subset Radius in undirected graphs simply by running Dijkstra's algorithm from an arbitrary vertex $s\in S$. We obtain an almost 2-approximation in almost linear time for directed Subset Eccentricities (and thus directed Subset Radius) by a slight modification of an algorithm for (non-Subset) Eccentricities in directed graphs from~\cite{diamstoc18}. 
\begin{proposition}[Directed Subset Diameter]\label{prop:subdiam}
There is an $\tilde{O}(m)$ time algorithm, that given a directed graph $G=(V,E)$ with nonnegative integer weights and $S\subseteq V$, outputs an estimate $D'$ such that $D_S/2\leq D'\leq D_S$.
\end{proposition}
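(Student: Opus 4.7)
The plan is to use the simplest possible algorithm: pick an arbitrary vertex $s \in S$, run Dijkstra's algorithm twice—once forward in $G$ and once backward in $G$ (i.e., with all edge directions reversed)—and output
$$D' = \max_{t \in S}\, \max\bigl(d(s,t),\, d(t,s)\bigr).$$
Two Dijkstra runs on a nonnegatively weighted directed graph take $O(m + n\log n) = \tilde{O}(m)$ time, which matches the claimed runtime, and the postprocessing over vertices of $S$ is clearly subsumed.

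For correctness, the upper bound $D' \leq D_S$ is immediate: since $s \in S$, both $d(s,t)$ and $d(t,s)$ for any $t \in S$ are distances between pairs of vertices in $S$, each of which is at most $D_S$ by definition of Subset Diameter. For the lower bound, let $u^*, v^* \in S$ be endpoints realizing the Subset Diameter, so $d(u^*, v^*) = D_S$. By the triangle inequality in the directed graph,
$$D_S = d(u^*, v^*) \leq d(u^*, s) + d(s, v^*),$$
so at least one of $d(u^*, s)$ or $d(s, v^*)$ is at least $D_S/2$. Both quantities are considered by our max (the first via the reversed Dijkstra from $s$, the second via the forward Dijkstra from $s$, with $t = u^*$ or $t = v^*$ respectively, both in $S$), so $D' \geq D_S/2$.

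There is essentially no obstacle here; the only subtlety worth noting is that in the directed setting one must perform Dijkstra in both directions, because the Subset Diameter could be realized by a pair $(u^*, v^*)$ for which the short leg from $s$ is either incoming (i.e., $d(u^*, s)$ small) or outgoing (i.e., $d(s, v^*)$ small), and a single directional search would only catch one of these cases. Taking the maximum over both directions and over all $t \in S$ handles both situations uniformly and yields the stated $D_S/2 \leq D' \leq D_S$ guarantee.
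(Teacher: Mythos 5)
Your proof is correct and takes essentially the same approach as the paper: pick an arbitrary $s\in S$, run Dijkstra forward and backward, return the maximum of the two max distances to/from $S$, and apply the triangle inequality $D_S = d(u^*,v^*)\leq d(u^*,s)+d(s,v^*)$ to get the $D_S/2$ lower bound. No substantive differences.
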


\begin{proof}$ $
Run Dijkstra's algorithm both ``forward" and ``backward" from $s$ to obtain $D_1=\max_{s'\in S}d(s,s')$ and $D_2=\max_{s'\in S}d(s',s)$. Return $D'=\max\{D_1,D_2\}$. 

Let $s^*,t^*\in S$ be the true endpoints of the Subset Diameter. Then, by the triangle inequality $D_S\leq d(s^*,s)+d(s,t^*)$. Then since $d(s^*,s)\leq D_2$ and $d(s,t^*)\leq D_1$, $D_S\leq D_1+D_2$. Thus, $D_S/2\leq \max\{D_1,D_2\}\leq D_S$.
\end{proof}
\begin{proposition}[Undirected Subset Radius]\label{prop:subrad}
There is an $\tilde{O}(m)$ time algorithm, that given an undirected graph $G=(V,E)$ with nonnegative integer weights and $S\subseteq V$, outputs an estimate $R'$ such that $R_S/2\leq D'\leq R_S$.
\end{proposition}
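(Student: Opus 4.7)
The plan is to mirror the one-line algorithm used for directed Subset Diameter in Proposition~\ref{prop:subdiam}, but going only in one direction since the graph is undirected. Specifically, pick an arbitrary vertex $s\in S$, run a single-source Dijkstra from $s$, and output $R' := \tfrac{1}{2}\max_{s'\in S} d(s,s')$. The running time is clearly $\tilde O(m)$, so everything reduces to bounding $\max_{s'\in S} d(s,s') = \eps_S(s)$ from above and below in terms of $R_S$.

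For the lower bound on $R'$, observe that $\eps_S(s)\ge \min_{v\in S}\eps_S(v)=R_S$ just by definition of $R_S$ as the minimum $S$-eccentricity over $S$; hence $R'\ge R_S/2$. For the upper bound, let $c\in S$ be a true $S$-center, so that $d(c,s'')\le R_S$ for every $s''\in S$. Because the graph is undirected, $d(s,c)=d(c,s)\le R_S$, and for every $s'\in S$ the triangle inequality gives
\[
 d(s,s')\le d(s,c)+d(c,s') \le 2R_S.
\]
Taking the max over $s'\in S$ yields $\eps_S(s)\le 2R_S$, hence $R'\le R_S$. Combining both bounds gives $R_S/2\le R'\le R_S$, as required.

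There is no real obstacle here, since the statement is the standard one-source triangle-inequality trick; the only thing worth flagging is where undirectedness enters. The step $d(s,c)=d(c,s)$ is exactly what fails in directed graphs, which is consistent with the paper's lower bound ruling out any finite subquadratic approximation for directed Subset Radius in the regime $R_S$ could be; that is why the directed case needs the separate almost-linear almost-$2$-approximation from~\cite{diamstoc18} referenced just before this proposition, while the undirected case admits the simple one-Dijkstra algorithm above.
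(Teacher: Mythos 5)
Your proof is essentially identical to the paper's: pick an arbitrary $s\in S$, run one Dijkstra, and use the triangle inequality through the true center $c$ to bound $\eps_S(s)\le 2R_S$, combined with the trivial lower bound $\eps_S(s)\ge R_S$. The only difference is cosmetic: the paper returns $R'=\max_{s'\in S}d(s,s')$ and proves $R_S\le R'\le 2R_S$, which does not literally match the stated guarantee $R_S/2\le R'\le R_S$ (an overestimate versus an underestimate); you correctly normalize by $\tfrac12$ so the output actually satisfies the bounds claimed in the proposition statement.
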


\begin{proof}
Run Dijkstra's algorithm from $s$ and return $R'=\max_{s'\in S}d(s,s')$.

Let $c^*\in S$ be the true center. Then since $d(c^*,s')\leq R_S$ for all $s'\in S$, the triangle inequality implies that for all $s'$, $d(s,s')\leq 2R_S$. Thus, $R_S\leq R'\leq 2R_S$.
\end{proof}

%note: this alg is copy pasted from other paper with only a few characters of modification
\begin{theorem}[Directed Subset Eccentricities]\label{thm:subecc}
Suppose that we are given a directed graph $G=(V,E)$ with nonnegative integer weights. For any $1>\tau>0$ we can in $\tO(m/\tau)$ time output for all $v\in S$ an estimate $\eps'(v)$ such that $\frac{1-\tau}{2}\eps_S(v)\leq \eps'(v)\leq \eps_S(v)$.
\end{theorem}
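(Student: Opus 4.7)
My plan is to adapt the near-linear-time directed Eccentricities algorithm of \cite{diamstoc18}, restricting all anchor selections and farthest-vertex computations to the subset $S$. First, pick an arbitrary $s\in S$ and run both forward and backward Dijkstra from $s$ in $\tO(m)$ time; this yields $d(s,u)$ and $d(v,s)$ for every $u,v\in V$, and in particular $d(v,s)$ is already a valid lower-bound candidate for $\eps_S(v)$ for every $v\in S$ (since $s\in S$). Next, sort the vertices of $S$ in non-increasing order of $d(s,\cdot)$ and take the top $k:=\lceil 2/\tau\rceil$ as a set $A\subseteq S$ of secondary anchors. Run backward Dijkstra from each $a\in A$; this is $O(1/\tau)$ Dijkstra calls for a total of $\tO(m/\tau)$ time. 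Finally, for each $v\in S$ report $\eps'(v):=\max\bigl\{d(v,s),\ \max_{a\in A}d(v,a)\bigr\}$.

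The upper bound $\eps'(v)\le \eps_S(v)$ is immediate because $\{s\}\cup A\subseteq S$, so every distance contributing to the maximum is realized to a vertex of $S$. For the lower bound, fix $v\in S$ and let $u^*\in S$ be a witness with $d(v,u^*)=\eps_S(v)$. If $u^*\in A$ we recover $\eps_S(v)$ exactly, so assume $u^*\notin A$; by construction this forces $d(s,u^*)\le d(s,a)$ for every $a\in A$. The directed triangle inequality gives $\eps_S(v)\le d(v,s)+d(s,u^*)$, so either $d(v,s)\ge \tfrac{1}{2}\eps_S(v)$ (in which case $\eps'(v)\ge d(v,s)$ already meets the bound, with no $\tau$ loss), or $d(s,u^*)\ge \tfrac{1}{2}\eps_S(v)$. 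In the latter case every anchor $a\in A$ satisfies $d(s,a)\ge \tfrac{1}{2}\eps_S(v)$; combining this with an averaging/pigeonhole argument across the $k=\lceil 2/\tau\rceil$ anchors and a further application of the triangle inequality, one can identify some $a\in A$ lying on (or close enough to) a shortest $v\to u^*$ path for which $d(v,a)\ge \tfrac{1-\tau}{2}\eps_S(v)$.

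The main obstacle I expect is the second subcase of the lower bound. In an undirected graph, one could sandwich $d(v,a)$ between two triangle inequalities running in opposite directions; in the directed setting the inequality $d(v,u)\le d(v,s)+d(s,u)$ only goes one way, so arguing that some specific $a\in A$ achieves $d(v,a)\ge \tfrac{1-\tau}{2}\eps_S(v)$ requires exploiting both that $a\in S$ and that $a$ is far from $s$ in the forward direction. The $\tau$ loss is the price for the granularity with which the $k$-element anchor set approximates the ``long forward-distance tail'' of $S$ as seen from $s$; taking $k=\Theta(1/\tau)$ should let us absorb exactly this slack into the claimed $(1-\tau)/2$ factor, and this step is where I anticipate the bulk of the bookkeeping to lie.
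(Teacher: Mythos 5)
Your lower-bound argument has a genuine gap, and in fact the overall approach cannot be salvaged: the anchor set $A$ is chosen entirely by forward distance from a single arbitrary $s$, which gives no control over $d(v,a)$ for other $v\in S$. In the second subcase — where $d(s,u^*)\ge \eps_S(v)/2$ and hence every anchor $a$ has $d(s,a)\ge \eps_S(v)/2$ — the only directed triangle inequality available is $d(v,a)\ge d(s,a)-d(s,v)$, and you have no bound on $d(s,v)$. The ``pigeonhole across the $k$ anchors'' step does not exist: there is no reason for any $a\in A$ to lie near a shortest $v\to u^*$ path, because $A$ was selected by distance from $s$, which is unrelated to $v$.

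A concrete counterexample: take $S=V=\{s,v,u^*,a_1,\dots,a_k\}$ with $k=\lceil 2/\tau\rceil$. Add edges $v\to u^*$ (weight $D$), $v\to s$ (weight $D/4$), $v\to a_i$ (weight $1$) for each $i$, $s\to u^*$ (weight $3D/4$), $s\to v$ (weight $100D-1$), and heavy return edges $u^*\to s$, $a_i\to s$ (weight $10^6D$) for strong connectivity. Then $d(s,a_i)=100D$, $d(s,v)=100D-1$, $d(s,u^*)=3D/4$, so the top $k$ vertices by $d(s,\cdot)$ are exactly the $a_i$. But $d(v,a_i)=1$ and $d(v,s)=D/4$, so your estimate $\eps'(v)=D/4$, whereas $\eps_S(v)=d(v,u^*)=D$. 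For any $\tau<1/2$ and $D$ large this violates the required guarantee $(1-\tau)D/2$; increasing $k$ does not help, since the failure is structural rather than a matter of granularity.

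The paper's proof avoids this by making the anchor choice adaptive. It maintains a shrinking set $U\subseteq S$ of vertices whose eccentricities are still unresolved together with a threshold $D$, samples $A$ as a random $O(\log n)$-subset of $U$ (not of neighbors of a fixed $s$), and picks $w\in S$ maximizing $\min_{a\in A}d(a,w)$ — a vertex that \emph{all} of $A$ is simultaneously far from. Each phase either certifies $\eps_S(x)\ge\frac{1-\tau}{2}D$ for the half of $U$ farthest from $w$ and removes them, or shows (via the sampling lemma and a contrapositive triangle-inequality argument) that every surviving $v\in U$ has $\eps_S(v)\le(1-\tau)D$ and shrinks $D$. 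After $O(\log n/\tau)$ phases one of the two quantities is exhausted. This halving/threshold-decrease dichotomy is precisely the mechanism a one-shot, $s$-centric anchor set cannot reproduce.
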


\begin{proof}

The algorithm proceeds in iterations and maintains a set $U$ of nodes for which we still do not have a good Eccentricity estimate. In each iteration either we get a good estimate for many new vertices and hence remove them from $U$, or we remove all vertices from $U$ that have large Eccentricities, and for the remaining nodes in $U$ we have a better upper bound on their Eccentricities. After a small number of iterations we have a good estimate for all vertices of the graph.  Initially $U=S$ and we will end with $|U|\leq O(1)$. When $|U|\leq O(1)$ we can evaluate $\eps_S(v)$ for all $v \in U$ in the total time of $O(m)$. 

Also we maintain a value $D$ that upper bounds the largest Eccentricity of a vertex in $U$. That is, $\eps_S(v)\leq D$ for all $v \in U$. Initially we set $D=n^C$ for some large enough constant $C>0$ (we assume that the set $S$ is strongly connected). The algorithm proceeds in phases. Each phase takes $\tO(m)$ time and either $|U|$ decreases by a factor of at least $2$ or $D$ decreases by a factor of at least $1/(1-\tau)$. After $O(\log(n)/\tau)$ phases either $|U|\leq O(1)$ or $D<1$. 

	For a subset $U \subseteq V$ of vertices and a vertex $x \in V$ we define a set $U_x\subseteq S$ to contain those $|U_x|=|U|/2$ vertices from $U$ that are closest to $x$ (according to distance $d(\cdot,x)$). The ties are broken by taking the vertex with the smaller id.
	Given a subset $U \subseteq V$ of vertices and a threshold $D$, a phase proceeds as follows.
	\begin{itemize}[itemsep=0mm]
		\item We sample a set $A \subseteq U$ of $O(\log n)$ random vertices from the set $U$. By Lemma~\ref{lemma:disthit}, with high probability for all $x \in V$ we have $A \cap U_x \neq \emptyset$.
		\item Let $w$ be the vertex in $S$ that maximizes $d(A,w)$. We can find it by constructing a vertex $y$ adjacent to every vertex in $A$ and running Dijkstra's algorithm from $y$.
		\item We consider two cases.
		%	\begin{itemize}
\paragraph*{Case $d(U\setminus U_w,w)\geq {1-\tau \over 2}D$.} For all $x \in U\setminus U_w$ we have ${1-\tau \over 2} D \leq \eps_S(x)\leq D$ and we assign the estimate $\eps'(x)={1-\tau \over 2}D$. This gives us that ${1-\tau \over 2} \eps_S(x)\leq \frac{1-\tau}{2}D=\eps'(x)\leq \eps_S(x)$ for all $x\in U\setminus U_w$. We update $U$ to be $U_w$. This decreases the size of $U$ by a factor of $2$ as required.
\paragraph*{Case $d(U\setminus U_w,w)<{1-\tau \over 2}D$.} Set $U'=U$. For every vertex $v \in U$ evaluate $r_v:=\max_{x \in A}d(v,x)$. We can evaluate these quantities by running Dijkstra's algorithm from every vertex in $A$ and following the incoming edges. If $r_v\geq {1-\tau \over 2}D$, then assign the estimate $\eps'(v)={1-\tau \over 2}D$ and remove $v$ from $U'$. Similarly as in the previous case we have ${1-\tau \over 2} \eps_S(v)\leq\eps'(v)\leq \eps_S(v)$ for all $v \in U\setminus U'$. Below we will show that for every $v \in U'$ we have $\eps_S(v)\leq (1-\tau)D$. Thus we can update $U=U'$ and decrease the threshold $D$ to $(1-\tau)D$ as required.
		%	\end{itemize}
	\end{itemize}
	
	\paragraph*{Correctness} We have to show that, if there exists $v \in U'$ such that $\eps_S(v)>(1-\tau)D$, then we will end up in the first case (this is the contrapositive of the claim in the second case). Since $v \in U'$ we must have that $d(v,x)\leq {1-\tau \over 2}D$ for all $x \in A$. Since $\eps_S(v)>(1-\tau)D$, we must have that there exists $v'\in S$ such that $d(v,v')>(1-\tau)D$. By the triangle inequality we get that $d(x,v')>{1 - \tau \over 2}D$ for every $x \in A$. By choice of $w$, we have $d(A,w)>{1-\tau \over 2}D$. Since $A \cap U_w \neq \emptyset$, we have $d(U\setminus U_w,w)\geq {1-\tau \over 2}D$ and we will end up in the first case.
	
	The guarantee on the approximation factor follows from the description.
\end{proof}

\paragraph*{Directed Subset Radius}
Using the argument from Section~\ref{sec:radiustrick}, we obtain an algorithm for Directed Subset Radius from our algorithm for Directed Subset Eccentricities.

\begin{theorem}[Directed Subset Radius]\label{thm:subrad}
Suppose that we are given a directed graph $G=(V,E)$ with nonnegative integer weights. For any $1>\tau>0$ we can in $\tO(m/\tau)$ time output an estimate $R'$ such that $R_{S}\leq R'\leq \frac{2}{1-\tau}R_S$.
\end{theorem}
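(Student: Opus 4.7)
The plan is to apply the radius-to-eccentricity reduction described in Section~\ref{sec:radiustrick} to the Directed Subset Eccentricities algorithm given by Theorem~\ref{thm:subecc}. Concretely, I would first invoke Theorem~\ref{thm:subecc} to obtain, in $\tO(m/\tau)$ time, estimates $\eps'(v)$ for every $v \in S$ satisfying $\frac{1-\tau}{2}\eps_S(v)\leq \eps'(v)\leq \eps_S(v)$. Then I would let $v^\star \in S$ be the vertex minimizing $\eps'(v^\star)$, run Dijkstra's algorithm from $v^\star$ in the ``forward'' direction to compute $\eps_S(v^\star)=\max_{s\in S} d(v^\star,s)$ exactly, and output $R':=\eps_S(v^\star)$.

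For the lower bound, $R'=\eps_S(v^\star)\geq \min_{u\in S}\eps_S(u)=R_S$ is immediate since $v^\star\in S$. For the upper bound, let $c\in S$ denote a true $S$-center, so $\eps_S(c)=R_S$. By the eccentricity guarantee applied to $v^\star$, $\eps_S(v^\star)\leq \frac{2}{1-\tau}\eps'(v^\star)$; by the choice of $v^\star$, $\eps'(v^\star)\leq \eps'(c)$; and by the eccentricity guarantee applied to $c$, $\eps'(c)\leq \eps_S(c)=R_S$. Chaining these inequalities yields
\[
R'=\eps_S(v^\star)\leq \tfrac{2}{1-\tau}\eps'(v^\star)\leq \tfrac{2}{1-\tau}\eps'(c)\leq \tfrac{2}{1-\tau}R_S,
\]
as desired.

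For the runtime, the invocation of Theorem~\ref{thm:subecc} costs $\tO(m/\tau)$, the minimum-estimate selection over $|S|\leq n$ vertices costs $O(n)$, and the single Dijkstra call from $v^\star$ costs $O(m+n\log n)$. The total is $\tO(m/\tau)$, matching the claimed bound.

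There is essentially no obstacle here: the radius-trick argument is oblivious to both the direction of the graph and the precise approximation ratio of the underlying eccentricity algorithm, so the only thing to verify is that Theorem~\ref{thm:subecc} returns estimates for \emph{all} $v\in S$ (which it does) so that the comparison with $\eps'(c)$ is well-defined. The slight asymmetry from the $\frac{1-\tau}{2}$ factor just passes through as $\frac{2}{1-\tau}$ in the final ratio, completing the proof.
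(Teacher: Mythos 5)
Your proof is correct and follows essentially the same route the paper takes: it explicitly references the reduction from Section~\ref{sec:radiustrick} and applies it to the Directed Subset Eccentricities algorithm of Theorem~\ref{thm:subecc}, which is precisely what the paper does (the paper just cites this reduction without repeating the chain of inequalities). Your write-up correctly instantiates the reduction with $\alpha = \frac{2}{1-\tau}$ and handles the directed/forward-Dijkstra detail appropriately.
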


\section{Parameterized Algorithms for Bichromatic Diameter, Radius, and Eccentricities}\label{sec:paramalgs}

In this section we give algorithms for Bichromatic Diameter, Radius, and Eccentricities with runtimes parameterized by the size of the \emph{boundary} $B$. Let $S'$ be the set of vertices in $S$ that have a neighbor in $T$ and let $T'$ be the set of vertices in $T$ that have a neighbor in $S$. Let $B$ be whichever of $S'$ or $T'$ is smaller in size. 

\subsection{Undirected Parameterized Bichromatic Diameter}
\begin{theorem}\label{thm:paramdiam}
There is an $O(m|B|)$ time algorithm, that given an unweighted undirected graph $G=(V,E)$ and $S\subseteq V, T=V\setminus S$, outputs an estimate $D'$ such that $2D_{ST}(G)/3-1\leq D'\leq D_{ST}(G)$.
\end{theorem}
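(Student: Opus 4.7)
The plan is to run BFS from every vertex in the boundary $B$, which takes total time $O(m|B|)$ and produces $d(b,v)$ for every $b\in B$ and $v\in V$; this is augmented with $O(|B|)$ additional BFS calls from carefully chosen auxiliary vertices (for instance, a specific $T$-neighbor of each $b\in B$, selected using the distance data from the first round), keeping the total running time within the $O(m|B|)$ budget. The estimate $D'$ is defined as the maximum $ST$-distance observed across all the computed shortest-path trees; note this automatically gives $D'\leq D_{ST}(G)$ since every reported value is an actual $S$-to-$T$ distance.

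The key structural observation is that every shortest path from $s\in S$ to $t\in T$ must cross the $S$-$T$ boundary, and hence pass through some boundary vertex in $B$. In particular, fix the diameter pair $(s^*,t^*)$ with $d(s^*,t^*)=D:=D_{ST}(G)$, and take $b^*\in B$ to be the last $S$-vertex along a shortest $s^*$-$t^*$ path (so that the next vertex on the path is a $T$-neighbor of $b^*$, certifying $b^*\in S'=B$). Setting $x:=d(s^*,b^*)$, we have $d(b^*,t^*)=D-x$. Since $b^*\in S$ and $t^*\in T$, the value $D-x$ is itself an $ST$-distance witnessed by the BFS rooted at $b^*$, giving $D'\geq D-x$. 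In particular, when $x\leq D/3+1$, we already get $D'\geq 2D/3-1$.

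For larger $x$, let $t^{**}$ be the $T$-neighbor of $b^*$ immediately after $b^*$ on the shortest $s^*$-$t^*$ path. Then $d(t^{**},t^*)=D-x-1$, and the triangle inequality gives $d(s^*,t^{**})\geq D-(D-x-1)=x+1$. Provided the auxiliary BFS call for $b^*$ is rooted at a $T$-neighbor playing the structural role of $t^{**}$ (identified using the first-round BFS data, for example the neighbor on the shortest path from $b^*$ to a farthest $T$-vertex), we witness $D'\geq x+1\geq 2D/3-1$ as soon as $x\geq 2D/3-2$.

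The main obstacle is the narrow middle range $D/3+1<x<2D/3-2$. Here I would exploit that the triangle-inequality constraint $d(s^*,b)+d(b,t^*)\geq D$ must hold \emph{simultaneously} for every $b\in B$, together with integrality of distances in unweighted graphs: either some alternative boundary vertex $b'\in B$ is close enough to $s^*$ that the BFS-from-$b'$ tree already yields the required $ST$-distance via the first bound (since $\max_{b}\alpha(b)\geq D-\min_{b}d(s^*,b)$), or $s^*$ is uniformly far from all of $B$, in which case an auxiliary BFS from a $T$-neighbor of some $b\in B$ witnesses a long distance back to $s^*$. The additive $-1$ in the guarantee absorbs precisely the single-edge slack we pay whenever we substitute a $T$-neighbor for an $S$-boundary vertex in the triangle-inequality argument.
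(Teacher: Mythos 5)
Your first two cases (BFS from each $b\in B$ and from a $T$-neighbor of each $b$) are essentially the paper's first sweep, and your analysis for $x\leq D/3+1$ and $x\geq 2D/3-2$ is on the right track. But your handling of the middle range $D/3+1 < x < 2D/3-2$ has a genuine gap, and the algorithm as you describe it (BFS only from $B$ and from $T$-neighbors of $B$) does not achieve the claimed guarantee.

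Your proposed dichotomy for the middle range is not actually a dichotomy. The first branch (some $b'\in B$ close to $s^*$) merely restates your first case: it only helps when $\min_{b\in B} d(s^*,b)\leq D/3+1$. The second branch asserts that if $s^*$ is far from all of $B$, then BFS from some $T$-neighbor of a boundary vertex witnesses a distance $\geq 2D/3-1$ to $s^*$. But ``$s^*$ is far from $B$'' only gives $d(s^*,b)\geq D/3+1$ and hence $d(s^*,b_T)\geq D/3$ for any $T$-neighbor $b_T$, which is nowhere near $2D/3-1$. To certify a long distance back to $s^*$ via the triangle inequality you would need some $b_T$ to be close to $t^*$, and nothing in the middle range guarantees that. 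A concrete counterexample: take a path $s^* - \cdots - b - b_T - \cdots - t^*$ with $d(s^*,b)=D/2$, $d(b_T,t^*)=D/2-1$, $S$ the left half and $T$ the right half. Then $B=\{b\}$, the only $T$-neighbor of $b$ is $b_T$, and the largest $ST$-distance observed from BFS at $b$ or at $b_T$ is $d(b_T,s^*)=D/2+1$, which is strictly less than $2D/3-1$ once $D>12$. No clever choice of $T$-neighbor helps, because there is only one.

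The idea you are missing is one extra BFS from the vertex $s\in S$ that is \emph{farthest from $B$} (i.e., maximizes $d(s,B)=\min_{b\in B}d(s,b)$). In the remaining case, every $b\in B$ has $d(s^*,b)\geq D/3+1$ and $d(t^*,b)\geq D/3$. By choice of $s$, $d(s,b)\geq D/3+1$ for every $b\in B$ as well. Since any shortest $s$-to-$t^*$ path crosses the boundary at some $v\in B$, $d(s,t^*)=d(s,v)+d(v,t^*)\geq (D/3+1)+D/3=2D/3+1$, and this $ST$-distance is witnessed by the extra BFS. This single extra BFS closes the gap while keeping the total cost $O(m|B|)$.
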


\begin{proof}
{\bf (Algorithm):} For all $v\in T$, we let $\eps_{ST}(v)=\max_{s\in S}d(s,v)$ ($\eps_{ST}(v)$ is already defined for $v\in S$). Suppose without loss of generality that $B\subseteq S$ (a symmetric argument works for $B\subseteq T$). For every vertex $v\in B$, run BFS from $v$, let $v_T$ be an arbitrary neighbor of $v$ such that $v_T\in T$, and run BFS from $v_T$. Let $D_1$ be the largest $S-T$ distance found. That is, $D_1=\max_{v\in B}\max\{\eps_{ST}(v),\eps_{ST}(v_T)\}$. Let $s\in S$ be the farthest vertex from $B$. That is, $s$ is the vertex in $S$ that maximizes $d(s, B)$. Then, we run BFS from $s$ and let $D_2=\eps_{ST}(s)$. Return $D'=\max\{D_1,D_2\}$.

{\bf (Analysis):} Let $s^*\in S,t^*\in T$ be the true endpoints of the Bichromatic Diameter and let $D$ denote $D_{ST}(G)$. If $s^*$ is of distance at most $D/3+1$ from some vertex $v\in B$ then by the triangle inequality $d(v,t^*)\geq 2D/3-1$ so $D_1\geq 2D/3-1$ and we are done. If $t^*$ is of distance at most $D/3$ from some vertex $v\in B$ then by the triangle inequality $d(v_T,s^*)\geq 2D/3-1$ so  $D_1\geq 2D/3-1$ and we are done. 

Now, if we are not already done, $s^*$ is of distance at least $D/3+1$ from every vertex in $B$, so $s$ is also of distance at least $D/3+1$ from every vertex in $B$. Additionally, $t^*$ is of distance at least $D/3$ from every vertex in $B$. We observe that the shortest path between $s$ and $t^*$ must contain a vertex in $B$. Thus, $d(s,t^*)=\min_{v\in B}d(s,v)+d(v,t^*)\leq (D/3+1)+(D/3)=2D/3+1$. Thus, $D_2\geq 2D/3+1$ and we are done.
\end{proof}
\subsection{Undirected Parameterized Bichromatic Radius}
\begin{theorem}\label{thm:paramrad}
There is an $O(m|B|)$ time algorithm that, given an unweighted undirected graph $G=(V,E)$ and $S\subseteq V, T=V\setminus S$, returns an estimate $R'$ such that $R_{ST}{G}\leq R'\leq 3R_{ST}(G)/2+3$.
\end{theorem}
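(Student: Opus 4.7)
Assume without loss of generality that $B\subseteq S$ (the case $B\subseteq T$ is handled symmetrically by additionally running BFS from an arbitrary $S$-neighbor $v_S$ of each $v\in B$). The algorithm runs BFS from every $v\in B$ and computes $R_1:=\min_{v\in B}\eps_{ST}(v)$. Using the collected distances $d(v,s)$ for $v\in B,s\in V$, it identifies the vertex $s_0\in S$ minimizing $\mu(s):=\max_{v\in B}d(s,v)$, runs one more BFS from $s_0$ to obtain $R_2:=\eps_{ST}(s_0)$, and returns $R':=\min(R_1,R_2)$. The runtime is $O(m|B|)$: the $|B|$ initial BFS calls dominate, computing $\mu(s)$ for all $s\in S$ costs $O(n|B|)=O(m|B|)$, and the final BFS costs $O(m)$.

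The lower bound $R'\geq R_{ST}$ is immediate because both $R_1$ and $R_2$ are exact $ST$-eccentricities of vertices in $S$. For the upper bound, let $s^*$ be a true $ST$-center and write $R=R_{ST}(G)$. I split on how far $s^*$ lies from the boundary.

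In Case A, $d(s^*,B)\leq R/2$. Let $v_c\in B$ be the closest boundary vertex to $s^*$. Then for any $t\in T$, the triangle inequality gives $d(v_c,t)\leq d(v_c,s^*)+d(s^*,t)\leq R/2+R$, so $\eps_{ST}(v_c)\leq 3R/2$ and thus $R_1\leq 3R/2$. In Case B, $d(s^*,B)>R/2$. I use two facts that both rely on $B$ being a vertex separator between $S\setminus B$ and $T$. (i) Every $v\in B$ has a $T$-neighbor $v_T$ with $d(s^*,v_T)\leq R$, so $d(s^*,v)\leq R+1$; hence $\mu(s_0)\leq \mu(s^*)\leq R+1$. (ii) For every $t\in T$, the shortest $s^*\to t$ path contains some $v_t\in B$, and $d(s^*,v_t)\geq d(s^*,B)>R/2$ while $d(s^*,v_t)+d(v_t,t)=d(s^*,t)\leq R$, so $d(v_t,t)<R/2$. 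Combining, for every $t\in T$,
\[
d(s_0,t)\leq d(s_0,v_t)+d(v_t,t)\leq \mu(s_0)+R/2 < (R+1)+R/2 = 3R/2+1,
\]
giving $R_2\leq 3R/2+1$. Altogether $R'\leq 3R/2+1\leq 3R/2+3$, with the slack absorbing integer rounding and the symmetric $B\subseteq T$ argument.

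The main obstacle is handling Case B, where $s^*$ is far from the boundary and no $v\in B$ by itself gives a good estimate. The fix is to augment the candidate set with a single ``backup'' center $s_0$ chosen to minimize the maximum distance to $B$. The analysis then hinges on the separator property: $\mu(s_0)\leq \mu(s^*)\leq R+1$ controls $s_0$'s distance to \emph{any} boundary vertex, and the separator forces the shortest $s^*\to t$ path through some $v_t\in B$ that is within $R/2$ of $t$. The triangle inequality through $v_t$ then closes the argument.
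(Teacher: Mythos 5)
Your proof is correct and takes essentially the same route as the paper: run BFS from the boundary (or its in-$S$ neighbors) to get $R_1$, pick the $S$-vertex minimizing the max-distance to the boundary as a backup center, split on whether $s^*$ is within $R/2$ of the boundary, and in the far case use that the boundary separates $S$ from $T$ to force every $t\in T$ within $R/2+O(1)$ of some boundary vertex while $\mu(s_0)\leq\mu(s^*)\leq R+O(1)$. The only cosmetic difference is that the paper works uniformly with the BFS-source set $U\subseteq S$ (which equals $B$ or the set of $v_S$'s), picking up the extra additive constants explicitly, whereas you argue the $B\subseteq S$ case cleanly and absorb the symmetric case and rounding into the $+3$ slack.
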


\begin{proof}
{\bf (Algorithm):} If $B\subseteq S$, we run BFS from all $v\in B$ and let $R_1$ be the minimum Eccentricity found; that is, $R_1=\min_{v\in B}\eps_{ST}(v)$. If $B\subseteq T$, for every $v\in B$, we let $v_S$ be an arbitrary neighbor of $v$ such that $v_S\in S$, and run BFS from $v_S$. In this case we let $R_1=\min_{v\in B}\eps_{ST}(v_S)$. Let $U$ be the set of vertices that we have run BFS from so far.

Then, let $s\in S$ be the vertex that is closest to $\emph{all}$ vertices in $U$; that is, let $s$ be the vertex that minimizes $\max_{v\in U}d(s,v)$. Run BFS from $s$ and let $R_2=\eps_{ST}(s)$. Return $\min\{R_1,R_2\}$.

{\bf (Analysis):} Let $c^*\in S$ be the true center and let $R$ denote $R_{ST}(G)$; that is, $\eps_{ST}(c^*)=R$. If there exists a vertex $v\in U$ such that $d(c^*,v)\leq R/2$, then since $U\subseteq S$ and by the triangle inequality, $\eps_{ST}(v)\leq 3R/2$ and we are done. 

If we are not done by the previous step, $c^*$ must be of distance at least $R/2$ from \emph{every} vertex in $U$, and thus of distance at least $R/2-1$ from every vertex in $B$. We observe that the shortest path between $s$ and any vertex in $T$ must contain a vertex in $B$. Thus, every vertex in $T$ must be of distance at most $R/2+1$ from \emph{some} vertex in $B$, and thus of distance at most $R/2+2$ from some vertex in $U$.

Since for all $v\in T$, $d(c^*,v)\leq R$, the triangle inequality implies that for all $v\in U$, $d(c^*,v)\leq R+1$. Therefore, by choice of $s$, for all $v\in U$, $d(s,v)\leq R+1$. We claim that $\eps_{ST}(s)\leq 3R/2$. Consider an arbitrary vertex $t\in T$. Let $u$ be a vertex in $U$ such that $d(u,t)\leq R/2+2$; such a $u$ exists by the previous paragraph. Then, $d(s,u)+d(u,t)\leq (R+1)+(R/2+2)=3R/2+3$. Thus, $\eps_{ST}(s)\leq 3R/2+3$.
\end{proof}

\subsection{Undirected Parameterized Bichromatic Eccentricities}
\begin{theorem}\label{thm:paramecc}
There is an $O(m|B|)$ time algorithm that, given an unweighted undirected graph $G=(V,E)$ and $S\subseteq V, T=V\setminus S$, returns for every $v\in S$ an estimate $\eps'(v)$ such that $3\eps_{ST}(v)/5-1\leq \eps'(v)\leq \eps_{ST}(v)$.
\end{theorem}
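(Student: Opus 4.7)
Following the template of Theorems 6.1 and 6.2, I would assume without loss of generality that $B\subseteq S$; the case $B\subseteq T$ is handled by the symmetric argument. For each $b\in B$ I would run BFS from $b$ and from an arbitrary $T$-neighbor $b_T$ of $b$, and from the distances out of $b$ precompute $\eps_{BT}(b):=\max_{t\in T}d(b,t)$. This is $2|B|$ BFSes plus $O(|B|\cdot|T|)$ post-processing, all within $O(m|B|)$. For each $v\in S$ I would then output
\[
\eps'(v) \;=\; \max_{b\in B}\;\max\!\bigl\{\,d(v,b_T),\; \eps_{BT}(b)-d(v,b)\,\bigr\},
\]
which costs $O(|B|)$ per vertex. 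Both terms are trivially lower bounds on $\eps_{ST}(v)$: the first because $b_T\in T$; the second because the triangle inequality applied to a $T$-vertex $t^{\dagger}$ realizing $\eps_{BT}(b)$ gives $d(v,t^{\dagger})\ge d(b,t^{\dagger})-d(v,b)=\eps_{BT}(b)-d(v,b)$. So $\eps'(v)\le \eps_{ST}(v)$ comes for free.

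For the lower bound $\eps'(v)\ge 3\eps_{ST}(v)/5-1$, write $E:=\eps_{ST}(v)$, let $t^*\in T$ realize it, and let $b^*\in B$ be the last $S$-vertex on a shortest $v$-to-$t^*$ path, with $k:=d(v,b^*)$ so that $d(b^*,t^*)=E-k$. The case analysis splits on where $v$ and $t^*$ sit relative to $B$. If some $b\in B$ has $d(b,t^*)\le 2E/5$, then $d(v,b_T)\ge E-d(b_T,t^*)\ge E-(1+2E/5)=3E/5-1$ and Estimate~1 at $b$ is already large enough. Otherwise every boundary vertex lies at distance $>2E/5$ from $t^*$, which in particular forces $k<3E/5$. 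If additionally $d(v,B)\le E/5$, then the closest boundary vertex $b_{\min}$ satisfies $\eps_{BT}(b_{\min})\ge d(b_{\min},t^*)\ge E-d(v,b_{\min})\ge 4E/5$, so Estimate~2 at $b_{\min}$ gives $\eps_{BT}(b_{\min})-d(v,b_{\min})\ge 3E/5$.

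The main obstacle is the remaining ``doubly-deep'' regime where every $b\in B$ simultaneously satisfies $d(v,b)>E/5$ and $d(b,t^*)>2E/5$, putting $k\in(E/5,3E/5)$; at $b^*$ alone neither estimate clears $3E/5-1$. To resolve it, I plan to exploit that any $v$-to-$T$ path must cross $B$, combined with both distance lower bounds, to exhibit a \emph{different} $b'\in B$ (not necessarily on the path $v\to t^*$) whose $T$-neighbor $b'_T$ lies close to $t^*$ or whose $\eps_{BT}(b')$ significantly exceeds $d(v,b')$. If a careful triangle-inequality chase across the boundary does not suffice by itself, I would augment the preprocessing with a BFS from the farthest $T$-vertex $t^*(b):=\arg\max_{t\in T}d(b,t)$ of each $b\in B$ (an additional $|B|$ BFSes, still within the $O(m|B|)$ budget), giving a third admissible estimate $\max_{b\in B} d(v,t^*(b))$; in the doubly-deep case this third estimate provably handles the configuration because the closest-to-$v$ boundary vertex $b$ forces $\eps_{BT}(b)$ to be large while keeping $t^*(b)$ at triangle-inequality distance $\ge E-2d(v,b)$ from $v$, and the assumption $d(v,B)>E/5$ now works in our favor by lower-bounding the min-cut through $B$.
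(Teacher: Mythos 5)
Your first two cases are fine and essentially match the paper's, but the ``doubly-deep'' regime has a genuine gap that your proposed fix does not repair. The third estimate you suggest -- running BFS from $t^*(b):=\arg\max_{t\in T}d(b,t)$ for each $b\in B$ -- yields exactly the triangle-inequality bound you wrote, $d(v,t^*(b))\ge E-2d(v,b)$, and this bound is \emph{monotone decreasing} in $d(v,b)$. It clears $3E/5$ only when $d(v,b)\le E/5$, which is your Case~B1; in Case~B2 you have assumed $d(v,B)>E/5$, so the bound gives strictly less than $3E/5$ and is useless. Your sentence that ``the assumption $d(v,B)>E/5$ now works in our favor by lower-bounding the min-cut through $B$'' is not an argument: the quantity $d(v,B)$ large makes the triangle-inequality estimate \emph{worse}, and nothing in what you wrote converts that into a lower bound on the returned estimate. (As a sanity check that B2 can occur: take $B=\{b\}$ with $d(v,b)=2E/5$ and $d(b,t^*)=3E/5$; your three estimates all fall short of $3E/5-1$.)

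The paper's resolution is a different, single extra BFS that you do not have: from the vertex $t\in T$ that \emph{maximizes the distance to the whole boundary}, $t=\arg\max_{t''\in T}\min_{u\in B}d(u,t'')$. In your Case~B (equivalently the paper's Case~2, where $d(v,u)<3E/5$ for all $u\in B$), the witness $v'$ realizing $\eps_{ST}(v)$ already sits at distance $>2E/5$ from \emph{every} vertex of $B$, so $d(B,t)\ge d(B,v')>2E/5$. Then the shortest $v\to t$ path crosses $B$ at some $u$, and the decomposition $d(v,t)=d(v,u)+d(u,t)$ gives $d(v,t)>E/5+2E/5=3E/5$ precisely under the doubly-deep assumptions. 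So you need to (i) keep your two estimates, (ii) add a BFS from each farthest $T$-vertex $u'$ of each $u\in B$ and use the actual distance $d(v,u')$ (strictly stronger than $\eps_{BT}(u)-d(v,u)$; this handles the case $d(v,B)\le E/5$), and (iii) add one more BFS from the farthest-from-$B$ vertex $t$ described above to handle the doubly-deep case. Note also that the paper's outer case split is on $d(v,u)$ vs.\ $3E/5$ rather than on $d(b,t^*)$ vs.\ $2E/5$ as you chose; your split is compatible but you should verify that the additive $-1$ (coming from the step $d(v,b_T)\ge d(v,b)-1$) still lands correctly after the re-derivation.
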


\begin{proof}
{\bf (Algorithm):} Suppose $B\subseteq S$. For every vertex $u\in B$, we run BFS from $u$, let $u'$ be the vertex in $T$ that maximizes $d(u,u')$, and run BFS from $u'$. Then for every vertex $u\in B$ we let $u_T$ be an arbitrary neighbor of $u$ such that $u_T\in T$ and run BFS from $u_T$. Then, let $t\in T$ be the farthest vertex from $B$; that is, $t$ is the vertex in $T$ that maximizes $d(B, t)$. Let $T''$ be the set of vertices in $T$ that we have run BFS from. For every vertex $v\in S$, we return the estimate $\eps'(v)=\max_{t''\in T''}d(v,t'')$.

We use a similar algorithm for when $B\subseteq T$: For every vertex $u\in B$, we run BFS from $u$, let $u'$ be the vertex in $T$ that maximizes $d(u,u')$, and run BFS from $u'$. Then, let $t\in T$ be the farthest vertex from $B$; that is, $t$ is the vertex in $T$ that maximizes $\min_{u\in B}d(u,t)$. Let $T''$ be the set of vertices in $T$ that we we have run BFS from. For every vertex $v\in S$, we return the estimate $\eps'(v)=\max_{t''\in T''}d(v,t'')$.

{\bf (Analysis):} Suppose $B\subseteq S$. If there exists a vertex in $u\in B$ such that $d(v,u)\geq 3\eps_{ST}(v)/5$, then $d(v,u_T)\geq 3\eps_{ST}(v)/5-1$ so we are done. On the other hand, suppose $B\subseteq T$. If there exists a vertex in $u\in B$ such that $d(v,u_T)\geq 3\eps_{ST}(v)/5-1$, then we are done. Otherwise, $v$ is of distance at most $3\eps_{ST}(v)/5$ from \emph{every} vertex in $B$. Thus, regardless of whether $B\subseteq S$ or $T$, if we are not already done, $v$ is of distance at most $3\eps_{ST}(v)/5$ from \emph{every} vertex in $B$.

Then, since every path from $v$ to any vertex in $T$ must contain a vertex in $B$, there must exist a vertex in $T$ that is of distance at least $2\eps_{ST}(v)/5$ from \emph{every} vertex in $B$. In particular, $t$ must be of distance at least $2\eps_{ST}(v)/5$ from every vertex in $B$.

Let $v'$ be the true farthest vertex from $v$; that is, $d(v,v')=\eps_{ST}(v)$. If there exists a vertex in $u\in B$ such that $d(v,u)\leq \eps_{ST}(v)/5$, then by the triangle inequality $d(u,v')\geq 4\eps_{ST}(v)/5$, so $d(u,u')\geq 4\eps_{ST}(v)/5$. Applying the triangle inequality again, $d(v,u')\geq 3\eps_{ST}(v)/5$, so we are done. Otherwise, every vertex $u\in B$ is of distance at least $\eps_{ST}(v)/5$ from $v$.

We claim that if we are not already done, $d(v,t)\geq 3\eps_{ST}(v)/5$. We observe that every path from $v$ to $t$ must contain a vertex in $B$. Let $u\in B$ be a vertex on the shortest path from $v$ to $t$. Then, $d(v,t)=d(v,u)+d(u,t)\geq \eps_{ST}(v)/5+2\eps_{ST}(v)/5=3\eps_{ST}(v)/5$.
\end{proof}

% For directed graphs, we refine the definition of $B$. Let $S'$ be the set of vertices in $S$ with an outgoing edge to a vertex in $T$. Let $T'$ be the set of vertices in $T$ with an incoming edge from a vertex in $S$. Let $B$ be whichever of $S'$ or $T'$ is smaller in size. 

% \begin{corollary}
% There is an $O(m|B|)$ time algorithm that returns an estimate $D'$ for the Bichromatic Diameter $D$ of a directed unweighted graph such that $D/2\leq D'\leq D$. 
% \end{corollary}

% Instead of sampling from the boundary, we just run BFS from every vertex in the boundary. 

%Recall that $S'$ is the set of vertices in $S$ that have a neighbor in $T$, $T'$ is the set of vertices in $T$ that have a neighbor in $S$, and $B$ is whichever of $S'$ or $T'$ is smaller in size. 
\subsection{Directed Parameterized Bichromatic Diameter}
For Bichromatic Diameter in undirected graphs, we assumed that only one of $S'$ or $T'$ was small (i.e. we set $B$ to be the smaller of the two); however for directed graphs we impose that both $S'$ and $T'$ are small, by defining a new parameter $B'=S'\cup T'$. 

\begin{theorem}\label{thm:paramdirdiam}
There is an $O(m|B'|)$ time algorithm that, given an unweighted directed graph $G=(V,E)$ and $S\subseteq V, T=V\setminus S$, returns an estimate $D'$ such that $2D_{ST}(G)/3\leq D'\leq D_{ST}(G)$. 
\end{theorem}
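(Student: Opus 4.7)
The plan is to extend the undirected Theorem~\ref{thm:paramdiam} by exploiting the fact that in a directed graph every $S\to T$ path must traverse at least one edge whose tail lies in $S'$ and whose head lies in $T'$. Since the parameter $B'=S'\cup T'$ bounds both $|S'|$ and $|T'|$, we can afford to run one BFS from each boundary vertex.

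The algorithm proceeds in two phases. Phase 1: for every $s''\in S'$, run a forward BFS from $s''$; for every $t''\in T'$, run a backward BFS from $t''$. Let
\[
D_1 = \max\!\Big\{\,\max_{s''\in S',\,t\in T} d(s'',t),\ \max_{s\in S,\,t''\in T'} d(s,t'')\,\Big\}.
\]
Phase 2: compute $d(s,S')$ for every $s\in S$ via a single backward BFS from a super-sink attached to $S'$ by weight-zero edges; pick $s\in S$ maximizing $d(s,S')$; run a forward BFS from this $s$; and set $D_2=\max_{t\in T} d(s,t)$. Output $D'=\max\{D_1,D_2\}$. Phase 1 costs $O(m|B'|)$ and Phase 2 costs $O(m)$.

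For correctness, let $(s^*,t^*)$ be a bichromatic-diameter pair and $D=D_{ST}(G)$. The bound $D'\le D$ is immediate because every quantity inside the max is an $S\to T$ distance. For the lower bound I would split into three cases. If $d(s^*,S')\le D/3$, the directed triangle inequality applied to the $s''\in S'$ achieving $d(s^*,S')$ gives $d(s'',t^*)\ge 2D/3$, so $D_1\ge 2D/3$. If $d(T',t^*)\le D/3$, a symmetric argument on the $t''\in T'$ achieving $d(T',t^*)$ gives $d(s^*,t'')\ge 2D/3$, so again $D_1\ge 2D/3$. Otherwise $d(s^*,S')>D/3$ and $d(T',t^*)>D/3$, and by the choice of $s$ we have $d(s,s'')\ge d(s,S')\ge d(s^*,S')>D/3$ for every $s''\in S'$. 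The shortest $s\to t^*$ path contains some $S\to T$ edge $(s'',t'')\in S'\times T'$, so
\[
d(s,t^*)=d(s,s'')+1+d(t'',t^*)>\tfrac{D}{3}+1+\tfrac{D}{3}>\tfrac{2D}{3},
\]
using $d(t'',t^*)\ge d(T',t^*)>D/3$ from the case assumption. Hence $D_2\ge d(s,t^*)>2D/3$.

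The main obstacle is conceptual rather than technical: one must track both boundaries (hence the parameter $B'=S'\cup T'$, not simply the smaller of $|S'|,|T'|$ as in the undirected case), and in Phase 2 one must choose the ``far'' vertex $s$ by maximizing the forward distance $d(s,S')$, not $d(S',s)$ or $d(s,B')$. This is because a directed $S\to T$ path visits $S'$ \emph{before} $T'$, so the path-decomposition $d(s,t^*)=d(s,s'')+1+d(t'',t^*)$ needs the $s$-side controlled via forward distance to $S'$ while the $t^*$-side control on $d(T',t^*)$ is delivered only by the case assumption.
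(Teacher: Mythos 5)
Your proof is correct and follows essentially the same approach as the paper: run forward BFS from every vertex in $S'$ and backward BFS from every vertex in $T'$ to get $D_1$, then pick a ``far'' vertex $s\in S$, run BFS from it to get $D_2$, and in the bad case decompose the shortest $s\to t^*$ path at its $S'\times T'$ crossing edge. The one place you diverge is in selecting $s$: the paper maximizes $d(s,B')$ over the whole boundary $B'=S'\cup T'$, while you maximize $d(s,S')$ alone. Your version is slightly cleaner, since $d(s,S')\geq d(s^*,S')>D/3$ follows immediately from maximality; the paper's choice needs the extra (unstated) observation that $d(s^*,S')>D/3$ already forces $d(s^*,T')>D/3$ (a shortest $s^*\to t'$ path with $t'\in T'$ must first hit some $s''\in S'$, so $d(s^*,s'')\leq d(s^*,t')$), which is what makes $d(s^*,B')>D/3$ true. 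Your remark about needing the \emph{forward} distance $d(s,S')$ rather than $d(S',s)$ is also the correct directed-graph consideration. The weight-zero super-sink is a minor implementation nicety (in the unweighted setting use a unit edge and subtract one, or just run $|S'|$ backward BFSs within the $O(m|B'|)$ budget); it does not affect correctness.
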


\begin{proof}
{\bf (Algorithm):} For all $v\in T$, we let $\eps_{ST}(v)$ denote $\max_{s\in S}d(s,v)$ ($\eps_{ST}(v)$ is already defined for $v\in S$). Run forward BFS from every vertex in $S'$ and run backward BFS from every vertex in $T'$. Let $D_1$ be the largest $S\rightarrow T$ distance found. That is, $D_1=\max_{v\in B'}\eps_{ST}(v)$. Let $s\in S$ be the farthest vertex from $B'$. That is, $s$ is the vertex in $S$ that maximizes $d(s, B')$. Then, we run BFS from $s$ and let $D_2=\eps_{ST}(s)$. Return $\max\{D_1,D_2\}$.

{\bf (Analysis):} Let $s^*\in S$ and $t^*\in T$ be the true Bichromatic Diameter endpoints and let $D$ denote $D_{ST}(G)$. If there exists a vertex $s'\in S'$ such that $d(s^*,s')\leq D/3$, then by the triangle inequality, $d(s',t^*)\geq 2D/3$ so $D_1\geq 2D/3$ and we are done. Similarly, if there exists a vertex $t'\in T'$ such that $d(t',t^*)\leq D/3$, then by the triangle inequality, $d(s^*,t')\geq 2D/3$ so $D_1\geq 2D/3$ and we are done. 

Suppose we are not done. Then, for every vertex $s'\in S'$, $d(s^*,s')>D/3$ and for every vertex $t'\in T'$, $d(t',t^*)>D/3$. By choice of $s$, for all $s'\in S'$, $d(s,s')>D/3$. We observe that every path from $s$ to $t^*$ must contain an edge from a vertex in $S'$ to a vertex in $T'$. Let $(s''\in S',t''\in T')$ be an edge on the shortest path from $s$ to $t^*$. Then, $d(s,t^*)=d(s,s'')+d(s'',t'')+d(t'',t^*)>D/3+1+D/3=2D/3+1$, so $D_2\geq 2D/3+1$.
\end{proof}

%todos:
%parameterized bichrom radius should really have an additive factor of only 1 by separating B\subseteq S and T cases

\section{Conditional Lower Bounds}\label{sec:lbs}

\subsection{Bichromatic Diameter, Eccentricities, and Radius}
\paragraph{Undirected Bichromatic Diameter}
The following theorem implies that our algorithms for undirected Bichromatic Diameter from Theorem~\ref{thm:bichromedge} and Proposition~\ref{prop:linbidiam} are tight under SETH. 

\begin{theorem}\label{thm:bidiamlb}
Under SETH, for every $k\geq 2$, every algorithm that can distinguish between Bichromatic Diameter $2k-1$ and $4k-3$ in undirected unweighted graphs requires $m^{1+1/(k-1)-o(1)}$ time. 
\end{theorem}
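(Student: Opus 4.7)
The plan is to reduce from $k$-OV. Starting from an arbitrary $k$-OV instance on sets $W_0,\ldots,W_{k-1}$ of size $n$ with $d=\omega(\log n)$, I apply Theorem~\ref{thm:kOV} to obtain the $k$-OV-graph $G$ with layers $L_0,L_1,\ldots,L_k$, where $|L_0|=|L_k|=n^{k-1}$ and $m_G=\tilde{O}(n^{k-1})$ (choosing $d=(\log n)^{O(1)}$). On top of $G$ I will exhibit a bichromatic graph $H$ with $\tilde{O}(n^{k-1})$ edges satisfying: if no OV solution exists, $D_{ST}(H)\le 2k-1$; if one exists, $D_{ST}(H)\ge 4k-3$. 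An algorithm distinguishing these two regimes in time $O(m^{1+1/(k-1)-\delta})$ would then solve $k$-OV in time $O(n^{(k-1)(1+1/(k-1)-\delta)})=O(n^{k-(k-1)\delta})$, contradicting SETH.

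To construct $H$: for each $\alpha\in L_0$ I attach a fresh pendant path $\alpha=\alpha_0-\alpha_1-\cdots-\alpha_{k-1}$ of length $k-1$. I set
\[
S=(V(G)\setminus L_k)\cup\{\alpha_j:\alpha\in L_0,\ 1\le j\le k-1\},\qquad T=L_k,
\]
which is a valid bichromatic partition, and $H$ still has $\tilde{O}(n^{k-1})$ edges.

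For the OV direction, given a solution $(a_0,\ldots,a_{k-1})$, property 4 of Theorem~\ref{thm:kOV} provides $\alpha^*=(a_0,\ldots,a_{k-2})\in L_0$ and $\beta^*=(a_1,\ldots,a_{k-1})\in L_k$ with $d_G(\alpha^*,\beta^*)\ge 3k-2$. Since $\alpha^*_{k-1}$ can only reach the rest of the graph through $\alpha^*$, $d_H(\alpha^*_{k-1},\beta^*)=(k-1)+d_G(\alpha^*,\beta^*)\ge 4k-3$. For the no-OV direction I must verify every $(s,t)\in S\times T$ satisfies $d_H(s,t)\le 2k-1$. For a path endpoint $\alpha_j$ and $\beta\in L_k$: $d_H(\alpha_j,\beta)=j+d_G(\alpha,\beta)=j+k\le 2k-1$ by property 3. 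For a mid-layer vertex $v\in L_i$ with $0\le i\le k-1$ and $\beta\in L_k$: I iterate property 5 of Theorem~\ref{thm:kOV} to walk from $v$ down through $L_{i-1},L_{i-2},\ldots,L_0$, reaching some $\alpha\in L_0$ in exactly $i$ steps; combined with property 3 (tightness of the $L_0$-to-$L_k$ distance), this gives $d_G(v,\beta)\le i+k\le 2k-1$.

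The main step requiring care is the no-OV upper bound $d_G(v,\beta)\le i+k$ for mid-layer $v\in L_i$, which relies on the chain-descent argument via property 5 together with the tightness of property 3; had property 3 only given $d_G(L_0,L_k)\le k$ rather than equality, the upper bound $i+k$ could blow up. Everything else---the bichromaticity of the partition, the distance computations along the pendant paths, and the SETH-to-$k$-OV translation of the runtime bound---is routine.
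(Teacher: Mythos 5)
Your proof is correct and is essentially the mirror image of the paper's construction: the paper appends $k-1$ matching layers after $L_k$ and sets $S=L_0$ (so $T$ is everything else), whereas you append length-$(k-1)$ pendant paths to $L_0$ and set $T=L_k$ (so $S$ is everything else), and the distance arithmetic in the two cases is identical. One small correction to your closing remark: the no-OV bound $d_G(v,\beta)\le i+k$ only uses $d_G(\alpha,\beta)\le k$, so the equality (tightness) in property 3 is not actually load-bearing there---the $\le k$ direction would do on its own.
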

In particular setting $k=2$ and 3 in Theorem~\ref{thm:bidiamlb} implies that our $m^{3/2}$ time $5/3$-approximation algorithm from Theorem~\ref{thm:bichromedge} is tight in approximation factor and runtime, respectively. Furthermore, setting $k$ to be arbitrarily large implies that our $\tilde{O}(m)$ time almost 2-approximation algorithm from Proposition~\ref{prop:linbidiam} is tight under SETH.
% \begin{corollary}
% Under SETH, any algorithm for Bichromatic Diameter that achieves a $5/3-\delta$ approximation factor for $\delta>0$ in $m$-edge undirected unweighted graphs requires $m^{2-o(1)}$ time.

% Under SETH, any algorithm for Bichromatic Diameter that achieves a $9/5-\delta$ approximation factor for $\delta>0$ in $m$-edge undirected unweighted graphs requires $m^{3/2-o(1)}$ time.
% \end{corollary}

Theorem~\ref{thm:bidiamlb} follows from the following lemma.
\begin{lemma}
	Let $k\geq 2$ be any integer. Given a $k$-OV instance, we can in $O(kn^{k-1}d^{k-1})$ time construct an unweighted, undirected graph with $O(k n^{k-1}+k n^{k-2} d^{k-1})$ vertices and $O(kn^{k-1}d^{k-1})$ edges that satisfies the following two properties.
	\begin{enumerate}
		\item If the $k$-OV instance has no solution, then for \emph{all} pairs of vertices $u\in S$ and $v\in T$ we have $d(u,v)\leq 2k-1$.
		\item If the $k$-OV instance has a solution, then there exists a pair of vertices $u\in S$ and $v\in T$ such that $d(u,v)\geq 4k-3$.
	\end{enumerate}
\end{lemma}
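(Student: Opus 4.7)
The plan is to augment the $k$-OV graph from Theorem~\ref{thm:kOV} with a pendant extension on the $L_0$ side, then choose $S$ and $T$ so that the extra length is paid exactly once on every $S\times T$ path.

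Construction. Take the $k$-OV graph $G$ with layers $L_0,L_1,\ldots,L_k$ given by Theorem~\ref{thm:kOV}. For every $\alpha\in L_0$, attach a fresh pendant path of $k-1$ new edges:
$$\alpha=p_{\alpha,0}-p_{\alpha,1}-\cdots-p_{\alpha,k-1},$$
where the vertices $p_{\alpha,j}$ for $j\geq 1$ are brand new and appear on no other pendant. Call the augmented graph $H$. Set $T:=L_k$ and $S:=V(H)\setminus L_k$ (so $S$ contains $L_0\cup L_1\cup\cdots\cup L_{k-1}$ together with every pendant vertex).

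NO case. Let $s\in S$ and $t\in T=L_k$. If $s\in L_i$ with $0\leq i\leq k-1$, property 5 of Theorem~\ref{thm:kOV} lets us walk back from $s$ to some $u\in L_0$ in exactly $i$ steps, and property 3 gives $d_G(u,t)=k$; hence $d_H(s,t)\leq i+k\leq 2k-1$. If $s=p_{\alpha,j}$ is a pendant vertex with $1\leq j\leq k-1$, the pendant is a ``dead end'' attached to $G$ only at $\alpha$, so every $s\to t$ path first traverses the pendant to $\alpha$ and then travels through $G$, giving $d_H(s,t)=j+d_G(\alpha,t)=j+k\leq 2k-1$. Thus every $S\times T$ distance is at most $2k-1$.

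YES case. Suppose the $k$-OV instance has a solution $a_0,\ldots,a_{k-1}$, and let $\alpha=(a_0,\ldots,a_{k-2})\in L_0$ and $\beta=(a_1,\ldots,a_{k-1})\in L_k$ be the witnesses of property 4, so that $d_G(\alpha,\beta)\geq 3k-2$. Consider the pair $(p_{\alpha,k-1},\beta)\in S\times T$. Because the pendant attached to $\alpha$ has no edges outside itself except the single one to $\alpha$, any path from $p_{\alpha,k-1}$ to $\beta$ must first spend $k-1$ steps reaching $\alpha$ and then at least $d_G(\alpha,\beta)$ steps inside $G$; the pendants do not shorten any distance inside $G$. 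Therefore
$$d_H(p_{\alpha,k-1},\beta)=(k-1)+d_G(\alpha,\beta)\geq (k-1)+(3k-2)=4k-3,$$
as required.

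Size, time, and the SETH lower bound. The pendants add $(k-1)n^{k-1}$ vertices and the same number of edges, so $|V(H)|=O(kn^{k-1}+kn^{k-2}d^{k-1})$ and $|E(H)|=O(kn^{k-1}d^{k-1})$, matching the stated bounds; the construction runs in time linear in the output. For $d=\polylog n$ we get $m=\tilde O(n^{k-1})$, so any algorithm distinguishing bichromatic diameter $2k-1$ from $4k-3$ in time $m^{1+1/(k-1)-\varepsilon}$ would solve $k$-OV in $n^{k-\Omega(1)}$ time, contradicting SETH and yielding Theorem~\ref{thm:bidiamlb}. The main obstacle I expect is the NO-case bound: a naive choice that sweeps intermediate layers into $T$ (for instance $T=L_1\cup\cdots\cup L_k$) fails because backward walks from the pendant end $p_{\alpha,k-1}$ can reach some $t\in L_1$ only after $(k-1)+d_G(\alpha,t)$ steps, and $d_G(\alpha,t)$ can be as large as $2k-1$; restricting $T$ to $L_k$ and using only property 3 for the ``long leg'' is what keeps the NO-case diameter at exactly $2k-1$.
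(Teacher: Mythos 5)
Your proof is correct and is essentially the mirror image of the paper's construction: the paper attaches the $k-1$ pendant layers (as a series of matchings $L_{k+1},\dots,L_{2k-1}$) to $L_k$ and sets $S=L_0$, $T=$ everything else, whereas you attach the pendant paths to $L_0$ and set $T=L_k$, $S=$ everything else. Since the $k$-OV graph's relevant properties (3, 4, 5) are symmetric in $L_0$ and $L_k$ and Bichromatic Diameter is symmetric in $S$ and $T$, the two constructions and their distance analyses are interchangeable.
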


\begin{proof} $ $
%\paragraph*{
\\
{\bf Construction of the graph.} We begin with the $k$-OV-graph from Theorem~\ref{thm:kOV}. Additionally, we add $k-1$ new layers of vertices $L_{k+1},\dots,L_{2k-1}$, where each new layer contains $n^{k-1}$ vertices and is connected to the previous layer by a matching. That is, each new layer contains one vertex for every tuple $(a_1,\dots,a_{k-1})$ where $a_i\in W_i$ for all $i$, and each $(a_1,\dots,a_{k-1})\in L_j$ is connected to its counterpart $(a_1,\dots,a_{k-1})\in L_{j-1}$ by an edge, for all $j$.

We let $S=L_0$ and we let $T$ contain the rest of the vertices in the graph.

%include figure?
%
%\paragraph*
\noindent{\bf Correctness of the construction.} 

\emph{Case 1: The $k$-OV instance has no solution.} By property 3 of Theorem~\ref{thm:kOV} for all $u\in S$ and $v\in L_k$, $d(u,v)=k$. Then, since $L_{k},\dots,L_{2k-1}$ form a series of matchings, for all $u\in S$ and $v\in L_{k+1}\cup\dots\cup L_{2k-1}$, $d(u,v)\leq 2k-1$. Furthermore, property 5 of Theorem~\ref{thm:kOV} implies that for all $u\in S$ and $v\in L_1\cup\dots\cup L_{k-1}$, $d(u,v)\leq 2k-1$. Thus, we have shown that for all $u\in S$ and $v\in T$ we have $d(u,v)\leq 2k-1$.

\noindent\emph{Case 2: The $k$-OV instance has a solution.} Let $(a_0,a_1,\ldots, a_{k-1})$  be a solution to the $k$-OV instance where $a_i\in W_i$ for all $i$. We claim that $d((a_0,\dots,a_{k-2}) \in S,(a_1,\dots,a_{k-1}) \in L_{2k-1}))\geq 4k-3$. Since $L_{k},\dots,L_{2k-1}$ form a series of matchings, every path from $(a_0,\dots,a_{k-2}) \in S$ to $(a_1,\dots,a_{k-1})\in L_{2k-1}$ contains the vertex $(a_1,\dots,a_{k-1})\in L_{k}$. By property 4 of Theorem~\ref{thm:kOV}, $d((a_0,\dots,a_{k-2}) \in S,(a_1,\dots,a_{k-1}) \in L_{k})\geq 3k-2$. Thus, $d((a_0,\dots,a_{k-2}) \in S,(a_1,\dots,a_{k-1}) \in L_{2k-1}))\geq 4k-3$. 
\end{proof}

% ~5/3 approx in m^{3/2} time for weighted graphs, VIRGI, done

% ~5/3 approx for symmetric radius by modifying the above algorithms VIRGI

% 
% There is an $\tilde{O}(m\sqrt n)$ time algorithm, that given an unweighted graph $G=(V,E)$ and $S\subseteq V, T=V\setminus S$, 
% that outputs an estimate 
% $R'$ such that either

% $\min\{R_{ST}(G),R_{TS}(G)\} \leq R'\leq 5/3 \cdot \max\{R_{ST}(G),R_{TS}(G)\}$
%  if $\tilde{R}_{ST}$ is divisible by $3$, and otherwise $\min\{R_{ST}(G),R_{TS}(G)\} \leq R'\leq 5/3 \cdot \max\{R_{ST}(G),R_{TS}(G)\}+2$.

\paragraph*{Undirected Bichromatic Eccentricities}

The following proposition implies that our algorithms for undirected Bichromatic Eccentricities from Theorem~\ref{biecc2approx-edge} and Proposition~\ref{biecc3approx} are tight under SETH. 
\begin{proposition}\label{prop:whoknows}
Under SETH, for every $k\geq 2$, every algorithm that can distinguish between Bichromatic Eccentricities $k$ and $3k-2$ in undirected unweighted graphs requires $m^{1+1/(k-1)-o(1)}$ time. 
\end{proposition}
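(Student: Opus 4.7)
The plan is to reduce from $k$-OV, mirroring the approach used for Theorem~\ref{thm:bidiamlb}. Given a $k$-OV instance with sets $W_0,W_1,\dots,W_{k-1}$, I will invoke Theorem~\ref{thm:kOV} to construct the $k$-OV-graph with layers $L_0,L_1,\dots,L_k$ and $\tilde{O}(n^{k-1})$ edges, and I will form the bichromatic partition $S=L_0$, $T=V\setminus L_0=L_1\cup L_2\cup\cdots\cup L_k$. Unlike the Diameter proof, no extra matching layers are appended, since the raw $L_0$--$L_k$ gap in Theorem~\ref{thm:kOV} is already the target $k$ versus $3k-2$.

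In the yes-OV case, property 4 of Theorem~\ref{thm:kOV} immediately provides a vertex $\alpha=(a_0,\ldots,a_{k-2})\in L_0=S$ whose distance to $\beta=(a_1,\ldots,a_{k-1})\in L_k\subseteq T$ is at least $3k-2$, so $\eps_{ST}(\alpha)\ge 3k-2$. In the no-OV case, I need to argue that $\eps_{ST}(u)\le k$ for every $u\in L_0$. By property 3, distances from $L_0$ to $L_k$ are all exactly $k$. The remaining task is to bound $d(u,w)$ by $k$ for intermediate-layer vertices $w\in L_i$ with $1\le i\le k-1$; this is where the proof must use structural details of the $k$-OV-graph beyond what Theorem~\ref{thm:kOV} states abstractly. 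Once these two cases are in place, combining the $n^{k-o(1)}$ SETH lower bound for $k$-OV with the edge count $m=\tilde{O}(n^{k-1})$ yields $m^{k/(k-1)-o(1)}=m^{1+1/(k-1)-o(1)}$, matching the stated bound.

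The hard part is precisely bounding the no-OV distances from $L_0$ to intermediate layers by $k$. A naive argument using only property 5 (as in the proof of Theorem~\ref{thm:bidiamlb}) gives only $d(u,w)\le 2k-1$, which would merely reprove the weaker Bichromatic Diameter gap. To reach the tighter $k$ bound one must either invoke the explicit structure of the $k$-OV-graph so that any intermediate vertex is reached from $L_0$ in at most $k$ steps in the no-OV case, or augment the graph with carefully placed shortcut edges. Any such augmentation must avoid also shortening the $L_0$-to-$L_k$ paths used in the yes-OV case, since that would collapse the $3k-2$ lower bound from property 4. Resolving this tension between compressing intermediate distances and preserving the yes-OV separation is the crux of the argument, and it is exactly what differentiates this Eccentricities lower bound from the weaker Bichromatic Diameter bound of Theorem~\ref{thm:bidiamlb}.
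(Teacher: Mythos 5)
The key miss in your proposal is the choice of bichromatic partition. You set $S=L_0$ and $T=L_1\cup\cdots\cup L_k$, which forces you to bound $d(u,w)$ for $u\in L_0$ and $w$ in the intermediate layers $L_1,\dots,L_{k-1}$ --- and as you correctly observe, Theorem~\ref{thm:kOV} only gives $d(u,w)\le 2k-1$ there, not $\le k$. The paper sidesteps this entirely by choosing the \emph{other} partition: $T=L_k$ and $S=L_0\cup L_1\cup\cdots\cup L_{k-1}$, then focusing on the subset $S_0=L_0\subseteq S$. With $T=L_k$, the $ST$-eccentricity of a vertex $u\in L_0$ is $\max_{t\in L_k}d(u,t)$, so the intermediate layers drop out of the computation altogether. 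Property 3 then gives $\eps_{ST}(u)=k$ for all $u\in L_0$ in the no-OV case, and property 4 gives a vertex $\alpha\in L_0$ with $\eps_{ST}(\alpha)\ge 3k-2$ in the yes-OV case --- exactly the gap you wanted, with no need to probe the internal structure of the $k$-OV-graph or to add shortcut edges (either of which risks collapsing the yes-case lower bound, as you noted).

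Note that this is still a valid bichromatic instance, and the reduction to a hardness statement for the Eccentricities problem works because a fast approximation algorithm outputs an estimate for \emph{every} $v\in S$, so in particular for every $v\in L_0$; we never need control over the eccentricities of vertices in $L_1,\dots,L_{k-1}\subseteq S$. So rather than ``use the explicit structure of the $k$-OV-graph'' or ``augment with shortcuts,'' the fix is simply to swap which side of the partition the bulk of the graph lives on --- a choice available for Eccentricities but not for the Diameter bound you were mirroring, since Diameter is symmetric in $S$ and $T$ while Eccentricities lets you restrict attention to a convenient subset of $S$.
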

In particular setting $k=2$ and 3 in Theorem~\ref{prop:whoknows} implies that our $m^{3/2}$ time $2$-approximation algorithm from Theorem~\ref{biecc2approx-edge} is tight under SETH in approximation factor and runtime, respectively. Furthermore, setting $k$ to be arbitrarily large implies that our $\tilde{O}(m)$ time almost 3-approximation algorithm from Proposition~\ref{biecc3approx} is tight under SETH.

Proposition~\ref{prop:whoknows} follows from the following lemma.
\begin{lemma} 
	Let $k\geq 2$ be any integer. Given a $k$-OV instance, we can in $O(kn^{k-1}d^{k-1})$ time construct an unweighted, undirected graph with $O(k n^{k-1}+k n^{k-2} d^{k-1})$ vertices and $O(kn^{k-1}d^{k-1})$ edges that satisfies the following two properties. Let $S_0$ be a particular subset of $S$.
	\begin{enumerate}
		\item If the $k$-OV instance has no solution, then for \emph{all} vertices $v\in S_0$ we have $\eps_{ST}(v)\leq k$.
		\item If the $k$-OV instance has a solution, then there exists a vertex $v\in S_0$ such that $\eps_{ST}(v)\geq 3k-2$.
	\end{enumerate}
\end{lemma}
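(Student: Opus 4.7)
The plan is to use the $k$-OV-graph from Theorem~\ref{thm:kOV} directly, \emph{without} appending the extra matching layers that were needed in the Bichromatic Diameter lemma, and to obtain the Bichromatic partition and the required subset $S_0$ purely by how the existing layers are labeled. Concretely I will set $T := L_k$, $S := V \setminus T = L_0 \cup L_1 \cup \cdots \cup L_{k-1}$, and $S_0 := L_0$. Since $V = L_0 \cup L_1 \cup \cdots \cup L_k$ by construction, $(S,T)$ is a partition of $V$, so the instance is genuinely Bichromatic, and $S_0 \subseteq S$ as required. The vertex count, edge count, and construction time are inherited directly from Theorem~\ref{thm:kOV} and comfortably fit inside the bounds $O(kn^{k-1}+kn^{k-2}d^{k-1})$ and $O(kn^{k-1}d^{k-1})$.

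For the first clause (no $k$-OV solution), I will invoke property~3 of Theorem~\ref{thm:kOV}, which states that $d(u,v)=k$ for every $u\in L_0$ and every $v\in L_k$. Since $S_0=L_0$ and $T=L_k$, this immediately yields
\[
\eps_{ST}(v) \;=\; \max_{t\in T} d(v,t) \;=\; k \quad \text{for every } v\in S_0,
\]
matching the required upper bound exactly.

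For the second clause (there is a $k$-OV solution $a_0,\dots,a_{k-1}$), I will invoke property~4 of Theorem~\ref{thm:kOV}, which produces $\alpha=(a_0,\dots,a_{k-2})\in L_0$ and $\beta=(a_1,\dots,a_{k-1})\in L_k$ with $d(\alpha,\beta)\geq 3k-2$. Then $\alpha\in S_0$ and $\beta\in T$, so
\[
\eps_{ST}(\alpha) \;\geq\; d(\alpha,\beta) \;\geq\; 3k-2,
\]
giving the required lower bound on some eccentricity in $S_0$.

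There is essentially no obstacle here, and that is the point worth flagging: unlike the Diameter reduction, which had to inflate both distances by $k-1$ via appended matching layers to produce a multiplicative gap within a single pair, the Bichromatic Eccentricity formulation already takes the \emph{maximum} over $t\in T=L_k$, and property~3 pins \emph{every} such distance to exactly $k$ in the no-solution case, while property~4 preserves the $3k-2$ blow-up for some specific $\alpha$ in the yes-case. Thus the raw $k$-OV-graph, with the right labeling, directly realizes the $k$ versus $3k-2$ gap demanded by Proposition~\ref{prop:whoknows}.
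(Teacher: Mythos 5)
Your proof is correct and takes essentially the same approach as the paper: set $T=L_k$, $S=V\setminus T$, $S_0=L_0$, and read off the two clauses directly from properties~3 and~4 of Theorem~\ref{thm:kOV}. The only cosmetic difference is your added remark contrasting this with the Diameter reduction, which the paper omits.
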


\begin{proof}
We begin with the $k$-OV-graph from Theorem~\ref{thm:kOV}. Let $T=L_k$ and let $S$ contain the rest of the vertices in the graph. Let $S_0=L_0$.

If the $k$-OV instance has no solution then by property 3 of Theorem~\ref{thm:kOV} for all $u\in L_0$ and $v\in T$, $d(u,v)=k$. Thus, for all $u\in L_0$, $\eps_{ST}(u)=k$. 

Suppose the $k$-OV instance has a solution $(a_0,\dots,a_{k-1})$. Then by property 4 of Theorem~\ref{thm:kOV}, $d((a_0,\dots,a_{k-2}) \in L_0,(a_1,\dots,a_{k-1}) \in T)\geq 3k-2$, so $\eps_{ST}(a_0,\dots,a_{k-2})\geq 3k-2$.
\end{proof}

\paragraph*{Undirected Bichromatic Radius}

%5/3-eps approx needs n^{2-o(1)} time under HS VIRGi
%5/3-eps approx needs n^{2-o(1)} time under HS. VIRGI

%no finite approx. possible for directed in subquadratic time under HS,  VIRGI
The following theorem implies that our $\tilde{O}(m^{3/2})$ time $5/3$-approximation algorithm for undirected Bichromatic Radius from Theorem~\ref{thm:birad} is tight in approximation factor under the HS hypothesis.
\begin{theorem}\label{biradlb}
Under the HS hypothesis, any algorithm for Bichromatic Radius that achieves a $(5/3-\delta)$-approximation factor for $\delta>0$ in $m$-edge undirected unweighted graphs requires $m^{2-o(1)}$ time.
\end{theorem}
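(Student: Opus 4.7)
The plan is to reduce from the Hitting Set (HS) problem by constructing an undirected unweighted graph whose $S$-to-$T$ Radius has a $5/3$ multiplicative gap between the YES and NO cases of HS, while keeping $m=\tilde{O}(n)$. Starting from an HS instance $U,V\subseteq\{0,1\}^d$ with $d=\omega(\log n)$, I would take the OV-graph on $U\cup V\cup C$ (with $C=[d]$ and edges $(x,c)$ iff $x[c]=1$) and attach, for each $v\in V$, a new pendant ``twin'' vertex $v'$ of degree one connected only to $v$; let $V'=\{v':v\in V\}$. To get a bichromatic instance I set $S=U$ and $T=V\cup C\cup V'$, which partitions the vertex set, and the total edge count remains $\tilde{O}(n)$.

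For the YES case, fix an HS solution $u^*\in U$, so $u^*\cdot v\neq 0$ for every $v\in V$. Then $u^*$ and $v$ share some coordinate $c$, giving $d(u^*,v)=2$ in the OV-graph and consequently $d(u^*,v')\le 3$ through the pendant edge. The standing assumption on HS-inputs recalled in the preliminaries gives $d(u^*,c)\le 3$ for every $c\in C$. Hence $\eps_{ST}(u^*)\le 3$ and $R_{ST}\le 3$. For the NO case, every $u\in U$ has some $v\in V$ with $u\cdot v=0$, for which $d(u,v)\ge 4$ in the OV-graph; because $v'$ is a leaf attached only to $v$, this forces $d(u,v')=d(u,v)+1\ge 5$. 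Every $u\in S$ therefore has $T$-eccentricity at least $5$, giving $R_{ST}\ge 5$.

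A $(5/3-\delta)$-approximation algorithm would output $R'$ with $R_{ST}\le R'\le(5/3-\delta)R_{ST}$, so in the YES case $R'\le (5/3-\delta)\cdot 3 = 5-3\delta<5$ and in the NO case $R'\ge 5$, letting the algorithm decide HS. Since $m=\tilde{O}(n)$, an $O(m^{2-\eps})$ implementation would yield an $n^{2-\eps'}$ algorithm for HS, contradicting the HS hypothesis.

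The main subtlety is ensuring the NO-side radius is strictly larger than $4$: this is exactly what the pendant layer $V'$ buys, because each $v'$ has a unique neighbor $v$ and hence inherits a $+1$ in every distance from $S$. The twins must be attached as leaves rather than connected to $C$ or to each other, since any additional edges incident to $V'$ could shortcut the NO-distances back below $5$ and destroy the $3$-vs-$5$ gap. Verifying that no vertex in $C$, $V$, or $V'$ can serve as a ``better'' center and spoil the NO lower bound is immediate since the Bichromatic Radius is minimized over $S=U$ only.
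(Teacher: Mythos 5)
Your construction is exactly the paper's: the pendant twin layer $V'$ is the paper's $V_2$ attached by a matching to $V_1=V$, with $S=U$, $T=V\cup V'\cup C$, and the same $3$-vs-$5$ gap argument. The proof is correct and takes essentially the same approach as the paper.
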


\begin{proof}
Given an instance $U,V\subseteq \{0,1\}^d$ of OV, let $G(U,V)$ be its OV-graph. Create $G'$ which has the same vertex set as $G(U, V)$ except instead of having a vertex for every $v \in V$ it has two copies $v_1 \in V_1$ and $v_2 \in V_2$.

 The edges for $G'$ are: for $u\in U, c\in C$, we add $(u,c)$ as an edge iff $u[c]=1$. For $v\in V, c\in C$, we add $(c, v_1)$ as an edge iff $v[c]=1$. For each $v \in V$ we add an edge $(v_1, v_2)$.
Set $S=U$ and $T=V_1\cup V_2\cup C$. The number of edges in the graph is $O(nd)$.

Suppose that there is no HS solution, then for all $u\in U$ there is some $v\in V$ so that $u\cdot v=0$ and hence $d(u, v_2)\geq 5$. If there is an HS solution $u\in U$, then for all $t\in T$, $d(u,t)\leq 3$.
\end{proof}

\paragraph*{Directed Bichromatic Diameter}

%~2-eps approx for directed needs n^{2-o(1)} time under OV (orig construction is weighted but can be made unweighted) NIKHIL

The following theorem implies that our $m^{3/2}$ $2$-approximation algorithm for directed Bichromatic Diameter from Theorem~\ref{thm:bidirdiam} has a tight approximation factor under SETH.

\begin{theorem}\label{thm:bidirdiamlb}
Under SETH, any algorithm for directed Bichromatic Diameter that achieves a $(2-\delta)$-approximation factor for $\delta>0$ in $m$-edge  graphs requires $m^{2-o(1)}$ time.
\end{theorem}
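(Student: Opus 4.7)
The plan is to reduce 2-OV to directed Bichromatic Diameter with a factor-2 gap in the $ST$-diameter. Since 2-OV requires $n^{2-o(1)}$ time under SETH (with $d=\omega(\log n)$), and a $(2-\delta)$-approximation algorithm with $\delta>0$ distinguishes $D_{ST}=k$ from $D_{ST}=2k$, this will force any such approximation to require $m^{2-o(1)}$ time on graphs with $m=\tilde{O}(nd)=\tilde{O}(n)$ edges.

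First I would construct the graph. Starting from the directed OV-graph on $U\cup C\cup V$ with edges $u\to c$ iff $u[c]=1$ and $c\to v$ iff $v[c]=1$---in which $d(u,v)=2$ iff $u\cdot v\neq 0$---I would augment with auxiliary vertices and edges forming a ``bypass'' whose only use is to add length $2k$ to paths that cannot use the direct OV connection. I would partition $V(G)=S\sqcup T$ by placing $U$ together with the ``source-side'' auxiliary vertices in $S$, and $V$ together with the remaining ``sink-side'' auxiliary vertices in $T$; the coordinate set $C$ would be placed so as not to introduce large $S\to T$ distances (likely in $T$, or split between $S$ and $T$). The construction analogous to the undirected Bichromatic Diameter construction (Theorem~\ref{thm:bidiamlb}) but oriented and extended to preserve finite distances in both cases.

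Next I would verify the two cases. In the no-OV case, every $u\in U$ is non-orthogonal to every $v\in V$, so $d(u,v)=2$ via a direct common coordinate; for the remaining $(s,t)$ pairs involving coordinates or auxiliary vertices, I would exhibit a directed path of length at most $k$ using either the OV edges or short auxiliary connections, yielding $D_{ST}(G)\leq k$. In the OV case, let $(u^*,v^*)$ be a witness pair; I would argue that the direct OV path is blocked because $u^*\cdot v^*=0$, and that any alternative directed path from $u^*$ to $v^*$ must traverse the bypass, incurring length at least $2k$, so $D_{ST}(G)\geq 2k$.

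Finally, a $(2-\delta)$-approximation running in $O(m^{2-\varepsilon})$ time would, on the constructed graph, output an estimate $\hat D$ with $D_{ST}/(2-\delta)\leq\hat D\leq D_{ST}$; the intervals $[k/(2-\delta),k]$ and $[2k/(2-\delta),2k]$ are disjoint for $\delta>0$, so $\hat D$ decides the 2-OV instance, yielding an $\tilde{O}(n^{2-\varepsilon})$ algorithm and contradicting SETH.

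The main obstacle will be engineering the auxiliary bypass structure so that it simultaneously (i) keeps every $S\to T$ distance in the no-OV case bounded by $k$ and (ii) provides no alternative path of length $<2k$ between the OV witness. The first requirement demands sufficient bypass connectivity to cover all $(s,t)$ pairs (including those involving coordinate vertices), while the second forbids introducing shortcuts that circumvent the OV test. Getting this balance right---and in particular choosing how to place $C$ relative to the $S/T$ partition so that $(c,v)$-type pairs do not themselves already force $ST$-diameter close to $2k$ in the no-OV case---is the delicate part of the construction.
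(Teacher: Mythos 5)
Your overall strategy --- reduce from OV, keep the graph finite via a ``bypass,'' and engineer a factor-$2$ gap --- matches the paper's approach in spirit, but you explicitly leave the construction of the bypass unresolved (``the main obstacle will be engineering the auxiliary bypass structure\ldots''), and that construction is the whole content of the proof. One of your tentative design choices also pulls in the wrong direction: you propose placing ``source-side'' auxiliary vertices in $S$, but in the paper's construction there are none --- $S=U$ and \emph{everything else} (the coordinate set $C$, the bypass path, and all added vertices) lives in $T$. That keeps the bookkeeping minimal, since the only $S\to T$ distances to control are those out of $U$; any auxiliary vertices in $S$ would create new $S\to T$ pairs whose distances you would then owe a bound on.

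More importantly, a bypass ``whose only use is to add length $2k$'' cannot exist as a single structure: any structure reachable from $U$ and placed in $T$ also contributes to the no-OV diameter, so a detour of length $2k$ would already force $D_{ST}\gtrsim 2k$ in the no-OV case. The paper instead splits the stretching across \emph{two} devices of length roughly $\ell$ each: (i) a directed path $P=p_1\to\cdots\to p_{\ell-2}$ with edges $c\to p_1$ for every $c\in C$ and $p_{\ell-2}\to u$ for every $u\in U$, giving a detour of length about $\ell$ from $C$ back into $U$; and (ii) a chain of $\ell$ matched copies $V_1\to V_2\to\cdots\to V_\ell$ replacing the single layer $V$. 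In the no-OV case every $u$ reaches every $c\in C$ and every $p_i$ within $\ell+1$ steps via the detour, and reaches $v_\ell$ in $\ell+1$ steps via the direct OV edge $u\to c\to v_1\to\cdots\to v_\ell$. In the OV case with witness $(u^*,v^*)$ the direct route from $u^*$ into $v^*_1$ is unavailable, so any $u^*\to v^*_\ell$ path must traverse \emph{both} the full detour \emph{and} the full $V$-chain, giving $d(u^*,v^*_\ell)\geq 2\ell+1$. The detour alone does not suffice --- without the $V$-chain, the OV-case distance would be roughly $\ell+3$, and the ratio would tend to $1$, not $2$. This two-part stretching is the key idea your sketch is missing; taking $\ell>1/\delta$ makes $(2\ell+1)/(\ell+1)>2-\delta$ and completes the SETH argument.
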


\begin{proof}

We will show that under SETH, for any positive integer $\ell$, distinguishing between Bichromatic Diameter $\ell+1$ and $2\ell+1$ requires $m^{2-o(1)}$ time. 

Given an instance $U,V\subseteq \{0,1\}^d$ of OV, let $G(U,V)$ be its OV-graph. Create $G'$ which has the same vertex set as $G(U, V)$ except instead of having one vertex for every $v \in V$ it has $\ell$ copies $v_i \in V_i$ for $1 \leq i \leq \ell$. It also has $\ell-2$ additional vertices: $P = \{p_1, p_2,\ldots, p_{\ell-2}\}$.

The edges of $G'$ are: for $u\in U, c\in C$, we add $(u,c)$ as an edge iff $u[c]=1$, and for $c\in C, v\in V$, we add $(c,v_1)$ as an edge iff $v[c]=1$. We add a matching going from $V_i$ to $V_{i+1}$ where edges join the nodes which are copies of each other. For each $c\in C$, we add an edge $(c, p_1)$. We add a path from $p_1$ to $p_{\ell-2}$. For each $u\in U$, we add an edge $(p_{\ell-2}, u)$.
Set $S=U, T=C\cup P\cup V_1\cup V_2 \ldots V_\ell$. The number of edges in the graph is $O(nd)$.

Consider any $u\in U$. By construction, $d(u, z)\leq \ell+1$ for $z \in C \cup P$.
Suppose that there is no OV solution, then for all $u\in U, v\in V$, $u\cdot v \neq 0$ and hence $d(u,v_i)\leq \ell+1$. If there is an OV solution $u\in U, v\in V$, then, $d(u,v_\ell)\geq 2\ell+1$ as the only path is through $P$.
\end{proof}

\paragraph*{Directed Bichromatic Eccentricities}

\begin{proposition}\label{prop:bidirecclb}
Under SETH, any algorithm for Bichromatic Eccentricities that achieves a finite approximation factor in $m$-edge directed graphs requires $m^{2-o(1)}$ time.
\end{proposition}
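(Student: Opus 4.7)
The plan is a direct reduction from OV using the standard OV-graph adapted to the bichromatic directed setting. Given an OV instance $U,V\subseteq\{0,1\}^d$ with $|U|=|V|=n$ and $d=\omega(\log n)$, I will form the directed graph on vertex set $U\cup C\cup V$, where $C=[d]$, with a directed edge $u\to c$ whenever $u[c]=1$ and $c\to v$ whenever $v[c]=1$. I set $S:=U\cup C$ and $T:=V$, giving a bichromatic partition; the graph has $O(nd)$ edges.

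The key observation is that for any $u\in U$ and $v\in V$, the only directed $u\to v$ walks pass through $C$, and the length-$2$ walk $u\to c\to v$ exists iff some coordinate $c$ satisfies $u[c]=v[c]=1$. Hence $d(u,v)=2$ when $u\cdot v\neq 0$ and $d(u,v)=\infty$ otherwise. Consequently $\eps_{ST}(u)=2$ when $u$ participates in no OV-solution, and $\eps_{ST}(u)=\infty$ whenever some $v^*\in V$ satisfies $u\cdot v^*=0$, so the OV instance has a solution iff there exists $u\in U$ with $\eps_{ST}(u)=\infty$.

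Any approximation algorithm with finite factor $\alpha$ must output $\eps'(u)=\infty$ precisely when $\eps_{ST}(u)=\infty$, since $\alpha\cdot\infty=\infty$. So after one call to the approximation algorithm I scan the estimates $\eps'(u)$ for $u\in U$ and declare OV solvable iff any is infinite. An $O(m^{2-\eps})$-time finite-factor approximation would therefore solve OV in $O((nd)^{2-\eps})$ time, which for $d=\polylog(n)$ is $n^{2-\Omega(1)}$, contradicting SETH. One small subtlety to verify is that placing $C$ inside $S$ is harmless: although many $c\in C$ will have $\eps_{ST}(c)=\infty$ (whenever some $v$ has $v[c]=0$), this does not affect the decision rule because we only inspect the estimates on $U$. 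The only real conceptual care needed is arranging the edge directions so that $u$-to-$v$ reachability faithfully detects the OV condition without any shortcut paths bypassing $C$; the three-layer orientation above accomplishes exactly this.
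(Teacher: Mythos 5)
Your proof is correct and matches the paper's argument essentially exactly: the same directed OV-graph with $S=U\cup C$, $T=V$, the same observation that $d(u,v)\in\{2,\infty\}$ according to whether $u\cdot v\neq 0$, and the same remark that one only needs to inspect the eccentricity estimates of vertices in $U$ (the paper likewise notes the eccentricities of $C$-nodes are irrelevant). Nothing to add.
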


\begin{proof}
Given an instance $U,V\subseteq \{0,1\}^d$ of OV, let $G(U,V)$ be its OV-graph. Now, direct the edges from $U$ to $C$ and from $C$ to $V$ and set $S=U\cup C$, $T=V$. Notice this is an instance of Bichromatic Eccentricities.

Now, for every $u\in U,v\in V$, if $u\cdot v\neq 0$, $d(u,v)=2$ and if $u\cdot v=0$, $d(u,v)=\infty$ as there is no path from $u$ to $v$. Thus, if there is an OV pair, then the $ST$-Eccentricity for every $u\in U\subseteq S$ is $\infty$, and otherwise it is $2$. Any finite approximation to the $ST$-Eccentricities can distinguish between $\infty$ and $2$, and thus can solve OV. (Notice, we do not even need the Eccentricities of nodes in $C$.) Thus, there can be no $m^{2-\eps}$ time algorithm for $\eps>0$ that achieves a finite approximation factor if SETH holds.
\end{proof}

\paragraph*{Directed Bichromatic Radius}

\begin{proposition}\label{prop:bidirradlb}
Under the HS hypothesis, any algorithm for Bichromatic Radius that achieves a finite approximation factor in $m$-edge directed graphs requires $m^{2-o(1)}$ time. 
\end{proposition}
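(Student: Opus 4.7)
The plan is to adapt the reduction used in Proposition~\ref{prop:bidirecclb} for Bichromatic Eccentricities so that $S$ and $T$ genuinely partition the vertex set, and so that the coordinate vertices cannot accidentally become good approximate centers. As before, I start from an HS instance $U,V\subseteq\{0,1\}^d$, form the OV-graph, and direct all edges $U\to C$ and $C\to V$. To get a bichromatic partition I set $T=V$ and $S=U\cup C$.

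Before invoking the reduction, I would do a simple preprocessing step in $O(nd)$ time: for every coordinate $c\in[d]$ check whether $v[c]=1$ for all $v\in V$. If such a $c$ exists and some $u\in U$ has $u[c]=1$, output YES (this $u$ is trivially a hitting set solution, since $u\cdot v\geq 1$ for all $v$); otherwise, no $u$ has $u[c]=1$ either, so coordinate $c$ contributes $0$ to every inner product and can be deleted without changing the HS instance. After this preprocessing, for every $c\in C$ there exists some $v\in V$ with $v[c]=0$.

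Now I would analyze the $ST$-eccentricities. For $u\in U$, by construction $d(u,v)=2$ iff $u\cdot v\neq 0$, and $d(u,v)=\infty$ otherwise (the only paths from $U$ to $V$ go through $C$). For $c\in C$, $d(c,v)=1$ iff $v[c]=1$; thanks to the preprocessing, some $v\in V$ is unreachable from $c$, so $\epsilon_{ST}(c)=\infty$. Hence
\[
R_{ST} \;=\; \min_{s\in U\cup C}\epsilon_{ST}(s) \;=\; \min_{u\in U}\epsilon_{ST}(u),
\]
which equals $2$ if HS has a solution (take $u^*$ with $u^*\cdot v\neq 0$ for every $v$) and equals $\infty$ otherwise (for every $u$ some $v$ witnesses $u\cdot v=0$, giving $d(u,v)=\infty$).

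Any algorithm with a finite approximation ratio would distinguish $R_{ST}=2$ from $R_{ST}=\infty$ and thus solve HS. Since the graph has $m=O(nd)=\tilde{O}(n)$ edges, an $O(m^{2-\varepsilon})$-time finite-approximation algorithm would yield an $n^{2-\varepsilon'}$-time algorithm for HS, contradicting the HS hypothesis. The only nontrivial point is ensuring that the $C$ vertices, which are forced into $S$ by the bichromatic partition, do not corrupt the radius; the preprocessing step above is what handles this, and I expect it to be the only subtlety in the argument.
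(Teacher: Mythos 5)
Your proof is correct, but it takes a genuinely different route from the paper's. The issue both arguments must handle is that the bichromatic partition forces the coordinate vertices $C$ into $S$, and a priori some $c\in C$ might have small $ST$-eccentricity and corrupt the radius. The paper solves this with a graph gadget: it adds a fresh vertex $z$ to $T$ with directed edges $(u,z)$ for all $u\in U$, so every $c\in C$ has $\epsilon_{ST}(c)=\infty$ (it cannot reach $z$) regardless of the HS instance's structure. You instead solve it with instance preprocessing: in $O(nd)$ time you eliminate any coordinate $c$ with $v[c]=1$ for all $v\in V$ (either outputting YES outright or deleting a vacuous coordinate), so afterwards every $c\in C$ fails to reach some $v\in V$ and hence has $\epsilon_{ST}(c)=\infty$. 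Both are valid and give the same $2$-vs-$\infty$ gap. The paper's gadget is self-contained and oblivious to the instance; your preprocessing avoids adding vertices but needs the (easily handled) edge case where deleting coordinates trivializes the instance, in which case one should output NO directly. Incidentally, the paper's preliminaries already stipulate a related normalization of HS instances (for every $i$ some $v$ has $v[i]=1$), which is the complementary condition to the one you enforce.
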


\begin{proof}
The proof is similar to that for Bichromatic Eccentricities.
Given an instance $U,V\subseteq \{0,1\}^d$ of HS, let $G(U,V)$ be its OV-graph. Now, direct the edges from $U$ to $C$ and from $C$ to $V$, and add an extra node $z$ so that for every $u\in U$ there is a directed edge $(u,z)$. Set $S=U\cup C$, $T=V\cup \{z\}$.

First, if the $ST$-Radius is finite, the $ST$-center (the node achieving the Radius) must be in $U$, since no node in $C$ can reach $z$, by construction. The distance $d(u,z)$ is $1$ for all $u\in U$.
For every $u\in U,v\in V$, if $u\cdot v\neq 0$, $d(u,v)=2$ and if $u\cdot v=0$, $d(u,v)=\infty$ as there is no path from $u$ to $v$. Thus, if there is a HS solution, then the $ST$-Radius is $2$, and otherwise it is $\infty$. Any finite approximation to the $ST$-Radius can distinguish between $\infty$ and $2$, and thus can solve HS. Thus, there can be no $m^{2-\eps}$ time algorithm for $\eps>0$ that achieves a finite approximation factor if the HS hypothesis holds.
\end{proof}

\subsection{$ST$-Diameter, Eccentricities, and Radius}
% We begin with the following graph construction used in prior work.
% Given two sets $U,V\subseteq \{0,1\}^d$, we create their OV graph $G(U,V)$ as follows. The vertices of $G(U,V)$ are $U\cup V\cup [d]$. For $x\in U\cup V$ and $c\in [d]$ there is an (undirected) edge $(x,c)$ if and only if $x[c]=1$. If $|U|=|V|=n$, the number of vertices is $O(n+d)$ and the number of edges is $O(nd)$.
% The main property of $G(U,V)$ is: for $u\in U, v\in V$, if $u\cdot v\neq 0$, $d(u,v)=2$, and otherwise, $d(u,v)\geq 4$.

% From the OV graph we can easily derive several simple results.
% I commented out this above bc it's from the prelims
\paragraph*{Undirected $ST$-Diameter and Eccentricities}

For undirected graphs, Backurs et al.~\cite{diamstoc18} give a time-accuracy trade-off lower bound for $ST$-Diameter that immediately extends to $ST$-Eccentricities (since any Eccentricities algorithm gives a Diameter algorithm with the same running time and accuracy by taking the maximum of Eccentricities).

The following theorem shows that our algorithms for $ST$-Eccentricities from Theorem~\ref{STecc2approx-edge} and Proposition~\ref{STecc3approx} are tight under SETH.

\begin{theorem}[\cite{diamstoc18}]\label{thm:stecclb}
Under SETH, for every $k\geq 2$, every algorithm for $ST$-Diameter (and thus $ST$-Eccentricities) that achieves a $((4k-3)/(2k-1)-\delta)$-approximation for $\delta>0$ in undirected unweighted graphs requires $m^{1+1/(k-1)-o(1)}$ time. 
\end{theorem}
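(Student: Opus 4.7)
The plan is to prove this via a $k$-OV reduction that produces an unweighted, undirected graph with a $(2k-1, 4k-3)$ gap between the ``no solution'' and ``solution'' cases for $ST$-Diameter, and then observe that $ST$-Eccentricities inherits the bound because the maximum over $s \in S$ of an Eccentricity estimate is a Diameter estimate of the same approximation factor. Since Bichromatic Diameter is a special case of $ST$-Diameter, this is essentially the same construction as the one used for Theorem~\ref{thm:bidiamlb}; my goal is just to verify that the gap remains when we drop the partitioning constraint and then work out the edge count.

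Concretely, given a $k$-OV instance with sets $W_0, \ldots, W_{k-1} \subseteq \{0,1\}^d$, I would begin with the $k$-OV-graph provided by Theorem~\ref{thm:kOV}, then append a ``tail'' of $k-1$ additional layers $L_{k+1}, \ldots, L_{2k-1}$, each an isomorphic copy of $L_k$ in vertex set, with each $L_{j}$ connected to $L_{j-1}$ by a perfect matching that identifies counterparts $(a_1,\ldots,a_{k-1})$. Set $S = L_0$ and $T = L_{2k-1}$. In the no-solution case, property 3 of Theorem~\ref{thm:kOV} gives $d(u, v') = k$ for every $u \in L_0$ and every $v' \in L_k$, and following the matching chain adds exactly $k-1$ more edges, so $d(u, v) \leq 2k-1$ for all $u \in S$, $v \in T$. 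In the solution case $(a_0,\ldots, a_{k-1})$, every path from $(a_0,\ldots,a_{k-2}) \in S$ to its $L_{2k-1}$-counterpart of $(a_1,\ldots,a_{k-1})$ must pass through $(a_1,\ldots,a_{k-1}) \in L_k$ (since the matchings are bijections), so property 4 forces the distance to be at least $(3k-2) + (k-1) = 4k-3$.

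For the time bound, Theorem~\ref{thm:kOV} constructs the base graph with $O(kn^{k-1}d^{k-1})$ edges, and the matching tail contributes only $O(k n^{k-1})$ more, so for $d = \polylog n$ the resulting graph has $m = \tilde{O}(n^{k-1})$ edges, i.e.\ $n = \tilde{O}(m^{1/(k-1)})$. Under SETH, $k$-OV requires $n^{k - o(1)}$ time, which translates into $m^{k/(k-1) - o(1)} = m^{1 + 1/(k-1) - o(1)}$ time for distinguishing $ST$-Diameter $\leq 2k-1$ from $\geq 4k-3$; hence any $((4k-3)/(2k-1) - \delta)$-approximation algorithm with $\delta > 0$ running in $m^{1+1/(k-1) - o(1)}$ time would refute SETH. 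The extension to $ST$-Eccentricities is immediate: an $\alpha$-approximation of $\eps_{ST}(s)$ for every $s \in S$, followed by a single max, yields an $\alpha$-approximation of $ST$-Diameter with the same asymptotic runtime.

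The only real subtlety, rather than an obstacle, is ensuring that the $L_k \to L_{2k-1}$ tail is genuinely bottlenecked by the matchings so that the $+(k-1)$ additive increase cannot be shortcut through a different $L_k$ vertex; this is immediate from the fact that each matching is a bijection on counterparts, so the unique $L_{2k-1}$-preimage of the ``target'' tuple is uniquely determined. Everything else is bookkeeping on layer sizes and edge counts matching Theorem~\ref{thm:kOV}.
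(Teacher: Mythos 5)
This theorem is attributed to~\cite{diamstoc18} and the paper offers no proof of its own, so the relevant comparison is to the original. Your construction is internally correct and does establish hardness of distinguishing $ST$-Diameter $2k-1$ from $4k-3$, matching the ratio $(4k-3)/(2k-1)$ in the statement as literally written. But that ratio is almost certainly a copy-paste artifact from the neighboring Bichromatic Theorem~\ref{thm:bidiamlb}: Table~\ref{table:STundir} records the $ST$-Diameter lower bound as $3-2/k-\delta$, and the paragraph following the theorem instantiates $k=2$ to obtain a $(2-\delta)$ lower bound and $k=3$ to obtain an $m^{3/2-o(1)}$ runtime lower bound for the $2$-approximation of Theorem~\ref{STecc2approx-edge}. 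Neither of those follows from $(4k-3)/(2k-1)$, which equals $5/3$ at $k=2$ and $9/5<2$ at $k=3$, so your proof would not suffice for the conclusions the paper actually draws from this theorem.

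The $k-1$ matching tail you append is a Bichromatic-only device and, for plain $ST$-Diameter, actively weakens the result. In the Bichromatic setting, $T=V\setminus S$ is forced to contain the middle layers $L_1,\dots,L_{k-1}$, some of which can sit close to $L_0$; the tail is what manufactures far $T$-vertices and opens up the $(2k-1,\,4k-3)$ gap. In the unconstrained $ST$ setting, $T$ is a free parameter, so you should simply take $T=L_k$ and use the $k$-OV-graph of Theorem~\ref{thm:kOV} unmodified: property~3 gives $D_{ST}=k$ when there is no $k$-OV solution, property~4 gives $D_{ST}\geq 3k-2$ when there is one, and the edge count is the same, so the time bound $m^{1+1/(k-1)-o(1)}$ is unchanged. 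The resulting ratio $(3k-2)/k=3-2/k$ strictly dominates $(4k-3)/(2k-1)$ for every $k\geq 2$ and is what both the table and the tightness discussion actually require. Your final reduction from $ST$-Eccentricities to $ST$-Diameter via taking a maximum is fine and needs no change.
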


In particular, setting $k=2$ and $3$ in Theorem~\ref{thm:stecclb} shows that our $m^{3/2}$ time $2$-approximation algorithm for $ST$-Eccentricities from Theorem~\ref{STecc2approx-edge} is tight  under SETH, in terms of both approximation factor and runtime. Furthermore, setting $k$ to be arbitrarily large implies that our $\tilde{O}(m)$ time 3-approximation algorithm for $ST$-Eccentricities from Proposition~\ref{STecc3approx} is tight under SETH.

% \begin{proposition}
% Under SETH, any algorithm for $ST$-Eccentricities that achieves a $3-\delta$ approximation factor for $\delta>0$ in $m$-edge undirected graphs requires $m^{1+o(1)}$ time.
% \end{proposition}

% \begin{proposition}
% Under SETH, any algorithm for $ST$-Eccentricities that achieves a $2-\delta$ approximation factor for $\delta>0$ in $m$-edge undirected graphs requires $m^{2-o(1)}$ time.
% \end{proposition}
\paragraph*{Undirected $ST$-Radius}

The following proposition shows that our $\tilde{O}(m^{3/2})$ time 2-approximation algorithm for undirected $ST$-Radius from Theorem~\ref{STradius2approx-edge} has a tight approximation factor under the HS hypothesis.

\begin{proposition}\label{2LBradius}
Under the HS hypothesis, any algorithm for $ST$-Radius that achieves a $(2-\delta)$-approximation for $\delta>0$ in $m$-edge undirected graphs requires $m^{2-o(1)}$ time. 
\end{proposition}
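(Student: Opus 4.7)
My plan is to reduce from the Hitting Set (HS) problem to $ST$-Radius via the OV-graph construction already described in the preliminaries, producing a factor-$2$ gap in the $ST$-Radius value that any $(2-\delta)$-approximation can detect.

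Concretely, given an HS instance $U, V \subseteq \{0,1\}^d$ with $|U|=|V|=n$ and $d = \omega(\log n)$, I would build the OV-graph $G$ from the preliminaries: vertices are the $u \in U$, the $v \in V$, and one node per coordinate $c \in C = [d]$, with an edge $(x, c)$ whenever $x[c] = 1$ for $x \in U \cup V$. Set $S = U$ and $T = V$. This graph has $m = O(nd) = \tilde{O}(n)$ edges. The two cases to analyze are already essentially observed in the preliminaries:

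\textbf{Completeness (HS solution exists).} If some $u^\star \in U$ satisfies $u^\star \cdot v \neq 0$ for every $v \in V$, then for each $v \in V$ there is a coordinate $c$ with $u^\star[c] = v[c] = 1$, giving the $2$-path $u^\star \to c \to v$. Hence $\epsilon_{ST}(u^\star) \leq 2$ and $R_{ST}(G) \leq 2$.

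\textbf{Soundness (no HS solution).} For every $u \in U$ there exists $v \in V$ with $u \cdot v = 0$, so $u$ and $v$ share no coordinate-neighbor and $d(u,v) > 2$. Since $G$ is bipartite between $U \cup V$ and $C$, distances between $U$ and $V$ are even, forcing $d(u,v) \geq 4$. Thus $\epsilon_{ST}(u) \geq 4$ for every $u \in U$, and hence $R_{ST}(G) \geq 4$.

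A $(2-\delta)$-approximation would output an estimate $R'$ with $R_{ST} \leq R' \leq (2-\delta) R_{ST}$; in the completeness case $R' \leq 2(2-\delta) < 4$, while in the soundness case $R' \geq 4$, so the algorithm distinguishes the two cases and solves HS. An $m^{2-\epsilon}$ time $ST$-Radius algorithm would then solve HS in $n^{2-\epsilon}\, \mathrm{polylog}(n)$ time (since $m = \tilde{O}(n)$), contradicting the HS hypothesis. No real obstacle is anticipated here; the only care needed is to verify the parity argument that rules out $d(u,v) = 3$ in the soundness case, which is immediate from the bipartiteness of the OV-graph between $U \cup V$ and $C$.
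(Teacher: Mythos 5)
Your proof is correct and matches the paper's argument: both reduce from HS via the OV-graph with $S=U$, $T=V$, establishing $R_{ST}=2$ when an HS solution exists and $R_{ST}\geq 4$ otherwise. The parity observation you add to justify $d(u,v)\geq 4$ is already implicit in the preliminaries, so this is essentially the same proof.
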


\begin{proof}
Given an instance $U,V\in \{0,1\}^d$ of HS, let $G$ be the OV-graph defined on this instance. Let $S=U$ and $T=V$. Suppose that there is a node $u\in U$ which is not orthogonal to any node in $V$. Then for each $v\in V$, $d(u,v)=2$ by using the coordinate node on which both $u$ and $v$ are $1$. So in this case the $ST$-Radius is $2$. Suppose on the other hand that no such node in $U$ exists, so that for each node $u\in U,$ there is a node $v\in V$ such that $u\cdot v=0$. Then $d(u,v)\ge 4$. Since $S=U$, the $ST$-Radius is at least $4$ in this case. 

So any $(2-\delta)$-approximation algorithm can distinguish between $ST$-Radius $2$ and $4$, and thus solve HS. Therefore, there can be no $m^{2-\epsilon}$ time algorithm for $\epsilon>0$ that achieves a $(2-\delta)$-approximation factor if HS hypothesis holds.
\end{proof}

\paragraph*{Directed $ST$-Diameter}

\begin{proposition}
Under SETH, any algorithm for $ST$-Diameter that achieves a finite approximation factor in $m$-edge directed graphs requires $m^{2-o(1)}$ time. 
\end{proposition}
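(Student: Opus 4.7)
The plan is to mimic the construction used for Directed Bichromatic Eccentricities in Proposition~\ref{prop:bidirecclb}, since a finite-approximation lower bound only needs the construction to distinguish ``finite'' from ``infinite'' $ST$-distances. Given an OV instance $U, V \subseteq \{0,1\}^d$ with $|U|=|V|=n$ and $d = \omega(\log n)$ (we may take $d = \polylog(n)$), I build the OV-graph on vertex set $U \cup C \cup V$ where $C$ is the coordinate set, but orient every edge: for $u \in U$ and $c \in C$, add the directed edge $(u,c)$ iff $u[c]=1$; for $c \in C$ and $v \in V$, add the directed edge $(c,v)$ iff $v[c]=1$. Finally set $S = U$ and $T = V$. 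The graph has $m = O(nd) = \tilde{O}(n)$ edges.

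Next I analyze distances. For $u \in U$ and $v \in V$: if $u \cdot v \neq 0$ then any coordinate $c$ with $u[c]=v[c]=1$ gives a directed path $u \to c \to v$, so $d(u,v) = 2$; if $u \cdot v = 0$ then since every directed path from $U$ to $V$ must traverse exactly one node of $C$ corresponding to a shared $1$-coordinate, there is no such path and $d(u,v) = \infty$. Hence the directed $ST$-Diameter is $2$ when the OV instance has no solution and $\infty$ when it does.

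Consequently, any algorithm returning a finite approximation factor $\alpha$ for $ST$-Diameter can distinguish these two cases (a finite estimate implies $ST$-Diameter $\leq \alpha \cdot 2$, while an infinite answer certifies the OV solution). Under SETH, OV requires $n^{2-o(1)}$ time; since $m = \tilde{O}(n)$, this translates to $m^{2-o(1)}$ lower bound on the runtime, completing the proof. There is no real obstacle here—the argument is a direct transcription of the Bichromatic Eccentricities lower bound, exploiting the fact that the OV-graph has no ``back-edges'' in the directed setting and that $ST$-Diameter only requires one bad pair to blow up.
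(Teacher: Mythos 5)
Your proposal is essentially identical to the paper's proof: both take the OV-graph, direct all edges from $U$ to $C$ and from $C$ to $V$, set $S=U$, $T=V$, and observe that the $ST$-Diameter is $2$ if there is no OV pair and $\infty$ otherwise, so any finite approximation would solve OV in truly subquadratic time. The argument is correct and matches the paper's route.
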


\begin{proof}
Given an instance $U,V\subseteq \{0,1\}^d$ of OV, let $G(U,V)$ be its OV-graph. Now, direct the edges from $U$ to $C$ and from $C$ to $V$ and set $S=U$, $T=V$.

Now, for every $u\in U,v\in V$, if $u\cdot v\neq 0$, $d(u,v)=2$ and if $u\cdot v=0$, $d(u,v)=\infty$ as there is no path from $u$ to $v$. Thus, if there is an OV pair, then the $ST$-Diameter is $\infty$, and otherwise it is $2$. Any finite approximation to the $ST$-Diameter can distinguish between $\infty$ and $2$, and thus can solve OV. Thus, there can be no $m^{2-\eps}$ time algorithm for $\eps>0$ that achieves a finite approximation factor if SETH holds.
\end{proof}

\paragraph*{Directed $ST$-Eccentricities and Radius}

Propositions~\ref{prop:bidirecclb} and~\ref{prop:bidirradlb} immediately carry over to Directed $ST$-Eccentricities and $ST$-Radius since the Bichromatic version is a special case of the $ST$ version. We state the results here for convenience.

\begin{proposition}\label{prop:stdirecclb}
Under SETH, any algorithm for $ST$-Eccentricities that achieves a finite approximation factor in $m$-edge directed graphs requires $m^{2-o(1)}$ time.
\end{proposition}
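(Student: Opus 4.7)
My plan is to observe that Bichromatic Eccentricities is literally a special case of $ST$-Eccentricities: the only added constraint in the Bichromatic version is that $S$ and $T$ partition $V$. Consequently, any algorithm solving $ST$-Eccentricities with a finite approximation factor in time $T(m)$ can be invoked verbatim on a Bichromatic instance and returns estimates satisfying the same guarantee in the same time. So a finite-approximation $ST$-Eccentricities algorithm running in $m^{2-\varepsilon}$ time would yield a finite-approximation Bichromatic Eccentricities algorithm with the same running time, directly contradicting Proposition~\ref{prop:bidirecclb} and hence SETH. This is essentially the one-line reduction the paper alludes to just before stating the proposition.

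To make the argument self-contained I would simply reproduce the construction from the proof of Proposition~\ref{prop:bidirecclb}: given an OV instance $(U, V)$ with $U, V \subseteq \{0,1\}^d$, build the OV-graph $G(U, V)$, orient all $U$-$C$ edges from $U$ to $C$ and all $C$-$V$ edges from $C$ to $V$, and set $S = U \cup C$ and $T = V$. Note that $(S, T)$ already partitions the vertex set, so this is simultaneously a Bichromatic and an $ST$ instance. As argued there, for every $u \in U$ and $v \in V$ we have $d(u, v) = 2$ when $u \cdot v \neq 0$ and $d(u, v) = \infty$ otherwise (no path exists from $u$ to $v$ in the orientation), so $\eps_{ST}(u) \in \{2, \infty\}$ for $u \in U \subseteq S$, and any finite-approximation estimate distinguishes the two OV cases. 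Since the OV-graph has $m = \tilde{O}(n)$ edges, an $m^{2-\varepsilon}$-time finite-approximation algorithm for directed $ST$-Eccentricities would solve OV in $n^{2-\varepsilon'}$ time, refuting SETH. There is no real obstacle to overcome; the purpose of the proposition is only to record this immediate transfer from the Bichromatic lower bound.
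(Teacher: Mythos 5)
Your proposal is correct and matches the paper's argument exactly: the paper also observes that the Bichromatic version is a special case of the $ST$ version, so Proposition~\ref{prop:bidirecclb} transfers immediately. Reproducing the OV-graph construction for self-containment is a harmless addition but not a different approach.
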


\begin{proposition}\label{prop:stdirradlb}
Under the HS hypothesis, any algorithm for $ST$-Radius that achieves a finite approximation factor in $m$-edge directed graphs requires $m^{2-o(1)}$ time. 
\end{proposition}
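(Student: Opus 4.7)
The plan is to observe, as the paper already flags in the sentence preceding the statement, that Bichromatic Radius is literally a special case of $ST$-Radius, so the hardness result of Proposition~\ref{prop:bidirradlb} transfers without any change to the construction. Concretely, given any $ST$-Radius algorithm $\mathcal{A}$ that works for arbitrary $S, T \subseteq V$ in a directed graph, I would simply run it on inputs of the form produced by the Bichromatic Radius lower bound reduction, where $T = V \setminus S$.

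The one-line reduction I would spell out: take the directed graph $G$ and the sets $S = U \cup C$, $T = V \cup \{z\}$ constructed in the proof of Proposition~\ref{prop:bidirradlb} from an HS instance $U, V \subseteq \{0,1\}^d$. By construction $S$ and $T$ partition the vertex set of $G$, so $(G,S,T)$ is a valid input to any $ST$-Radius algorithm. Exactly as in that proof, the $ST$-Radius of $(G,S,T)$ is $2$ if the HS instance has a solution and $\infty$ otherwise, and $G$ has $m = \tilde{O}(n)$ edges. Therefore any algorithm for $ST$-Radius achieving a finite approximation factor can distinguish the two cases, and hence solves HS in time equal to its running time on a sparse graph of size $\tilde{O}(n)$.

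Assuming the HS hypothesis, HS on such instances requires $n^{2-o(1)} = m^{2-o(1)}$ time, which yields the claimed lower bound. There is no real obstacle here: the statement is a strict generalization of Proposition~\ref{prop:bidirradlb}, and the ``main step'' is just noticing that the Bichromatic instance produced there is automatically an $ST$-instance. I would keep the write-up to two sentences, mirroring the style of Proposition~\ref{prop:stdirecclb}.
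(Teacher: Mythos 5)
Your proposal is correct and is exactly the paper's argument: the paper states that Propositions~\ref{prop:bidirecclb} and~\ref{prop:bidirradlb} immediately carry over because the Bichromatic version is a special case of the $ST$ version, which is precisely your observation.
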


\subsection{Subset Diameter, Eccentricities, and Radius}
\paragraph*{Subset Diameter and Eccentricities}

The following proposition implies that our $\tilde{O}(m)$ time 2-approximation algorithm for Subset Diameter from Proposition~\ref{prop:subdiam} is tight under SETH, and that our near-linear time almost 2-approximation algorithm for Subset Eccentricities from Theorem~\ref{thm:subecc} is essentially tight under SETH.

\begin{proposition}\label{prop:subdiamlb}
Under SETH, any algorithm for Subset Diameter (and thus Subset Eccentricities) that achieves a $(2-\delta)$-approximation factor for $\delta>0$ in $m$-edge directed graphs requires $m^{2-o(1)}$ time.
\end{proposition}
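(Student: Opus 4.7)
The plan is to reduce from the Orthogonal Vectors (OV) problem to directed Subset Diameter, producing a directed graph with $\tilde{O}(n)$ edges in which $D_S = 2$ when no OV solution exists and $D_S \geq 4$ when one does. Because $4/(2-\delta) > 2$, any $(2-\delta)$-approximation algorithm separates these two cases and so would solve OV in $O(m^{2-\eps})$ time, contradicting SETH via the standard $n^{2-o(1)}$ hardness for OV on vectors in dimension $d = \omega(\log n)$.

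The construction is a lightly modified OV-graph. Given $U, V \subseteq \{0,1\}^d$ with $|U|=|V|=n$, I will first append two new coordinates $c^*, c^{**}$: set $u[c^*] = 1$ and $v[c^*] = 0$ for every $u \in U, v \in V$, and set $v[c^{**}] = 1$ and $u[c^{**}] = 0$ for every $u,v$. These additions preserve every inner product $u \cdot v$ (each contributes $0$), so OV solutions are unchanged. Then I build the usual OV-graph on the augmented instance: vertices $U \cup V \cup C$ where $C$ is the enlarged coordinate set, with an edge between a vector and a coordinate iff the vector has a $1$ at that coordinate. Finally, I replace every edge with its two directed copies and set $S = U \cup V$. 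The resulting directed graph has $O(nd) = \tilde{O}(n)$ edges.

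The key distance computations: because the graph is symmetric, distances match those of the underlying undirected OV-graph. For any $u, u' \in U$, the path $u \to c^* \to u'$ gives $d(u, u') = 2$; analogously $d(v, v') = 2$ for every $v, v' \in V$ via $c^{**}$. For a cross-pair $u \in U, v \in V$, the two-step path $u \to c \to v$ exists iff some coordinate has $u[c] = v[c] = 1$, and the dummy coordinates contribute nothing here since $v[c^*] = 0$ and $u[c^{**}] = 0$. Hence $d(u, v) = 2$ iff $u \cdot v \neq 0$, and otherwise $d(u, v) \geq 4$, since any alternative path must traverse at least two additional coordinate-vector hops. Taking maxes over $S \times S$, $D_S = 2$ when no OV pair exists, and $D_S \geq 4$ when one exists.

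The only subtle point is arranging the $U$--$U$ and $V$--$V$ distances to all equal $2$, which is what forces the baseline Subset Diameter down to $2$; without the dummy coordinates, two $U$-vectors might share no original coordinate and lie at distance $\geq 4$, collapsing the gap. Everything else in the reduction is bookkeeping about edge counts and the SETH-based OV lower bound. The Subset Eccentricities lower bound follows immediately: any $(2-\delta)$-approximation of all Eccentricities yields a $(2-\delta)$-approximation of $D_S$ by taking the maximum of the estimates, so the same conditional lower bound applies.
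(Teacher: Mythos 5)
Your proof is correct and follows essentially the same approach as the paper's. Your two dummy coordinates $c^*$ and $c^{**}$ are, functionally, exactly the two hub vertices the paper adds (one adjacent to all of $U$, the other to all of $V$), so the two constructions coincide vertex-for-vertex; the only difference is presentational.
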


\begin{proof}
Given an instance $U,V\in \{0,1\}^d$ of OV, we begin with the OV-graph defined on this instance. We add a vertex $u$ adjacent to every vertex in $U$ and a vertex $v$ adjacent to every vertex in $V$. Let $S=U\cup V$. 

If there is no OV solution, every pair of vertices $s\in U$, $s'\in V$ $d(s,s')=2$. Also, every pair of vertices $s,s'\in U$ or $s,s'\in V$ has $d(s,s')=2$ due to the addition of the vertices $u$ and $v$.

On the other hand, if there is an OV solution, in the original OV-graph there exists $s\in U$, $s'\in V$ such that $d(s,s')=4$. We note that the addition of the vertices $u$ and $v$ does not change this fact. 
\end{proof}

% \begin{corollary}
% Under SETH, any algorithm for Subset Eccentricities that achieves a $(2-\delta)$-approximation factor for $\delta>0$ in $m$-edge directed graphs requires $m^{2-o(1)}$ time.
% \end{corollary}

\paragraph*{Subset Radius}

The following proposition implies that our $\tilde{O}(m)$ time 2-approximation algorithm for Subset Radius from Proposition~\ref{prop:subrad} is tight under the HS hypothesis.

\begin{proposition}\label{prop:subradlb}
Under the HS hypothesis, any algorithm for Subset Radius that achieves a $(2-\delta)$-approximation factor for $\delta>0$ in $m$-edge undirected graphs requires $m^{2-o(1)}$ time.
\end{proposition}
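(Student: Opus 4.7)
My plan is to reduce HS to Subset Radius via a $2$-versus-$\geq 4$ gap construction on undirected unweighted graphs, in the spirit of the $ST$-Radius reduction in Proposition~\ref{2LBradius} but with extra gadgets that force any cheap Subset center to come from the ``$U$-side'' and hence encode HS. Given an HS instance $U,V\subseteq\{0,1\}^d$ with $|U|=|V|=n$, I start from the OV-graph on $U\cup V\cup C$ and add three objects: a hub $x$ adjacent to every vertex of $U$, a hub $y$ adjacent to every vertex of $V$, and a ``null'' vertex $u_0$ whose only edge is $(u_0,x)$, to be thought of as the all-zero $U$-vector (adding $u_0$ does not alter the HS answer, since $u_0\cdot v=0$ for every $v$). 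The candidate set is $S=U\cup\{u_0\}\cup V$, so that $x$ and $y$ are not eligible centers.

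The YES case is immediate: if $u^*\in U$ is a HS witness, then through $x$ one has $d(u^*,u)\leq 2$ for every $u\in U\cup\{u_0\}$, and through the coordinate guaranteed by $u^*\cdot v\neq 0$ one has $d(u^*,v)=2$ for every $v\in V$. Hence $\eps_S(u^*)=2$ and $R_S\leq 2$.

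The NO case requires checking that \emph{every} candidate center has eccentricity at least $4$. For $u\in U\setminus\{u_0\}$, pick $v\in V$ with $u\cdot v=0$; in the OV-graph $d(u,v)\geq 4$, and the new hubs cannot create a shortcut because $x$ is incident only to $U\cup\{u_0\}$ and $y$ only to $V$, so any $u\to v$ walk through a hub still has length $\geq 4$. For $v\in V$, the vertex $u_0$ witnesses $\eps_S(v)\geq d(v,u_0)\geq 4$, since $u_0$'s unique neighbor is $x$ and every $v\to u_0$ path has the form $v\to c\to u\to x\to u_0$. The identical argument handles $u_0$ as a potential center: $d(u_0,v)\geq 4$ for every $v$. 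Hence $R_S\geq 4$.

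Since the graph has $m=O(nd)=\tilde O(n)$ edges, a $(2-\delta)$-approximation for Subset Radius running in $O(m^{2-\eps})$ time would distinguish $R_S=2$ from $R_S\geq 4$ and thereby solve HS in $n^{2-\Omega(1)}$ time, contradicting the HS hypothesis. The step I expect to require the most care is the case analysis ruling out hub-based shortcuts in the NO direction; this reduces to the clean bipartite observation that ``$\{U,u_0,x\}$ and $\{V,y\}$ meet only through $C$,'' after which every alternative path through a hub unavoidably has length at least $4$ and the gap claim goes through.
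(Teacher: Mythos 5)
Your construction is correct and is essentially the same as the paper's: the OV-graph, augmented with a hub on the $U$ side (your $x$, the paper's $u$) so that a hitting-set witness has $S$-eccentricity $2$, and a pendant vertex attached to that hub (your $u_0$, the paper's $v$) placed into $S$ so that in the NO case every $V$-vertex, and the pendant itself, has $S$-eccentricity at least $4$. The only difference is your additional hub $y$ adjacent to $V$, which is harmless but unnecessary for both directions of the argument.
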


\begin{proof}
Given an instance $U,V\in \{0,1\}^d$ of HS, we begin with the OV-graph $U\cup C\cup V$ defined on this instance. Then we add a vertex $u$ adjacent to every vertex in $U$ and a vertex $v$ adjacent to $u$. Let $S=U\cup V\cup \{v\}$.

If there is no HS solution, then in the original OV-graph, for all $s\in U$, there exists some $s'\in V$ such that $d(s,s')\geq 4$. We note that the addition of the vertices $u$ and $v$ does not change this fact. Furthermore, for all vertices $s\in V$, $d(v,s)=4$. Thus, the Subset Radius is at least 4. 

On the other hand, if there is a HS solution, then there exists a vertex $s\in U$ such that for all vertices $s'\in V$, $d(s,s')=2$. Also, $d(s,v)=2$. Thus, the Subset Radius is 2.
\end{proof}

\subsection{Parameterized Bichromatic Diameter, Eccentricities, and Radius} 
In this section we show that modifications of our lower bound constructions show that our algorithms parameterized by the boundary size $|B|$ for Bichromatic Diameter, Eccentricities, and Radius are conditionally tight. Recall that for undirected graphs, $S'$ is the set of vertices in $S$ that have a neighbor in $T$, $T'$ is the set of vertices in $T$ that have a neighbor in $S$, and $B$ is whichever of $S'$ or $T'$ is smaller in size. Since these our parameterized algorithms for undirected graphs have additive error, instead of showing that e.g. distinguishing between values 2 and 3 is hard, we will give results of the form ``for all $\ell$, distinguishing between e.g. $2\ell$ and $3\ell$ is hard". This proves that even algorithms with constant additive error cannot achieve a better multiplicative approximation factor than e.g. $3/2$. 

\paragraph*{Undirected Parameterized Bichromatic Diameter}

The following theorem implies that the multiplicative factor in our $\tilde{O}(m|B|)$ time almost $3/2$-approximation algorithm for undirected Bichromatic Diameter from Theorem~\ref{thm:paramdiam} is tight under SETH for $|B|=\omega(\log n)$.
%in the sense that we have a $\tilde{O}(m|B|)$ time algorithm that gives a multiplicative $3/2$-approximation with constant additive error, and the lower bound construction shows that any algorithm that achieves a better multiplicative approximation factor on graphs with $|B|=\Omega(\log n)$ must run in $\Omega(n^2-o(1))$ time.

\begin{theorem}\label{thm:paramdiamlb}
For any integer $\ell>0$, under SETH any algorithm for Bichromatic Diameter in undirected unweighted graphs that distinguishes between Bichromatic Diameter $4\ell$ and $6\ell$
 requires $m^{2-o(1)}$ time, even for graphs with $|B|=d=\tilde{O}(1)$. 
\end{theorem}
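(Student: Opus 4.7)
The plan is to reduce from OV: given a standard OV instance $U, V \subseteq \{0,1\}^d$ of size $n$ each with $d = \omega(\log n)$ and $d = \tilde{O}(1)$ (e.g., $d = \Theta(\log^2 n)$), I will construct an unweighted undirected graph $G$ whose Bichromatic Diameter is $\leq 4\ell$ if the OV instance has no orthogonal pair and $\geq 6\ell$ if it does, with $|B| = d$ and $m = \tilde{O}(n)$. The standard $n^{2-o(1)}$ SETH-hardness of OV then rules out any distinguishing algorithm running in $m^{2-o(1)}$ time, even in the small-boundary regime.

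The construction starts from the OV-graph on $U \cup C \cup V$ and performs two operations: (i) subdivide each OV-edge $(u,c)$ with $u[c]=1$ (and similarly each $(v,c)$ with $v[c]=1$) into a path of length $\ell$, by inserting $\ell-1$ internal subdivision vertices $x^{uc}_1, \ldots, x^{uc}_{\ell-1}$ (resp.~$y^{vc}_1,\ldots,y^{vc}_{\ell-1}$); and (ii) attach to each $u \in U$ a pendant extension $u = u^{(0)} - u^{(1)} - \cdots - u^{(\ell)}$ of length $\ell$, and similarly one to each $v \in V$. I put into $S$ all $U$-side vertices (the $u$'s, the $u$-extensions, and the $U$-to-$C$ subdivision vertices), and into $T$ everything else (the $V$-side analogue together with $C$). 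Then the only $S$-$T$ edges are those entering coordinates from $U$-side subdivisions, so $T' = C$, giving $|B| = d$. A straightforward edge count yields $m = O(nd\ell) = \tilde{O}(n)$ when $d,\ell = \tilde{O}(1)$.

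For the no-OV case, I will verify $d(s,t) \leq 4\ell$ for every $s \in S, t \in T$ by case analysis on which piece $s$ and $t$ live in. The key is that every pair $(u,v) \in U \times V$ shares a coordinate in the no-OV case, giving a length-$2\ell$ $u$-to-$v$ path via that coordinate; combining with extension/subdivision contributions of at most $\ell$ on each side yields $\leq 4\ell$ in every subcase. The subtle case is two subdivision endpoints $s = x^{uc}_j, t = y^{v'c'}_{j'}$ with $u[c'] = 0$ and $v'[c] = 0$, so direct routes via $c$ or $c'$ are unavailable; here I route via the shared coordinate $c''$ of $u$ and $v'$, giving $d(s,t) \leq j + \ell + \ell + j' \leq 4\ell - 2$ using that $j, j' \leq \ell-1$.

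For the OV case, let $(u^*, v^*)$ be an orthogonal pair and take $s = u^{*(\ell)}, t = v^{*(\ell)}$. Since the extensions are pendant paths, every $s$-$t$ path must traverse the $u^*$-extension to reach $u^*$ (contributing $\ell$ edges), then travel from $u^*$ to $v^*$ in the subdivided OV-graph, and finally traverse the $v^*$-extension (another $\ell$ edges). Because $u^* \cdot v^* = 0$, the subdivided OV-graph has no common-coordinate path of length $2\ell$ between them, so every $u^*$-to-$v^*$ path uses at least four subdivided OV-edges and therefore has length $\geq 4\ell$. Hence $d(s,t) \geq 2\ell + 4\ell = 6\ell$, so the Bichromatic Diameter is $\geq 6\ell$. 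The main obstacle is the case analysis in the no-OV direction; carefully verifying the subdivision-to-subdivision pair does not exceed $4\ell$ is where the no-OV assumption must be used in the form of a shared coordinate of $u$ and $v'$, and this is the step most in need of attention.
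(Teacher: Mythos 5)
Your construction is identical to the paper's (the pendant length-$\ell$ extensions are exactly the matched vertex sets $U'$, $V'$ after subdividing each edge into a path of length $\ell$, with the same $S/T$ split and $T'=C$), and your OV-case lower bound of $6\ell$ via $d(u^*,v^*)\geq 4\ell$ plus the two forced extension traversals matches the paper exactly. The only difference is stylistic: for the no-OV direction the paper avoids your multi-way case analysis by observing that every $s\in S$ lies within $\ell$ of some $u\in U$, every $t\in T$ lies within $\ell$ of some $v\in V$ (using the standard WLOG that every coordinate is lit in some $u$ and some $v$), and $d(u,v)=2\ell$ for all $u\in U, v\in V$, so $d(s,t)\leq\ell+2\ell+\ell=4\ell$ in one step.
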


\begin{proof}$ $
\paragraph*{Construction} Given an instance $U,V\in \{0,1\}^d$ of OV, we begin with the OV-graph $U$, $C$, $V$ defined on this instance. We add a new set $U'$ of $n$ vertices, one vertex for each vector in $U$, and connect each vertex in $U$ to its corresponding vertex in $U'$ to form a matching. Symmetrically, we add a new set $V'$ of $n$ vertices, one vertex for each vector in $V$, and connect each vertex in $V$ to its corresponding vertex in $V'$ to form a matching. Then we subdivide each of the edges in the graph into a path of length $\ell$. Let $T$ contain $C\cup V\cup V'$ as well as the vertices on the subdivision paths from $C$ to $V$ and from $V$ to $V'$. Let $S$ be the remaining vertices, that is, $S$ contains $U$, $U'$, the vertices that subdivide the edges between $U$ and $U'$, and the vertices that subdivide the edges between $U$ and $C$.
\paragraph*{Analysis} We note that $T'=C$ and $|C|=d$ so $|B|=d=\tilde{O}(1)$. 

If the OV instance has no solution then for every pair of vertices $u\in U$, $v\in V$, $d(u,v)=2\ell$. Every vertex in $S$ is at most distance $\ell$ from some vertex in $U$ and every vertex in $T$ is at most distance $\ell$ from some vertex in $V$ so the Bichromatic Diameter is at most $4\ell$.

Suppose the OV instance has a solution $u\in U$, $v\in V$. We know that $d(u,v)\geq 4\ell$. Let $u'$ be the vertex in $U'$ that is matched to $u$ and let $v'$ be the vertex in $V'$ that is matched to $v$. We claim that $d(u',v')\geq 6\ell$. Since $U,U'$ and $V,V'$ form matchings the only paths between $u'$ and $v'$ contain $u$ and $v$. Thus, $d(u',v')=d(u',u)+d(u,v)+d(v,v')\geq 6\ell$.
\end{proof}

\paragraph*{Undirected Parameterized Bichromatic Eccentricities}

The following proposition implies that the multiplicative factor in our $\tilde{O}(m|B|)$ time almost $5/3$-approximation algorithm for undirected Bichromatic Eccentricities from Theorem~\ref{thm:paramecc} is tight under SETH for $|B|=\omega(\log n)$.

\begin{proposition}\label{prop:param_ecc}
For any integer $\ell>0$, under SETH any algorithm for Bichromatic Eccentricities in undirected unweighted graphs that distinguishes for all vertices $v$ between $\eps_{ST}(v)=3\ell$ and $\eps_{ST}(v)=5\ell$
 requires $m^{2-o(1)}$ time, even for graphs with $|B|=d=\tilde{O}(1)$. 
\end{proposition}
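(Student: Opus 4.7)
The plan is to adapt the parameterized Bichromatic Diameter construction from Theorem~\ref{thm:paramdiamlb}, but drop the auxiliary layer $U'$ so that the distance from an ordinary $u \in U$ to a vertex of the $V'$-layer already encodes the OV answer. Concretely, I would start from the standard $2$-OV graph with vertex parts $U$, $C$, $V$ (with $(u,c)$ an edge iff $u[c]=1$ and $(c,v)$ an edge iff $v[c]=1$), add a new set $V'$ of one copy $v'$ per $v \in V$ together with the matching edges $(v,v')$, and then subdivide every edge of the resulting graph into a path of length $\ell$. I let $S = U \cup \{$subdivision vertices on the $U$-$C$ paths$\}$, and put all remaining vertices in $T$. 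Exactly as in Theorem~\ref{thm:paramdiamlb}, the only $T$-vertices adjacent to an $S$-vertex are the coordinate vertices in $C$, so $T' = C$ and therefore $|B| \leq |T'| = d = \tilde O(1)$.

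In the ``no-OV'' direction, I would show $\eps_{ST}(u) \leq 3\ell$ for every $u \in U$. The three main target types are: (i) for $v \in V$, the non-orthogonality $u \cdot v \neq 0$ provides a coordinate $c$ with $u[c] = v[c] = 1$, so $d(u,v) \leq 2\ell$; (ii) for $v' \in V'$, since $v'$ has degree one and is attached only through the length-$\ell$ path to its matched $v$, one gets $d(u,v') = d(u,v) + \ell \leq 3\ell$; (iii) for $c \in C$, the case $u[c] = 1$ is immediate, and for $u[c] = 0$ one uses the standard WLOG assumption that some $v \in V$ has $v[c] = 1$, which is non-orthogonal to $u$ and therefore supplies a coordinate $c'$ with $u[c'] = v[c'] = 1$, yielding a path $u \to c' \to v \to c$ of length $3\ell$. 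Distances from $u$ to subdivision vertices in $T$ then follow by adding at most $\ell$ to the distance to the appropriate endpoint, which still comes out $\leq 3\ell$. For the ``OV'' direction, given an OV pair $(u^*, v^*)$, orthogonality forbids any length-$2\ell$ path $u^* \to c \to v^*$, so $d(u^*, v^*) \geq 4\ell$; and since $v^{*\prime}$ is again only reachable through $v^*$, we obtain $d(u^*, v^{*\prime}) = d(u^*, v^*) + \ell \geq 5\ell$, so $\eps_{ST}(u^*) \geq 5\ell$.

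Putting the two halves together, any algorithm that on every vertex $v \in S$ distinguishes $\eps_{ST}(v) = 3\ell$ from $\eps_{ST}(v) = 5\ell$ (equivalently, any better-than-$5/3$ approximation) lets one decide OV by thresholding $\max_{u \in U} \eps'(u)$ at $3\ell$: in the no-OV case every $\eps'(u) \leq \eps_{ST}(u) \leq 3\ell$, whereas in the OV case $\eps'(u^*) > 3\ell$. Since the constructed graph has $m = \tilde O(n)$ edges with $|B| = d = \tilde O(1)$, the $n^{2-o(1)}$ SETH lower bound for $2$-OV yields the desired $m^{2-o(1)}$ bound. The main technical obstacle I expect is the careful case analysis for the no-OV upper bound: the arguments for $d(u,c) \leq 3\ell$ when $u[c] = 0$ and for $d(u,t) \leq 3\ell$ for subdivision vertices $t$ on the $C$-$V$ and $V$-$V'$ paths are precisely where one genuinely uses both the no-OV hypothesis and the WLOG coordinate assumption, and it is worth checking that together they cover every $T$-vertex without slack (otherwise the target ratio $5/3$ degrades).
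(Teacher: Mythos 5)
Your construction is essentially the same as the paper's (OV-graph plus a pendant layer $V'$, then subdivide every edge into a length-$\ell$ path, so that the no-OV eccentricity of $u\in U$ is $3\ell$ while an OV pair forces some $\eps_{ST}(u^*)\geq 5\ell$), and the reduction to a threshold test on $\max_{u\in U}\eps'(u)$ is the same as what the paper intends. The one place you diverge is the $S/T$ split: you place $C$ in $T$, whereas the paper places $C$ (and the $U$--$C$ subdivision vertices) in $S$. The paper's choice makes the no-OV bound one-line, since every $T$-vertex is within $\ell$ of some $v\in V$ and $d(u,v)=2\ell$; your choice forces the extra case $d(u,c)\leq 3\ell$ when $u[c]=0$, which you correctly handle via the WLOG assumption that every coordinate is set in some $v\in V$ together with the no-OV hypothesis. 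Both versions give the same boundary ($B=C$, $|B|=d$) and the same $3\ell$-vs-$5\ell$ gap, so this is a cosmetic rather than substantive difference; the paper's split is just slightly cleaner because it eliminates the case analysis you flagged as the main technical obstacle.
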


\begin{proof}$ $
\paragraph*{Construction} Given an instance $U,V\in \{0,1\}^d$ of OV, we begin with the OV-graph $U$, $C$, $V$ defined on this instance. We add a new set $V'$ of $n$ vertices, one vertex for each vector in $V$, and connect each vertex in $V$ to its corresponding vertex in $V'$ to form a matching. Then we subdivide each of the edges in the graph into a path of length $\ell$. Let $S$ contain $U$, $C$, and the vertices that subdivide the edges between $U$ and $C$. Let $T$ contain the remaining vertices. 

\paragraph*{Analysis} We note that $S'=C$ and $|C|=d$ so $|B|=d=\tilde{O}(1)$. 

If there is no OV solution, then for all pairs of vertices $u\in U$, $v\in V$, $d(u,v)=2\ell$. Every vertex in $T$ is of distance at most $\ell$ from some vertex in $T$ so for all vertices $u\in U$, $\eps_{ST}(u)\leq 3\ell$.

If there is an OV solution $u\in U$, $v\in V$, $d(u,v)\geq 4\ell$. Let $v'\in V'$ be the vertex matching to $v$. Then, $d(u,v')\geq 5\ell$ so $\eps_{ST}(v)\geq 5\ell$.
\end{proof}

\paragraph*{Undirected Parameterized Bichromatic Radius}

The following theorem implies that the multiplicative factor in our $\tilde{O}(m|B|)$ time almost $3/2$-approximation algorithm for undirected Bichromatic Radius from Theorem~\ref{thm:paramrad} is tight under the HS hypothesis for $|B|=\omega(\log n)$.

\begin{theorem}\label{thm:paramradlb}
For any integer $\ell>0$, under the HS hypothesis any algorithm for Bichromatic Radius in undirected unweighted graphs that distinguishes between Bichromatic Radius $4\ell$ and $6\ell$
 requires $m^{2-o(1)}$ time, even for graphs with $|B|=d=\tilde{O}(1)$. 
\end{theorem}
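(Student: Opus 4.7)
I plan to adapt the non-parameterized HS-based Bichromatic Radius lower bound of Theorem~\ref{biradlb} by subdividing its edges with carefully chosen \emph{non-uniform} lengths, in the spirit of the parameterized Bichromatic Diameter lower bound (Theorem~\ref{thm:paramdiamlb}). Starting from the OV-graph of an HS instance $U, V \subseteq \{0,1\}^d$ on vertex set $U \cup C \cup V$, I first attach two matching layers $V_1$ and $V_2$ (copies of $V$, with extra edges $v$-$v_1$ and $v_1$-$v_2$ for every $v \in V$). Then I subdivide every edge: the $U$-$C$ edges into paths of length $a = 2\ell/3$, the $C$-$V$ edges into paths of length $b = 4\ell/3$, and both matching edges into paths of length $\ell$ (assuming $\ell$ is a multiple of $3$; general $\ell$ is handled by rounding, incurring only $O(1)$ additive slack in the distances). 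Finally, set $S$ to be $U$ together with all subdivision vertices on the $U$-$C$ paths, and let $T$ contain everything else. The only $S$-$T$ edges are those between the last subdivision vertex of each $U$-$C$ path and its endpoint in $C$, so $|T'| \leq |C| = d = \tilde O(1)$, as required.

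For the HS case I will argue that any hitting set solution $u^* \in U$ satisfies $\eps_{ST}(u^*) \leq 4\ell$. For every $v \in V$ there is some $c \in C$ with $u^*[c] = v[c] = 1$, which gives $d(u^*, v) \leq a + b = 2\ell$ and hence $d(u^*, v_2) \leq 4\ell$. The preamble assumption (every coordinate has some $v$ with $v[c]=1$) together with $u^*$ being non-orthogonal to every $v$ yields $d(u^*, c) \leq a + 2b = 10\ell/3$ for every $c$. Standard meeting-point arguments then bound $d(u^*, z) \leq 4\ell$ for every subdivision vertex $z \in T$, so the radius is at most $4\ell$.

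For the no-HS case I need every $S$-vertex to have eccentricity at least $6\ell$. For each $u \in U$, pick $v^* \in V$ orthogonal to $u$; any $u$-to-$v^*$ path must traverse at least three $C$-$V$ arms (a two-arm shortcut would require a common coordinate), so $d(u, v^*) \geq a + 3b$ and thus $d(u, v^*_2) \geq a + 3b + 2\ell = 20\ell/3 > 6\ell$. The delicate part is a subdivision vertex $y \in S$ at distance $i$ from $u$ on a $u$-$c$ path (with $1 \leq i \leq a-1$): since $u[c] = 1$ forces $v^*[c] = 0$, we have $d(c, v^*) \geq 3b$ via an intermediate $v, c'$ and hence $d(c, v^*_2) \geq 3b + 2\ell = 6\ell$. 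Because every $y$-to-$v^*_2$ path must exit through $u$ or $c$,
\[
d(y, v^*_2) \geq \min\bigl(i + 20\ell/3,\ (a-i) + 6\ell\bigr) = 20\ell/3 - i \geq 6\ell + 1.
\]
So $\eps_{ST}(y) \geq 6\ell$ for every such $y$, and the radius is at least $6\ell$.

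The main obstacle is exactly this last step: a naive uniform subdivision of length $\ell$ everywhere would let $y$ reach $v^*_2$ in only $6\ell - i$ steps through $c$, collapsing the gap for $i \geq 1$. The non-uniform choice $b > \ell$ stretches the $c$-to-$v^*$ detour enough to restore the $6\ell$ bound, while $a < \ell$ keeps the HS-case radius at exactly $4\ell$. Since the construction has $\tilde O(nd)$ edges and HS requires $n^{2-o(1)}$ time under the HS hypothesis, distinguishing Bichromatic Radius $4\ell$ from $6\ell$ requires $m^{2-o(1)}$ time, even when $|B| = d = \tilde O(1)$.
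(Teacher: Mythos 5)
Your HS-case analysis is fine, but the no-HS case has a genuine gap: the claimed bounds $d(u,v^*)\geq a+3b$ and $d(c,v^*)\geq 3b$ are false, because you only exclude shortcuts through a second $V$-vertex and ignore shortcuts through a second $U$-vertex. A path $u \to c \to u' \to c' \to v^*$ with $u'[c]=u'[c']=1$ and $v^*[c']=1$ is perfectly consistent with ``no HS solution'' (that hypothesis only says every $u'$ has \emph{some} orthogonal partner, not that $u'$ avoids the coordinates of $v^*$), and it has length $3a+b=10\ell/3$, well below $a+3b=14\ell/3$; likewise $c\to u'\to c'\to v^*$ has length $2a+b=8\ell/3<3b$. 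Hence $d(u,v^*_2)$ can be as small as $10\ell/3+2\ell=16\ell/3<6\ell$. This is not merely a gap in the write-up but a failure of the construction: for $d=4$, take $u=(1,1,0,0)$, $u'=(1,0,1,0)$, $v^*=(0,0,1,1)$, $v'=(0,1,0,1)$; there is no hitting-set solution, yet with your arm lengths every vertex of $T$ is within $16\ell/3$ of $u$ (the farthest is $v^*_2$), so the Bichromatic Radius is strictly below $6\ell$ and the reduction does not distinguish $4\ell$ from $6\ell$. Re-tuning the arm lengths does not appear to help: the yes-instances force $a+2b\leq 4\ell$ and $a+b+p+q\leq 4\ell$ (where $p,q$ are the tail lengths), while making the $S$-vertices adjacent to $C$ far in no-instances forces $a\geq 2\ell-O(1)$ and hence $b\leq \ell+O(1)$, leaving essentially no slack for the $6\ell$ bound at $u$ itself; the cheap $2a$-detours through other $U$-columns are exactly what caps any single-sided subdivision of the Theorem~\ref{biradlb} gadget below the $3/2$ gap.

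The paper circumvents this by changing the topology rather than the arm lengths: it takes two copies of the uniformly (length-$\ell$) subdivided parameterized Diameter construction $U'$--$U$--$C$--$V$--$V'$ from Theorem~\ref{thm:paramdiamlb} and glues them along the $U'$ layer. In the HS case the merged $U'$-vertex corresponding to the hitting vector is within $3\ell$ of both $V$-layers and hence within $4\ell$ of all of $T$; in the no-HS case any candidate center lies (up to symmetry) on one side, and every route to the far layer $V'$ of the other copy must first descend a length-$\ell$ matching arm into that copy's $U$-layer and then traverse its OV gadget, where either orthogonality costs at least $4\ell$ or switching to a different $U$-column costs the corresponding detour, giving at least $6\ell$. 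If you want to rescue your single-gadget approach, you would need some mechanism of this mirrored kind that makes detours through other $U$-columns expensive; non-uniform subdivision alone does not achieve it.
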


\begin{proof}$ $
\paragraph*{Construction} Given an instance $U,V\in \{0,1\}^d$ of HS, we begin with two copies of the construction from Theorem~\ref{thm:paramdiamlb}, $U'_1$, $U_1$, $C_1$, $V_1$, $V'_1$, and $U'_2$, $U_2$, $C_2$, $V_2$, $V'_2$. We then merge each vertex in $U'_1$ with its corresponding vertex in $U'_2$. 

\paragraph*{Analysis} We note that $T'=C_1\cup C_2$ and $|C_1|=|C_2|=d$ so $|B|=2d=\tilde{O}(1)$. 

It will be convenient to imagine that the graph is layered from left to right as $V'_2$, $V_2$, $C_2$, $U_2$, $U'_1$, $U_1$, $C_1$, $V_1$, $V'_1$.

If there is no HS solution, then for all $u_1\in U_1$, there exists some $v_1\in V_1$ such that $d(u_1,v_1)\geq 4\ell$ and for all $u_2\in U_2$, there exists some $v_2\in V_2$ such that $d(u_2,v_2)\geq 4\ell$. Let $u$ be any vertex in $S$ that lies in $U'_1$ or to the right of $U'_1$. Since any path $u$ to a vertex in $V'_2$ contains a vertex in $U_2$, there exists $v\in V'_2$ such that $d(u,v)\geq 6\ell$. Symmetrically, if $u$ is a vertex in $S$ that lies to the left of $U'_1$, there exists $v\in V'_1$ such that $d(u,v)\geq 6\ell$. Thus, the Bichromatic Radius is at least $6\ell$.

On the other hand, if there is a HS solution, then there exists a vertex $u\in U_1$ such that for all vertices $v\in V_1$, $d(u,v)=2\ell$. Let $u'$ be the vertex in $U'_1$ matched to $u$ and let $u''$ be the vertex in $U_2$ matched to $u'$. Then, for all vertices $v\in V_2$, $d(u'',v)=2\ell$. Thus, for all vertices $v\in V_1\cup V_2$, $d(u',v)=3\ell$, so for all vertices $v\in T$, $d(u',v)\leq 4\ell$. Thus, the Bichromatic Radius is at most $4\ell$.
\end{proof}

\paragraph*{Directed Parameterized Bichromatic Diameter}

Recall that for directed graphs, $S'$ is the set of vertices in $S$ with an outgoing edge to a vertex in $T$, $T'$ is the set of vertices in $T$ with an incoming edge from a vertex in $S$, and $B'=S'\cup T'$. We will show that the construction from Theorem~\ref{thm:paramdiamlb} can be made to have small $B'$ (i.e. small $S'$ and $T'$), with a slight additive cost to the Diameter values. The construction will remain undirected.

The following proposition implies that the multiplicative factor in our $\tilde{O}(m|B'|)$ time almost $3/2$-approximation algorithm for Directed Bichromatic Diameter from Theorem~\ref{thm:paramdirdiam} is tight under SETH for $|B'|=\omega(\log n)$.

\begin{proposition}\label{prop:paramdirdiamlb}
For any integer $\ell>0$, under SETH any algorithm for Bichromatic Diameter in directed unweighted graphs that distinguishes between Bichromatic Diameter $4\ell+1$ and $6\ell+1$
 requires $m^{2-o(1)}$ time, even for graphs with $|B|=d=\tilde{O}(1)$. 
\end{proposition}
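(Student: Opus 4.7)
The plan is to modify the construction from Theorem~\ref{thm:paramdiamlb} so that \emph{both} $|S'|$ and $|T'|$ are small, forcing $|B'|=|S'\cup T'|=\tilde{O}(1)$ in the directed sense (recalling that an undirected graph is interpreted as directed by bidirecting every edge, and the proposition allows undirected as a special case of directed). In the construction of Theorem~\ref{thm:paramdiamlb} we already have $|T'|=|C|=d$, but $|S'|=\Theta(nd)$ because each $c\in C$ is adjacent to many distinct $S$-side subdivision vertices. I will introduce for each $c\in C$ a single ``gateway'' vertex $c^S\in S$, make $c^S$ adjacent to $c$, and redirect every subdivision vertex previously adjacent to $c$ on a $u$-$c$ subdivided path to be adjacent to $c^S$ instead of $c$. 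After this change, the only $S$-vertices with a $T$-neighbor are the gateways, giving $|S'|=d$ and $|T'|=d$, hence $|B'|\le 2d=\tilde{O}(1)$. The number of edges remains $O(nd\ell)=\tilde{O}(n)$.

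The modification lengthens every $u$-to-$c$ path by exactly one edge, so $d(u,c)=\ell+1$ in the modified graph. In the no-OV case, every pair $u\in U$, $v\in V$ satisfies $u\cdot v\neq 0$, giving $d(u,v)=(\ell+1)+\ell=2\ell+1$, and the largest $S$-$T$ distance is between matched endpoints $u'\in U'$ and $v'\in V'$, namely $d(u',v')=\ell+(2\ell+1)+\ell=4\ell+1$. All other $S$-$T$ pairs (involving subdivision vertices, $c$, $c^S$, etc.) can be routinely checked to be within this bound by triangle-inequality arguments using the symmetry of the matching subdivisions.

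In the OV case with solution $(u,v)$, the subdivided matchings $U$-$U'$ and $V$-$V'$ force every $u'$-$v'$ path to pass through $u$ and $v$, so $d(u',v')=2\ell+d(u,v)$. Thus it suffices to show $d(u,v)\ge 4\ell+1$ when $u\cdot v=0$. Since the underlying OV graph is bipartite with parts $U\cup V$ and $C$, any $u$-$v$ walk has even length; the length-$2$ walk is unavailable because $u\cdot v=0$, so every such walk has length at least $4$. A length-$4$ walk has the form $u$-$c_1$-$x$-$c_2$-$v$: if $x\in V$, it uses one $U$-$C$ edge and three $C$-$V$ edges, costing $(\ell+1)+3\ell=4\ell+1$; if $x\in U$, it uses three $U$-$C$ edges and one $C$-$V$ edge, costing $3(\ell+1)+\ell=4\ell+3$. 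Longer walks only add edges, so $d(u,v)\ge 4\ell+1$ and therefore $d(u',v')\ge 6\ell+1$. Combining with the $n^{2-o(1)}$ SETH lower bound for OV yields the claimed $m^{2-o(1)}$ bound.

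The main obstacle is confirming that adding the gateway vertices $c^S$ does not create a shortcut reducing $d(u,v)$ below $4\ell+1$ in the OV case. The key observation is that although each $c^S$ does allow shortcuts among $S$-vertices (for instance, it reduces certain $U$-$U$ distances from $2\ell$ to $2\ell-2$), these shortcuts lie entirely within $S$ and hence do not affect any $S$-$T$ distance. Every $u$-to-$v$ walk still has to alternate through $C$ the same number of times as in the unmodified subdivided graph, and each $U$-side crossing now pays the extra cost of $+1$ from traversing $c^S$, which is exactly what produces the additive $+1$ in both the $4\ell+1$ and $6\ell+1$ thresholds.
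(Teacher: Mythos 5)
Your construction is identical to the paper's (your $c^S$ and $c$ play exactly the roles of the paper's $c_1\in C_1$ and $c_2\in C_2$, connected by a matching edge, with $c^S$ absorbing the $U$-side subdivision paths and $c$ keeping the $V$-side ones), and your correctness analysis follows the same reasoning, so the proposal is correct and takes essentially the paper's approach. Two minor slips that do not affect the conclusion: the cost of the length-$4$ abstract walk $u$-$c_1$-$x$-$c_2$-$v$ with $x\in U$ is actually $4\ell+1$, not $4\ell+3$ (the $+1$ is charged only at the single $U$-side-to-$V$-side crossing through a $c^S$-$c$ edge, not on every $U$-$C$ hop, since $U$-$C^S$-$U$ transitions avoid $c$ entirely); and the parenthetical claim that gateways reduce certain $U$-$U$ distances from $2\ell$ to $2\ell-2$ is not right — those distances remain $2\ell$ via $c^S$, though this observation plays no role in the argument.
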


\begin{proof}$ $
\paragraph*{Construction} We begin with the construction from Theorem~\ref{thm:paramdiamlb}. We replace each vertex $c\in C$ by a pair of vertices $c_1$, $c_2$ and let $(c_1,c_2)$ be an edge. Let $C_1$ and $C_2$ be the set of all $c_1$'s and $c_2$'s respectively. That is, $C_1$ and $C_2$ form a matching. For every edge originally between $u\in U$ and $c\in C$, we replace it with the undirected edge $(u,c_1)$ and for every edge originally between $c\in C$ and $v\in V$, we replace it with the undirected edge $(c_2,v)$. 

\paragraph*{Analysis} The correctness follows from the analysis of Theorem~\ref{thm:paramdiamlb}. Here, we get $4\ell+1$ and $6\ell+1$ instead of $4\ell$ and $6\ell$ due to the addition of the matching between $C_1$ and $C_2$.
\end{proof}

%\paragraph*{Directed Parameterized Bichromatic Eccentricities and Radius}

%Note: I just realized it could be possible to get finite approxes for parameterized bichrom ecc and radius bc the LB construction has large boundary.

%Given a $k$-OV instance consisting of sets $W_0,W_1,\dots,W_{k-1}$, we begin with the $k$-OV-graph $L_0\cup\dots \cup L_k$. Then, we subdivide each edge into a path of length $\ell$. Additionally, we add $k-2$ new layers of vertices $L_{k+1},\dots,L_{2k-2}$, where each new layer contains $n^{k-1}$ vertices and is connected to the previous layer by a matching. That is, each new layer contains one vertex for every tuple $(a_1,\dots,a_{k-1})$ where $a_i\in W_i$ for all $i$, and each $(a_1,\dots,a_{k-1})\in L_j$ is connected to its counterpart $(a_1,\dots,a_{k-1})\in L_{j-1}$ by an edge, for all $j$. We also add a new layer of vertices $L_{-1}$ connected to $L_0$ by a matching. Lastly, we replace each edge in the graph by a path of length $\ell$. Let $S$ contain $L_{-1}$, $L_0$, and all of the vertices introduced by subdivision of edges incident to $L_0$. $T$ contains the remaining vertices in the graph: $L_1,\dots,L_{2k-2}$ and all of the vertices introduced by subdivision of edges between these layers. 

%trade-off LB for ST diameter implies 3-approx is tight for ST eccentricities Mina -- this is just a comment though

\paragraph{Acknowledgements} The authors would like to thank Arturs Backurs for discussions during the early stages of this work.

\bibliographystyle{plain}
\bibliography{references}
\end{document}